\documentclass[aps,pra,10pt,twocolumn,superscriptaddress,showkeys,letterpaper]{revtex4-2}

\pdfoutput=1

\usepackage[T1]{fontenc}
\usepackage[utf8]{inputenc}
\usepackage{color}
\usepackage{babel}
\usepackage{mathtools}
\usepackage{amsmath}
\usepackage{amsthm}
\usepackage{amssymb}
\usepackage{graphicx}
\usepackage[pdfusetitle,
 bookmarks=true,bookmarksnumbered=false,bookmarksopen=false,
 breaklinks=false,pdfborder={0 0 0},pdfborderstyle={},backref=false,colorlinks=true]
 {hyperref}
\hypersetup{
 linkcolor=magenta, urlcolor=blue, citecolor=blue}

\makeatletter
\theoremstyle{plain}
\newtheorem{thm}{\protect\theoremname}
\theoremstyle{plain}
\newtheorem{defn}[thm]{\protect\definitionname}
\theoremstyle{plain}
\newtheorem{assumption}[thm]{\protect\assumptionname}
\theoremstyle{plain}
\newtheorem{prop}[thm]{\protect\propositionname}
\theoremstyle{plain}
\newtheorem{rem}[thm]{\protect\remarkname}
\theoremstyle{plain}
\newtheorem{lyxalgorithm}[thm]{\protect\algorithmname}
\theoremstyle{plain}
\newtheorem{lem}[thm]{\protect\lemmaname}
\theoremstyle{plain}
\newtheorem{cor}[thm]{\protect\corollaryname}

\DeclareMathOperator{\Tr}{Tr}

\DeclareMathOperator{\Herm}{Herm}
\DeclareMathOperator{\id}{id}

\DeclareMathOperator{\FD}{FD}
\DeclareMathOperator{\signum}{sgn}

\allowdisplaybreaks

\makeatother

\providecommand{\algorithmname}{Algorithm}
\providecommand{\assumptionname}{Assumption}
\providecommand{\corollaryname}{Corollary}
\providecommand{\definitionname}{Definition}
\providecommand{\lemmaname}{Lemma}
\providecommand{\propositionname}{Proposition}
\providecommand{\remarkname}{Remark}
\providecommand{\theoremname}{Theorem}

\begin{document}
\title{Fermi--Dirac thermal measurements: A framework for\\quantum hypothesis testing and semidefinite optimization}

\author{Nana Liu}

\affiliation{Institute of Natural Sciences, School of Mathematical Sciences, Ministry of Education Key Laboratory in Scientific and Engineering Computing, and Global College, Shanghai Jiao Tong University, Shanghai 200240, China}

\author{Mark M. Wilde}
\affiliation{School of Electrical and Computer Engineering,
Cornell University, Ithaca, New York 14850, USA} 
\date{\today}

\begin{abstract}
Quantum measurements are the means by which we recover messages encoded into quantum
states. They are at the forefront of quantum hypothesis testing, wherein
the goal is to perform an optimal measurement for arriving at a correct
conclusion. Mathematically, a measurement operator is Hermitian with eigenvalues in $[0,1]$. By noticing
that this constraint on each eigenvalue is the same as that imposed
on fermions by the Pauli exclusion principle, we interpret every eigenmode
of a measurement operator as an independent effective  fermionic mode. Under this
perspective, various objective functions in quantum hypothesis testing
can be viewed as the total expected energy associated with these fermionic
occupation numbers. By instead fixing a temperature and minimizing
the total expected fermionic free energy, we find that optimal measurements
for these modified objective functions are Fermi--Dirac thermal measurements,
wherein their eigenvalues are specified by Fermi--Dirac distributions. In the low-temperature
limit, their performance 
closely approximates that of optimal measurements for quantum hypothesis
testing, and we show that their parameters can be learned by classical
or hybrid quantum--classical optimization algorithms. This leads to a new quantum machine-learning model, termed \textit{Fermi--Dirac machines}, consisting of parameterized Fermi--Dirac thermal measurements---an alternative to quantum Boltzmann machines based on thermal states. Beyond
hypothesis testing, we show how general semidefinite optimization
problems can be solved using this approach, leading to a novel paradigm
for semidefinite optimization on quantum computers, in which the goal
is to implement thermal measurements rather than prepare thermal states.
Finally, we propose quantum algorithms for implementing Fermi--Dirac
thermal measurements, and we also propose second-order hybrid quantum--classical optimization
algorithms.
\end{abstract}

\maketitle

\tableofcontents{}

\section{Introduction}

\subsection{Background and motivation}

Quantum measurements play a fundamental role in quantum physics. They
are the means by which we retrieve messages encoded into quantum states,
and in many cases, we are interested in determining optimal measurements
that either minimize the probability of erroneously identifying an
encoded message or maximize an information measure.

Most generally, a measurement is mathematically characterized by a
positive operator-valued measure (POVM)~\cite{Kraus1983}, which is
a tuple $\left(M_{m}\right)_{m}$ of positive semidefinite operators
such that the completeness relation $\sum_{m}M_{m}=I$ holds. As such,
quantum measurements are specified by semidefinite constraints, and
so when one wishes to find an optimal measurement for a given quantum
information processing task, often it can be described as a semidefinite
program. This connection inextricably links quantum information theory
and semidefinite optimization, and this interplay has ultimately benefited
both fields of research~\cite{Audenaert2002,Brandao2017,Wang2018,Brandao2019,Apeldoorn2019,vanApeldoorn2020quantumsdpsolvers,Siddhu2022,Skrzypczyk2023,Patel2024,liu2025qthermoSDPs,minervini2025constrained,Nie2025}.

As a key scenario in which quantum measurements play a central role,
consider the task of symmetric quantum hypothesis testing, in which
$d$-dimensional states $\rho$ or $\sigma$ are prepared with equal
probability. Early work identified that the minimum error probability
for this task is as follows~\cite{Helstrom1967,Helstrom1969,Holevo1972}:
\begin{align}
 & \min_{M\in\mathcal{M}_{d}}\frac{1}{2}\Tr[\left(I-M\right)\rho]+\frac{1}{2}\Tr[M\sigma]\nonumber \\
 & =\frac{1}{2}\left(1+\min_{M\in\mathcal{M}_{d}}\Tr\!\left[M\left(\sigma-\rho\right)\right]\right)\label{eq:sym-bin-hypo-intro}\\
 & =\frac{1}{2}\left(1+\Tr\!\left[\Pi_{\sigma<\rho}\left(\sigma-\rho\right)\right]\right),
\end{align}
where $\mathcal{M}_{d}$ denotes the set of all $d$-dimensional measurement
operators (see~\eqref{eq:meas-ops-def}) and $\Pi_{\sigma<\rho}$
is the projection onto the strictly negative eigenspace of $\sigma-\rho$. That is,
\begin{equation}
    \Pi_{\sigma<\rho} \coloneqq \sum_{i : \lambda_i < 0} |i\rangle\!\langle i|,
\end{equation}
where $\sigma - \rho = \sum_i \lambda_i |i\rangle\!\langle i|$ is a spectral decomposition of $\rho - \sigma$.
The measurement $\left(\Pi_{\sigma<\rho},I-\Pi_{\sigma<\rho}\right)$
is a sharp threshold measurement, deciding that the prepared state
is $\rho$ if the first outcome occurs and deciding $\sigma$ otherwise.
In fact, we can substitute $\Pi_{\sigma<\rho}$ with $\Pi_{\sigma<\rho}+\frac{1}{2}\Pi_{\rho=\sigma}$,
where
\begin{equation}
\Pi_{\rho=\sigma} \coloneqq \sum_{i : \lambda_i =0} |i\rangle\!\langle i|    
\end{equation}
is the projection onto the zero eigenspace
of $\sigma-\rho$, and the resulting measurement is still optimal
for~\eqref{eq:sym-bin-hypo-intro}.

\subsection{Approach}

\label{subsec:Approach-intro}In this paper, we adopt a different
perspective on quantum hypothesis testing, and on semidefinite optimization
more broadly, inspired by the mathematical developments put forward
in \cite[Section~2.4]{lindsey2023}. The key perspective that we develop
here is to interpret the objective function
\begin{equation}
\Tr\!\left[M\left(\sigma-\rho\right)\right]\label{eq:intro-expected-energy}
\end{equation}
as the total expected energy of $d$ independent effective  fermions, so that the
aim of the optimization in~\eqref{eq:sym-bin-hypo-intro} is to minimize
this total expected energy. These fermions are not physical particles
but rather an effective description of the eigenmode occupations of
the measurement operator $M$, analogous to occupation-number representations
in second quantization. This approach is sensible after reflecting
on the fact that a spectral decomposition of a measurement operator
$M$ has the following form:
\begin{equation}
M=\sum_{i=1}^{d}m_{i}|\phi_{i}\rangle\!\langle\phi_{i}|.
\end{equation}
That $M$ is a measurement operator implies that each of its eigenvalues
satisfies the constraint $0\leq m_{i}\leq1$. As such, we can interpret
this constraint in terms of the Pauli exclusion principle, with $m_{i}$
being the probability that the eigenmode $\left(m_{i},|\phi_{i}\rangle\right)$
is occupied, and it not being possible for $m_{i}$ to exceed one,
consistent with the Pauli exclusion principle. So then each eigenmode
$\left(m_{i},|\phi_{i}\rangle\right)$ can be thought of as a fermion,
with expected energy given by $m_{i}\langle\phi_{i}|\left(\sigma-\rho\right)|\phi_{i}\rangle$.
Summing up the total expected energy of all $d$ eigenmodes leads
to the expression in~\eqref{eq:intro-expected-energy}.

Inspired by thermodynamics and related to the perspective of~\cite{liu2025qthermoSDPs}
connecting thermodynamics and optimization, consider that physical systems
operate at a strictly positive temperature $T>0$. As such, we can
alternatively minimize the sum of the free energies of the $d$ independent fermions,
which is given by
\begin{multline}
\sum_{i=1}^{d}\left[m_{i}\langle\phi_{i}|\left(\sigma-\rho\right)|\phi_{i}\rangle-T\cdot s(m_{i})\right]\\
=\Tr\!\left[M\left(\sigma-\rho\right)\right]-T\cdot S_{\FD}(M),\label{eq:intro-free-energy-sym-hypo-test}
\end{multline}
where
\begin{equation}
s(m_{i})\coloneqq-m_{i}\ln m_{i}-\left(1-m_{i}\right)\ln\!\left(1-m_{i}\right)
\end{equation}
is the entropy of the $i$th fermion and
\begin{equation}
S_{\FD}(M)\coloneqq-\Tr[M\ln M]-\Tr[(I-M)\ln(I-M)]
\end{equation}
is the Fermi--Dirac entropy of the measurement operator~$M$. 
The objective function in \eqref{eq:intro-free-energy-sym-hypo-test} is thus a modification of the objective function in \eqref{eq:intro-expected-energy}, with \eqref{eq:intro-free-energy-sym-hypo-test}  incorporating the negative term $-T\cdot S_{\FD}(M)$.

For
all $T>0$, the measurement operator achieving the minimum of the
right-hand side of~\eqref{eq:intro-free-energy-sym-hypo-test} is
a Fermi--Dirac thermal measurement operator $M_{T}$ of the form
\begin{equation}
M_{T}\coloneqq\left(e^{\frac{1}{T}\left(\sigma-\rho\right)}+I\right)^{-1},
\end{equation}
which converges to $\Pi_{\sigma<\rho}+\frac{1}{2}\Pi_{\rho=\sigma}$
in the zero-temperature limit (see Sections~\ref{subsec:Free-energy-minimization} and~\ref{subsec:Fermi-Dirac-thermal-measurements}
for details). Thus, the Fermi--Dirac measurement operator $M_{T}$
represents a smoothening of the sharp threshold measurement $\Pi_{\sigma<\rho}+\frac{1}{2}\Pi_{\rho=\sigma}$,
much like the Fermi--Dirac distribution of statistical mechanics or the logistic/sigmoid
function in neural-network and machine learning applications represents a smoothening
of a sharp threshold function. We can thus view $M_{T}$ as a matrix generalization of the sigmoid/logistic function. More generally, the framework outlined here is best suited for decision problems, in which the goal is to decide between two different possibilities, as is the case in binary classification or binary hypothesis testing problems.

We also note here that optimizing an objective function over measurement operators is rather distinct from optimizing over density matrices, the latter being considered in our related work \cite{liu2025qthermoSDPs,minervini2025constrained}. Indeed, as we mentioned above, the only constraint on a measurement operator is for all of its eigenvalues to be bounded between zero and one. These are thus independent constraints on the eigenvalues, so that the eigenmodes are in correspondence with $d$ independent effective fermionic modes. In contrast, the constraints on the eigenvalues of a density matrix are not independent:  not only do they need to take values between zero and one, but they also are constrained to sum to one. Thus, the form of the free energy in \cite{liu2025qthermoSDPs,minervini2025constrained} is different from that in~\eqref{eq:intro-free-energy-sym-hypo-test}, and related, minimizing \eqref{eq:intro-expected-energy} over the set of density matrices leads to a very different solution than that found by minimizing over measurement operators.

\subsection{Summary of contributions}

In this paper, we extend the approach and interpretation outlined
in Section~\ref{subsec:Approach-intro} to other scenarios of quantum
hypothesis testing and, more broadly, to semidefinite optimization.
Interestingly, these methods can be realized not only on classical
computers but also as hybrid quantum--classical algorithms, thus
representing a novel paradigm for semidefinite optimization on quantum
computers, in which the goal is to implement Fermi--Dirac thermal
measurements rather than prepare thermal states.

In more detail, the main contributions of our paper are as follows:
\begin{itemize}
\item We observe that the constraints on the eigenvalues of a measurement
operator are the same as those from the Pauli exclusion principle,
leading to an interpretation of the eigenmodes of a measurement operator
as fermions. This leads to a thermodynamic model of quantum measurements
in which the eigenmodes of a measurement operator behave as fermionic
occupation modes obeying the Pauli exclusion principle. We then reframe
quantum hypothesis testing problems as energy and free-energy minimization
problems, with Fermi--Dirac thermal measurements being optimal for
the latter minimization task. We thus provide a finite-temperature
generalization of optimal threshold measurements, replacing sharp
spectral thresholding by smooth Fermi--Dirac occupation rules.
\item We identify a positive-temperature chemical potential maximization
problem that is dual and equal to the free-energy minimization problem
(Proposition~\ref{prop:dual-FD-free-energy}). This dual optimization
problem is concave and has a smoothness parameter inversely proportional
to the temperature and proportional to the sum of the norms of the
operators involved in the constraints (Remark~\ref{rem:dual-concave-smoothness-param}).
It is thus amenable to gradient-based optimization algorithms, which
can be executed either on classical or quantum computers.
\item More broadly, this approach can be applied to general measurement
optimization problems, which are a special class of semidefinite optimization
problems and are essentially equivalent to them.
\item We provide a detailed error analysis comparing the original minimization
problem to the free-energy minimization problem, showing how to choose
the temperature of Fermi--Dirac thermal measurements in order to
achieve a desired error. In more nuanced error bounds, the temperature
depends on the dimension of the ground space and the spectral gap
of a Hamiltonian related to the optimization problem (Propositions
\ref{prop:approx-error-spectral-gap} and~\ref{prop:approx-err-no-FD-obj}).
\item As mentioned above, our approach opens up a new paradigm for solving
semidefinite programs on quantum computers, alternative to the conventional
approach~\cite{Brandao2017,Brandao2019,Apeldoorn2019,vanApeldoorn2020quantumsdpsolvers}
relying on thermal state preparation. In support of this, we propose
quantum algorithms for realizing Fermi--Dirac thermal measurements
when given access to samples of states needed to form them (Algorithms~\ref{alg:FD-thermal-alg} and~\ref{alg:DME-alg}).

\item We apply all of our findings to quantum hypothesis testing and binary classification, thus providing
a number of new insights for this domain of quantum information theory. In particular, we show how near-optimal measurements for hypothesis testing, of the Fermi--Dirac thermal form, can be learned via gradient ascent algorithms.
In addition to the above, our approach introduces a novel analytical
method for performing calculations associated with quantum hypothesis
testing, and the smoothness of Fermi--Dirac thermal measurements
could be beneficial for this purpose.

\item From another perspective, our paper introduces a novel model for quantum machine learning, called \textit{Fermi--Dirac machines}, which consists of parameterized Fermi--Dirac thermal measurements whose parameters can be learned by means of gradient-based, hybrid quantum--classical algorithms. This presents an alternative paradigm to quantum Boltzmann machines~\cite{Amin2018,Benedetti2017,Kieferova2017}, the latter being based on parameterized thermal states. The strong connection of Fermi--Dirac thermal measurements to sigmoid/logistic functions thus represents an alternative way for generalizing concepts from neural networks to quantum machine learning.
\end{itemize}

\subsection{Paper organization}

The rest of our paper is organized as follows:

In Section~\ref{sec:Notation-and-preliminaries}, we briefly review
some notation and concepts used throughout our paper.

Section~\ref{sec:General-measurement-optimization} contains some
of our main theoretical contributions. First, in Section~\ref{subsec:General-measurement-optimization}, we formulate a general
measurement optimization problem,
which is a particular kind of semidefinite optimization problem in
standard form \cite[Eq.~(4.51)]{Boyd2004}. We also derive its dual
and comment on the form of optimal measurements. We then provide a
fermionic interpretation of this measurement optimization problem
(Section~\ref{subsec:Interpretation-fermionic}), along the lines
presented in Section~\ref{subsec:Approach-intro}. We recast the measurement
optimization problem as a measurement free-energy optimization problem
and derive its dual, finding that Fermi--Dirac thermal measurements
are optimal for this modified optimization problem (Section~\ref{subsec:Free-energy-minimization}).
We define Fermi--Dirac thermal measurements in a general way and
discuss the zero- and infinite-temperature limits of them (Section~\ref{subsec:Fermi-Dirac-thermal-measurements}). In Section~\ref{subsec:Approximation-error}, we establish various bounds on
the approximation error between the measurement free-energy optimization
problem and the original optimization problem, as a function of the
temperature and dimension; more nuanced bounds involve the spectral
gap and dimension of the ground space of the Hamiltonian relevant
for the optimization problem. In Section~\ref{subsec:Gradient-and-Hessian},
we derive analytical expressions for the gradient and Hessian of the
dual of the measurement free-energy optimization problem, and in Section
\ref{subsec:Gradient-ascent-gen-meas-opt}, we establish a classical
gradient-ascent algorithm for the solving the dual problem, along
with guarantees on its runtime. Finally, Section~\ref{subsec:General-measurement-opteq-ineq}
generalizes the measurement optimization problem and all previous
developments to the case when there are inequality constraints in
addition to equality constraints, and Section~\ref{subsec:Solving-general-SDPs}
expands the developments to arbitrary semidefinite optimization problems.

In Section~\ref{sec:Quantum-algorithm-gen-meas-opt}, we present quantum
algorithms for solving the general measurement optimization problem
from Section~\ref{sec:General-measurement-optimization}, assuming
an input model in which the fixed Hermitian operators specifying the
problem are given by linear combinations of quantum states, to which
we assume sample access on a quantum computer. We sketch a quantum
algorithm for implementing a Fermi--Dirac thermal measurement (Section
\ref{subsec:Quantum-algorithm-FD-thermal}), which involves a combination
of Schr\"odingerization~\cite{Jin2024,Jin2023}, the power of one
qumode~\cite{Liu2016}, phase estimation~\cite{Kitaev1995,Shor1997,NielsenChuang2000},
and density matrix exponentiation~\cite{lloyd2014quantum,Kimmel2017,Go2025}.
As such, this could be a useful primitive for quantum hypothesis testing
going forward. We also sketch a quantum algorithm for estimating the
elements of a certain Hessian matrix (Section~\ref{subsec:Quantum-algorithm-Hessian}),
thus enabling second-order Newton steps when performing the required
optimization.

In Section~\ref{sec:Application-q-hypo-test}, we apply all of the
contributions from Sections~\ref{sec:General-measurement-optimization}
and~\ref{sec:Quantum-algorithm-gen-meas-opt} to various settings
of quantum hypothesis testing, including symmetric binary hypothesis
testing (Section~\ref{subsec:Symmetric-binary-HT}), binary classification (Section~\ref{subsec:binary-class}), asymmetric binary
hypothesis testing (Section~\ref{subsec:Asymmetric-binary-HT}), and
asymmetric composite hypothesis testing, in which the null hypothesis
is chosen from a finite set of states along with a finite number of
states, while the alternative hypothesis consists of a single state
(Section~\ref{subsec:Asymmetric-HT-composite-null}). This last setting
is somewhat general and is the one that maps rather directly onto
the general semidefinite optimization problem presented in Section~\ref{subsec:General-measurement-opteq-ineq}. The algorithms from
Section~\ref{sec:Quantum-algorithm-gen-meas-opt} can be interpreted
as methods for learning optimal measurements for quantum hypothesis
testing.

Finally, in Section~\ref{sec:Conclusion}, we conclude with a brief
summary of our findings and suggestions for future research.

\section{Notation and preliminaries}

\label{sec:Notation-and-preliminaries}

For $c\in\mathbb{N}$, we employ the shorthand:
\begin{equation}
\left[c\right]\coloneqq\left\{ 1,\ldots,c\right\} .
\end{equation}

For $d\in\mathbb{N}$, let $\mathcal{M}_{d}$ denote the set of $d\times d$
measurement operators:
\begin{align}
\mathcal{M}_{d} & \coloneqq\left\{ M\in\Herm_{d}:0\leq M\leq I\right\} ,\label{eq:meas-ops-def}
\end{align}
where $\Herm_{d}$ denotes the set of $d\times d$ Hermitian operators
and, for Hermitian operators $A$ and $B$, the notation $A\leq B$
indicates that $A-B$ is positive semidefinite. Let $\mathcal{D}_{d}$
denote the set of $d\times d$ density operators:
\begin{equation}
\mathcal{D}_{d}\coloneqq\left\{ \rho\in\Herm_{d}:\rho\geq0,\Tr[\rho]=1\right\} .\label{eq:dens-ops-def}
\end{equation}

Let $\left(A\right)_{-}$ denote the negative part of a Hermitian
matrix $A$, i.e.,
\begin{equation}
\left(A\right)_{-}\coloneqq\sum_{i:a_{i}<0}\left|a_{i}\right||i\rangle\!\langle i|,
\end{equation}
where $A=\sum_{i}a_{i}|i\rangle\!\langle i|$ is an eigendecomposition
of $A$. Note that $\left(A\right)_{-}$ is a positive semidefinite
matrix.

\section{General measurement optimization}

\label{sec:General-measurement-optimization}

Let us begin by considering a general measurement optimization problem,
in which we optimize a linear objective function over the intersection
of the set of measurement operators with an affine space. This problem
is thus a particular kind of semidefinite optimization problem. However,
as we show later in Section~\ref{subsec:Solving-general-SDPs}, this
problem can be suitably modified to capture an arbitrary semidefinite
optimization problem in its standard form \cite[Eq.~(4.51)]{Boyd2004}.
Additionally, optimization problems relevant to quantum hypothesis
testing are special cases of this problem, as discussed in Section
\ref{sec:Application-q-hypo-test}.

\subsection{General measurement optimization problem}

\label{subsec:General-measurement-optimization}Let $c,d\in\mathbb{N}$,
let
\begin{equation}
\mathcal{Q}\equiv\left(H,Q_{1},\ldots,Q_{c}\right)\label{eq:vector-Herm-ops}
\end{equation}
be a tuple of $d\times d$ Hermitian matrices, and let
\begin{equation}
q\equiv\left(q_{1},\ldots,q_{c}\right)\in\mathbb{R}^{c}.\label{eq:constraint-vector}
\end{equation}
We can think of $H$ as a Hamiltonian and each $Q_{i}$ as a non-commuting
charge, such as electric charge, particle number, angular momentum,
etc.
\begin{defn}
The corresponding measurement optimization problem is as follows:
\begin{equation}
E(\mathcal{Q},q)\coloneqq\min_{M\in\mathcal{M}_{d}}\left\{ \Tr\!\left[HM\right]:\Tr\!\left[Q_{i}M\right]=q_{i}\,\forall i\in\left[c\right]\right\} ,\label{eq:measurement-opt-gen}
\end{equation}
where $\mathcal{M}_{d}$ is defined in~\eqref{eq:meas-ops-def}.
\end{defn}

\begin{assumption}
\label{assu:dual-existence}We assume throughout our paper that there
exists at least one measurement operator $M\in\mathcal{M}_{d}$ such
that $\Tr\!\left[Q_{i}M\right]=q_{i}$ for all $i\in\left[c\right]$
(i.e., all constraints are met). If this assumption does not hold,
then the optimal value $E(\mathcal{Q},q)$ in~\eqref{eq:measurement-opt-gen}
is trivially equal to $+\infty$. 
\end{assumption}

\begin{prop}
\label{prop:dual-gen-meas-opt}If Assumption~\ref{assu:dual-existence}
holds, then the optimal value $E(\mathcal{Q},q)$ in~\eqref{eq:measurement-opt-gen}
can be expressed in terms of the following dual optimization problem:
\begin{equation}
E(\mathcal{Q},q)=\sup_{\mu\in\mathbb{R}^{c}}f(\mu),\label{eq:dual-gen-meas-opt}
\end{equation}
where the dual objective function $f(\mu)$ is defined as
\begin{align}
f(\mu) & \coloneqq\mu\cdot q-\Tr\!\left[\left(H-\mu\cdot Q\right)_{-}\right],\label{eq:dual-unreg-obj-func}
\end{align}
and we have employed the shorthands:
\begin{equation}
\mu\cdot q\equiv\sum_{i\in\left[c\right]}\mu_{i}q_{i},\qquad\mu\cdot Q\equiv\sum_{i\in\left[c\right]}\mu_{i}Q_{i}.
\end{equation}
Alternatively, we can write the dual optimization problem in~\eqref{eq:dual-gen-meas-opt}
as the following semidefinite program:
\begin{equation}
E(\mathcal{Q},q)=\sup_{\mu\in\mathbb{R}^{c},X\geq0}\left\{ \begin{array}{c}
\mu\cdot q+\Tr\!\left[H-\mu\cdot Q\right]-\Tr[X]:\\
X\geq H-\mu\cdot Q
\end{array}\right\} .\label{eq:dual-gen-meas-opt-SDP}
\end{equation}
Furthermore, the dual objective function $f(\mu)$ is concave in $\mu$.
\end{prop}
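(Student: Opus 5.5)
I will prove this by Lagrangian duality, introducing multipliers $\mu\in\mathbb{R}^{c}$ only for the affine constraints and keeping the constraint $M\in\mathcal{M}_{d}$ explicit. For every feasible $M$ and every $\mu$,
\begin{equation}
\Tr[HM]=\mu\cdot q+\Tr\!\left[(H-\mu\cdot Q)M\right]\geq\mu\cdot q+\min_{M'\in\mathcal{M}_{d}}\Tr\!\left[(H-\mu\cdot Q)M'\right].
\end{equation}
The inner minimization over $\mathcal{M}_{d}$ is explicit: writing $A\coloneqq H-\mu\cdot Q=\sum_{i}a_{i}|i\rangle\!\langle i|$, we have $\Tr[AM']=\sum_{i}a_{i}\langle i|M'|i\rangle$ with $\langle i|M'|i\rangle\in[0,1]$. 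Hence the minimum equals $\sum_{i:a_{i}<0}a_{i}=-\Tr[(A)_{-}]$, and it is attained by the projection onto the negative eigenspace of $A$.

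This step gives weak duality, $E(\mathcal{Q},q)\geq f(\mu)$ for all $\mu$. It also expresses $f$ as a pointwise infimum over $M'\in\mathcal{M}_{d}$ of the functions $\mu\mapsto\mu\cdot q+\Tr[(H-\mu\cdot Q)M']$, each affine in $\mu$. A pointwise infimum of affine functions is concave, which settles the concavity claim.

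For the SDP reformulation, I fix $\mu$ and show that
\begin{equation}
\Tr[A]-\min\{\Tr[X]:X\geq0,\,X\geq A\}=-\Tr[(A)_{-}].
\end{equation}
The choice $X=(A)_{+}=A+(A)_{-}$ is feasible and gives $\Tr[A]-\Tr[(A)_{+}]=-\Tr[(A)_{-}]$. For optimality, any feasible $X$ satisfies $\Tr[X]\geq\Tr[X\Pi_{+}]\geq\Tr[A\Pi_{+}]=\Tr[(A)_{+}]$, where $\Pi_{+}$ is the projection onto the positive eigenspace of $A$. Taking the supremum over $\mu$ then turns \eqref{eq:dual-gen-meas-opt} into \eqref{eq:dual-gen-meas-opt-SDP}.

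The main obstacle is strong duality, $E(\mathcal{Q},q)=\sup_{\mu}f(\mu)$, with no duality gap even though Assumption~\ref{assu:dual-existence} only guarantees feasibility and not strict feasibility of the affine constraints. My plan is to apply Sion's minimax theorem to the Lagrangian $L(M,\mu)=\Tr[HM]+\mu\cdot(q-\operatorname{vec}(\Tr[Q_{i}M]))$, which is linear in each argument, on the compact convex set $\mathcal{M}_{d}\times\mathbb{R}^{c}$. This gives
\begin{equation}
\sup_{\mu}\min_{M\in\mathcal{M}_{d}}L=\min_{M\in\mathcal{M}_{d}}\sup_{\mu}L,
\end{equation}
where compactness is needed only on the $M$ side. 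The right-hand side equals $E(\mathcal{Q},q)$: $\sup_{\mu}L$ is $+\infty$ unless every constraint holds, and feasibility ensures the minimum is finite. The left-hand side equals $\sup_{\mu}f(\mu)$ by the computation in the first step.

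If Sion is to be avoided, the fallback is to observe that all constraints are affine and the domain $\mathcal{M}_{d}$ is a compact polyhedral-spectrahedral set. One then argues via closedness of the image $\{(\Tr[HM],\Tr[Q_{i}M])_{i}:M\in\mathcal{M}_{d}\}$, a compact convex set, and separates the point $(E-\varepsilon,q)$ from it by a hyperplane. The degenerate case of a separating hyperplane with zero coefficient on the objective coordinate is handled using that $q$ lies in the image set.

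Finally, note that the supremum over $\mu$ need not be attained, which is why the statement uses $\sup$.
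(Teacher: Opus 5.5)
Your proposal is correct and follows essentially the same route as the paper. You use Lagrange multipliers for the affine constraints, a Sion-type minimax swap with compactness only on $\mathcal{M}_{d}$ (the paper invokes \cite[Theorem~2.11]{BenDavid2023}), the explicit evaluation $\min_{M\in\mathcal{M}_{d}}\Tr[AM]=-\Tr[(A)_{-}]$ together with the choice $X=(A)_{+}$ for the SDP form, and concavity as a pointwise infimum of affine functions. The only slip is calling $\mathcal{M}_{d}\times\mathbb{R}^{c}$ compact, which you correct yourself by noting that compactness is needed only on the $M$ side.
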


\begin{proof}
See Appendix~\ref{sec:Proof-of-dual-gen-meas-opt-exp} for a proof
of~\eqref{eq:dual-gen-meas-opt},~\eqref{eq:dual-gen-meas-opt-SDP},
and the fact that $f(\mu)$ is concave in $\mu$.
\end{proof}
The proof of~\eqref{eq:dual-gen-meas-opt} also demonstrates that
a measurement operator $M$ is optimal for~\eqref{eq:measurement-opt-gen}
if and only if it is of the form 
\begin{equation}
\Pi_{H<\mu\cdot Q}+M_{0}(\mu),\label{eq:optimal-meas-op}
\end{equation}
for some $\mu\in\mathbb{R}^{c}$ and $M_{0}(\mu)$, where $\Pi_{H<\mu\cdot Q}$
is the projection onto the strictly negative eigenspace of the operator
$H-\mu\cdot Q$ and $M_{0}(\mu)$ is a measurement operator satisfying
\begin{equation}
0\leq M_{0}(\mu)\leq\Pi_{H=\mu\cdot Q},
\end{equation}
where $\Pi_{H=\mu\cdot Q}$ is the projection onto the zero eigenspace
of $H-\mu\cdot Q$. The projection $\Pi_{H<\mu\cdot Q}$ represents
a sharp threshold, with eigenvalues equal to one for eigenprojections
of $H-\mu\cdot Q$ in its negative eigenspace and eigenvalues equal
to zero for eigenprojections of $H-\mu\cdot Q$ in its positive eigenspace.

In the forthcoming subsections, we introduce and motivate the notion
of a Fermi--Dirac thermal measurement, which smooths out this sharp
transition, in the same way that a positive-temperature Fermi--Dirac
distribution smooths out the sharp threshold function corresponding
to the zero-temperature Fermi-Dirac distribution.

\subsection{Interpreting eigenmodes of measurement operators as independent fermions}

\label{subsec:Interpretation-fermionic}Consider a particular measurement
operator $M$ with a spectral decomposition as follows:
\begin{equation}
M=\sum_{j=1}^{d}m_{j}|\phi_{j}\rangle\!\langle\phi_{j}|.
\end{equation}
Let us refer to the eigenvalue-eigenprojector pair $\left(m_{j},|\phi_{j}\rangle\!\langle\phi_{j}|\right)$
as an eigenmode.

Due to the assumption that $M$ is a measurement operator, the constraint
$0\leq m_{j}\leq1$ holds for all $j\in\left[d\right]$. As such,
we can think of each of the $d$ eigenmodes as an independent fermion, with $0\leq m_{j}\leq1$
corresponding to an occupation number constraint, in analogy with
the Pauli exclusion principle. Indeed, $m_{j}$ is the probability
that the $j$th fermion is present in the $j$th eigenmode, and $1-m_{j}$
is the probability that it is not present.

Consider furthermore that
\begin{align}
\Tr\!\left[HM\right] & =\sum_{j=1}^{d}m_{j}\langle\phi_{j}|H|\phi_{j}\rangle.\label{eq:total-expected-energy-fermions}
\end{align}
Here, we can think of $\langle\phi_{j}|H|\phi_{j}\rangle$ as the
energy of the $j$th fermion when occupied, so that the expected energy
of the $j$th fermion is $m_{j}\langle\phi_{j}|H|\phi_{j}\rangle$.
Thus, the total expected energy of all $d$ fermions is $\Tr\!\left[HM\right]$.
Furthermore, for all $i\in\left[c\right]$, we similarly have that
\begin{equation}
\Tr\!\left[Q_{i}M\right]=\sum_{j=1}^{d}m_{j}\langle\phi_{j}|Q_{i}|\phi_{j}\rangle.
\end{equation}
Here, each $\langle\phi_{j}|Q_{i}|\phi_{j}\rangle$ is the value of
the $i$th charge for the $j$th fermion, so that the expected value
of the $i$th charge for the $j$th fermion is $m_{j}\langle\phi_{j}|Q_{i}|\phi_{j}\rangle$,
and the total expected charge of all $d$ fermions is $\Tr\!\left[Q_{i}M\right]$. 

As such, with this perspective, the goal of the measurement optimization
problem in~\eqref{eq:measurement-opt-gen} is to minimize the total
expected energy of $d$ fermions subject to a constraint on the total
expected value of the $i$th charge, for each $i\in\left[c\right]$.
The freedom in the optimization problem is to adjust the eigenmodes
as desired in order for the total expected energy to be as small as
possible, while satisfying the constraints.

\subsection{Free energy minimization}

\label{subsec:Free-energy-minimization}

Following the approach from
\cite{liu2025qthermoSDPs}, we are motivated by the fact that real
physical systems operate at a strictly positive temperature $T>0$,
in which case the goal shifts to minimize the constrained free energy
rather than the constrained energy. Recall that the free energy is
equal to the expected value of the energy less the entropy scaled
by the temperature. For the $j$th eigenmode, the free energy is thus
\begin{equation}
m_{j}\langle\phi_{j}|H|\phi_{j}\rangle-T\cdot s(m_{j}),
\end{equation}
where the binary entropy $s(p)$ is defined for $p\in\left[0,1\right]$
as
\begin{equation}
s(p)\coloneqq-p\ln p-\left(1-p\right)\ln\!\left(1-p\right).\label{eq:binary-entropy-def}
\end{equation}
The total free energy is thus given by
\begin{equation}
\sum_{j=1}^{d}\left[m_{j}\langle\phi_{j}|H|\phi_{j}\rangle-Ts(m_{j})\right]=\Tr\!\left[HM\right]-T\cdot S_{\FD}(M),\label{eq:rewrite-free-energy-fermions}
\end{equation}
where the Fermi--Dirac entropy of the measurement operator $M$ is
defined as
\begin{equation}
S_{\FD}(M)\coloneqq S(M)+S(I-M),\label{eq:FD-entropy-def}
\end{equation}
with the von Neumann entropy defined as
\begin{equation}
S(M)\coloneqq-\Tr\!\left[M\ln M\right].
\end{equation}
The equality in~\eqref{eq:rewrite-free-energy-fermions} follows from
\eqref{eq:total-expected-energy-fermions} and the fact that $S_{\FD}(M)$
is unitarily invariant, depending only on the eigenvalues of $M$,
so that
\begin{equation}
S_{\FD}(M)=\sum_{j=1}^{d}s(m_{i}).\label{eq:FD-entropy-sum-binary-entropies}
\end{equation}
Observe that the function $S_{\FD}(M)$ is concave in $M$. Indeed,
suppose that $M_{0}$ and $M_{1}$ are measurement operators, suppose
that $\lambda\in\left[0,1\right]$, and define $M_{\lambda}\coloneqq\lambda M_{1}+\left(1-\lambda\right)M_{0}$.
Then
\begin{align}
S_{\FD}(M_{\lambda}) & =S(M_{\lambda})+S(I-M_{\lambda})\label{eq:concavity-FD-entr-1}\\
 & =S(\lambda M_{1}+\left(1-\lambda\right)M_{0})\nonumber \\
 & \qquad+S(\lambda(I-M_{1})+\left(1-\lambda\right)(I-M_{0}))\\
 & \leq\lambda S(M_{1})+\left(1-\lambda\right)S(M_{0})\nonumber \\
 & \qquad+\lambda S(I-M_{1})+\left(1-\lambda\right)S(I-M_{0})\\
 & =\lambda S_{\FD}(M_{1})+\left(1-\lambda\right)S_{\FD}(M_{0}),\label{eq:concavity-FD-entr-last}
\end{align}
where the sole inequality follows from the concavity of the von Neumann
entropy (see, e.g., \cite[Eq.~(7.2.106)]{khatri2024}).

We can then modify the measurement optimization problem in~\eqref{eq:measurement-opt-gen}
to become the following measurement free-energy optimization problem:
\begin{defn}
For $T>0$, and $\mathcal{Q}$ and $q$ as given in~\eqref{eq:vector-Herm-ops}
and~\eqref{eq:constraint-vector}, respectively, the measurement free-energy
optimization problem is as follows:
\begin{multline}
F_{T}(\mathcal{Q},q)\coloneqq\min_{M\in\mathcal{M}_{d}}\left\{ \begin{array}{c}
\Tr\!\left[HM\right]-T\cdot S_{\FD}(M):\\
\Tr\!\left[Q_{i}M\right]=q_{i}\,\forall i\in\left[c\right]
\end{array}\right\} .\label{eq:primal-FD-obj}
\end{multline}
\end{defn}

By solving for the dual in this case, we arrive at the following:
\begin{prop}
\label{prop:dual-FD-free-energy}Fix $T>0$. Given $\mathcal{Q}$
and $q$ as defined in~\eqref{eq:vector-Herm-ops} and~\eqref{eq:constraint-vector},
respectively, and under Assumption~\ref{assu:dual-existence}, the
following equality holds:
\begin{equation}
F_{T}(\mathcal{Q},q)=\sup_{\mu\in\mathbb{R}^{c}}f_{T}(\mu),\label{eq:dual-free-energy-meas-opt}
\end{equation}
where the temperature-$T$ dual objective function $f_{T}(\mu)$ is
defined as
\begin{equation}
f_{T}(\mu)\coloneqq\mu\cdot q-T\Tr\!\left[\ln\!\left(e^{-\frac{1}{T}\left(H-\mu\cdot Q\right)}+I\right)\right].\label{eq:dual-FD-obj}
\end{equation}
Furthermore, an optimal measurement operator $M_{T}(\mu)$ for~\eqref{eq:primal-FD-obj}
is a Fermi--Dirac measurement operator of the following form:
\begin{equation}
M_{T}(\mu)\coloneqq\left(e^{\frac{1}{T}\left(H-\mu\cdot Q\right)}+I\right)^{-1}.\label{eq:FD-meas-op}
\end{equation}
\end{prop}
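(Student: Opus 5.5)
The plan is to derive the dual by Lagrangian duality, introducing multipliers $\mu\in\mathbb{R}^c$ only for the equality constraints and keeping the constraint $M\in\mathcal{M}_d$ inside the inner minimization. Define
\begin{equation}
L(M,\mu)\coloneqq \Tr[HM]-T\cdot S_{\FD}(M)-\sum_{i\in[c]}\mu_i\left(\Tr[Q_iM]-q_i\right)=\mu\cdot q+\Tr[(H-\mu\cdot Q)M]-T\cdot S_{\FD}(M).
\end{equation}
The first step is weak duality. For every feasible $M$ and every $\mu$, we have $L(M,\mu)$ equal to the primal objective, so $F_T(\mathcal{Q},q)\geq \sup_\mu \min_{M\in\mathcal{M}_d}L(M,\mu)$. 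The second step is to compute the inner minimum in closed form, which will show that it equals $f_T(\mu)$ and that the minimizer is $M_T(\mu)$.

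For the inner minimization, set $G\coloneqq H-\mu\cdot Q=\sum_k g_k|k\rangle\!\langle k|$. I would show that $\Tr[GM]-T\,S_{\FD}(M)\geq -T\Tr[\ln(e^{-G/T}+I)]$, with equality at $M=(e^{G/T}+I)^{-1}$. The cleanest route is a Gibbs/relative-entropy inequality. Let $N\coloneqq (e^{G/T}+I)^{-1}$, so that $I-N=e^{G/T}N$, and hence
\begin{equation}
\ln N-\ln(I-N)=-G/T, \qquad \ln(I-N)=-\ln(e^{-G/T}+I).
\end{equation}
Then a direct calculation gives
\begin{equation}
\Tr[GM]-T S_{\FD}(M)+T\Tr[\ln(e^{-G/T}+I)]=T\left[D(M\Vert N)+D(I-M\Vert I-N)\right],
\end{equation}
where $D(A\Vert B)=\Tr[A\ln A-A\ln B]-\Tr[A]+\Tr[B]$ is the generalized (Umegaki) relative entropy for positive operators. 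To check this, expand $\Tr[M\ln N]+\Tr[(I-M)\ln(I-N)]=\Tr[M(\ln N-\ln(I-N))]+\Tr[\ln(I-N)]$ and observe that the trace-normalization terms cancel, since $\Tr[M]+\Tr[I-M]=\Tr[N]+\Tr[I-N]$. Klein's inequality gives $D\geq 0$, with equality if and only if $M=N$. This yields the minimum value $\mu\cdot q-T\Tr[\ln(e^{-G/T}+I)]=f_T(\mu)$, attained uniquely at $M_T(\mu)$. An alternative route is to diagonalize: the Schur–Horn/diagonal argument gives $\Tr[GM]\geq \sum_j m_j^\downarrow g_j^\uparrow$, which reduces the problem to scalar minimization of $m g - T s(m)$, whose minimizer is $m=(e^{g/T}+1)^{-1}$.

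The main obstacle is strong duality, namely showing that $\sup_\mu f_T(\mu)$ actually equals $F_T$ rather than only lower-bounding it. I would invoke a minimax theorem such as Sion's. The map $L$ is convex and continuous in $M$ on the compact convex set $\mathcal{M}_d$, using the concavity of $S_{\FD}$ established in \eqref{eq:concavity-FD-entr-1}--\eqref{eq:concavity-FD-entr-last}, and it is linear (hence concave) in $\mu\in\mathbb{R}^c$. Sion's theorem then gives
\begin{equation}
\sup_\mu\min_M L=\min_M\sup_\mu L.
\end{equation}
Under Assumption~\ref{assu:dual-existence}, the right-hand side is $F_T(\mathcal{Q},q)$, because $\sup_\mu L$ equals the primal objective if $M$ is feasible and $+\infty$ otherwise. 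One could instead appeal to Slater's condition, but relative-interior feasibility is not guaranteed by Assumption~\ref{assu:dual-existence}, which is why I prefer Sion's theorem with compactness in $M$.

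Finally, the claim about the optimal measurement operator follows once a dual maximizer $\mu^*$ exists. In that case $M_T(\mu^*)$ is the unique inner minimizer, and the first-order condition $\nabla f_T(\mu^*)=0$ reads $\Tr[Q_iM_T(\mu^*)]=q_i$. Thus $M_T(\mu^*)$ is primal feasible and achieves the value $f_T(\mu^*)=F_T$. If the supremum is not attained, I would state the conclusion as holding for the optimizing $\mu$ in a limiting sense, since the paper's statement asserts the form of the optimizer.
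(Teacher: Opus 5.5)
Your proposal is correct and follows the paper's argument. The paper also dualizes only the equality constraints and exchanges $\min_{M\in\mathcal{M}_d}$ and $\sup_{\mu}$ by a minimax theorem, using compactness of $\mathcal{M}_d$ and concavity of $S_{\FD}$. In its second proof it evaluates the inner minimum exactly as you do, via the identity $\Tr[AM]-T\cdot S_{\FD}(M)=T\,D_{\FD}\bigl(M\|(e^{A/T}+I)^{-1}\bigr)-T\Tr[\ln(e^{-A/T}+I)]$ together with nonnegativity and faithfulness of the Fermi--Dirac relative entropy.

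Your extra step fills in something the paper leaves implicit: a dual maximizer $\mu^\star$ makes $M_T(\mu^\star)$ primal feasible through $\nabla f_T(\mu^\star)=0$. Your caveat that the supremum need not be attained is also substantive, not pedantic. For example, $Q_1=I$, $q_1=0$ forces $M=0$, which is not of Fermi--Dirac form.
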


\begin{proof}
We provide two different proofs. See Appendix~\ref{sec:Proof-of-Theorem-FD-thermal-meas-gen}
for a first proof that uses concavity of the Fermi--Dirac entropy
and matrix derivatives. See Appendix~\ref{sec:Second-Proof-dual-FD-meas-op}
for a second proof that introduces the Fermi--Dirac relative entropy
and uses the fact that it is non-negative for all measurement operators
and equal to zero if and only if its arguments are equal.
\end{proof}
\begin{rem}
As a consequence of the following scalar limit holding for all $x\in\mathbb{R}$,
\begin{equation}
\lim_{T\to0^{+}}-T\ln\!\left(e^{-\frac{x}{T}}+1\right)=\min\!\left\{ x,0\right\} ,
\end{equation}
we conclude that, for all $\mu\in\mathbb{R}^{c}$, the temperature-$T$
dual objective function $f_{T}(\mu)$ in~\eqref{eq:dual-FD-obj} converges
to the objective function $f(\mu)$ in~\eqref{eq:dual-unreg-obj-func},
in the zero-temperature limit:
\begin{equation}
\lim_{T\to0^{+}}f_{T}(\mu)=f(\mu).\label{eq:temp-T-obj-to-zero-temp-obj}
\end{equation}
\end{rem}

\subsection{Fermi--Dirac thermal measurements}

\label{subsec:Fermi-Dirac-thermal-measurements}

Observe that the measurement operator $M_{T}(\mu)$ in~\eqref{eq:FD-meas-op}
has the form of a Fermi--Dirac distribution (also known as a sigmoid
or logistic function), and the following quantity in~\eqref{eq:dual-FD-obj}
\begin{equation}
-T\Tr\!\left[\ln\!\left(e^{-\frac{1}{T}\left(H-\mu\cdot Q\right)}+I\right)\right]
\end{equation}
is equal to the fermionic free energy. Additionally, the measurement
operator $I-M_{T}(\mu)$ is also a Fermi--Dirac measuerement operator,
given by
\begin{equation}
I-M_{T}(\mu)=\left(e^{-\frac{1}{T}\left(H-\mu\cdot Q\right)}+I\right)^{-1},\label{eq:other-outcome-FD}
\end{equation}
as a consequence of the scalar identity
\begin{equation}
1-\frac{1}{e^{x}+1}=\frac{e^{x}}{e^{x}+1}=\frac{1}{e^{-x}+1}.\label{eq:scalar-ident-FD-meas-compl}
\end{equation}
Thus, we refer to the measurement $\left(M_{T}(\mu),I-M_{T}(\mu)\right)$
as a Fermi--Dirac thermal measurement and define it formally as follows:
\begin{defn}[Fermi--Dirac thermal measurement]
Given $d\in\mathbb{N}$, a $d\times d$ Hermitian matrix $A$, and
a temperature $T>0$, a Fermi--Dirac thermal measurement is a binary
positive operator-valued measure $\left(M_{T}(A),I-M_{T}(A)\right)$,
where
\begin{align}
M_{T}(A) & \coloneqq\left(e^{\frac{A}{T}}+I\right)^{-1},\label{eq:gen-FD-meas-op}\\
I-M_{T}(A) & =\left(e^{-\frac{A}{T}}+I\right)^{-1}.\label{eq:gen-FD-meas-op-2}
\end{align}
\end{defn}

For fixed $\mu\in\mathbb{R}^{c}$, the following equalities hold in
the zero-temperature limit:
\begin{align}
\lim_{T\to0^{+}}M_{T}(\mu) & =\Pi_{H<\mu\cdot Q}+\frac{1}{2}\Pi_{H=\mu\cdot Q},\label{eq:temp-T-meas-op-to-zero-temp-meas-op}\\
\lim_{T\to0^{+}}\left[I-M_{T}(\mu)\right] & =\Pi_{H>\mu\cdot Q}+\frac{1}{2}\Pi_{H=\mu\cdot Q},
\end{align}
which are a consequence of the following scalar zero-temperature limits:
\begin{align}
\lim_{T\to0^{+}}\frac{1}{e^{\frac{x}{T}}+1} & =u(-x)\coloneqq\begin{cases}
1 & :x<0\\
\frac{1}{2} & :x=0\\
0 & :x>0
\end{cases},\\
\lim_{T\to0^{+}}\frac{1}{e^{-\frac{x}{T}}+1} & =u(x)\coloneqq\begin{cases}
0 & :x<0\\
\frac{1}{2} & :x=0\\
1 & :x>0
\end{cases}.
\end{align}
Thus, the zero-temperature limit recovers a threshold measurement
of the form in~\eqref{eq:optimal-meas-op}.

For fixed $\mu\in\mathbb{R}^{c}$, the following equalities hold in
the infinite-temperature limit:
\begin{equation}
\lim_{T\to\infty}M_{T}(\mu)=\lim_{T\to\infty}\left[I-M_{T}(\mu)\right]=\frac{I}{2},
\end{equation}
as a consequence of the scalar limits
\begin{equation}
\lim_{T\to\infty}\frac{1}{e^{\frac{x}{T}}+1}=\lim_{T\to\infty}\frac{1}{e^{-\frac{x}{T}}+1}=\frac{1}{2}.
\end{equation}
Thus, the infinite-temperature limit recovers an uninformative measurement.

\subsection{Approximation error}

\label{subsec:Approximation-error}

In this section, we establish several different bounds for the approximation
error between the optimal value $E(\mathcal{Q},q)$ of the original
semidefinite program (SDP) in~\eqref{eq:measurement-opt-gen} and
the optimal value $F_{T}(\mathcal{Q},q)$ of the free energy optimization
in~\eqref{eq:primal-FD-obj}. The first bounds that we present are
simple but scale linearly with dimension (Section~\ref{subsec:Simple-uniform-approximation}),
while the second bounds presented are more nuanced and are related
to the spectral gap of Hamiltonians of the form $H-\mu\cdot Q$ (Section
\ref{subsec:Spectral-gap-based-approximation}).

\subsubsection{Simple uniform approximation error bounds}

\label{subsec:Simple-uniform-approximation}

We begin with the simpler
bounds for the approximation error.
\begin{prop}
\label{prop:simple-approx-bnd}Fix $\varepsilon,T>0$, and recall
$d\in\mathbb{N}$ from~\eqref{eq:vector-Herm-ops}. If $T\leq\frac{\varepsilon}{d\ln2}$,
then
\begin{equation}
E(\mathcal{Q},q)\geq F_{T}(\mathcal{Q},q)\geq E(\mathcal{Q},q)-\varepsilon.\label{eq:simple-approx-bnd}
\end{equation}
\end{prop}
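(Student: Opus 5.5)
The plan is to compare the two primal problems directly over the common feasible set
\[
\mathcal{F}\coloneqq\left\{ M\in\mathcal{M}_{d}:\Tr[Q_{i}M]=q_{i}\ \forall i\in[c]\right\},
\]
which is nonempty by Assumption~\ref{assu:dual-existence}. The argument rests on the uniform sandwich
\[
0\leq S_{\FD}(M)\leq d\ln 2\qquad\text{for all }M\in\mathcal{M}_{d},
\]
which follows from~\eqref{eq:FD-entropy-sum-binary-entropies}. Indeed, $S_{\FD}(M)=\sum_{j=1}^{d}s(m_{j})$ with each $m_{j}\in[0,1]$, and the binary entropy satisfies $0\leq s(p)\leq\ln 2$ on $[0,1]$, with the maximum at $p=1/2$.

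For the upper bound $F_{T}(\mathcal{Q},q)\leq E(\mathcal{Q},q)$, take any $M\in\mathcal{F}$. Since $T>0$ and $S_{\FD}(M)\geq 0$,
\[
\Tr[HM]-T\,S_{\FD}(M)\leq\Tr[HM].
\]
Minimizing both sides over $\mathcal{F}$ gives the inequality. The feasible set is compact and both objectives are continuous, so the minima are attained; alternatively one can work with infima, and the same argument goes through.

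For the lower bound, again take any $M\in\mathcal{F}$. Using $S_{\FD}(M)\leq d\ln 2$, we get
\[
\Tr[HM]-T\,S_{\FD}(M)\geq\Tr[HM]-Td\ln 2\geq E(\mathcal{Q},q)-Td\ln 2.
\]
The second step holds because $M$ is feasible for~\eqref{eq:measurement-opt-gen}. Taking the minimum over $M\in\mathcal{F}$ yields $F_{T}(\mathcal{Q},q)\geq E(\mathcal{Q},q)-Td\ln 2$. The hypothesis $T\leq\varepsilon/(d\ln 2)$ gives $Td\ln 2\leq\varepsilon$, and this is exactly~\eqref{eq:simple-approx-bnd}.

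There is no real obstacle. The only points needing care are the scalar fact $\max_{p\in[0,1]}s(p)=\ln 2$ and using the convention $0\ln 0=0$, so that $s$ is continuous at the endpoints; concavity of $s$ together with $s'(1/2)=0$ settles both. The whole proof stays on the primal side, so Proposition~\ref{prop:dual-FD-free-energy} is not needed. One could also argue through the duals in~\eqref{eq:dual-gen-meas-opt} and~\eqref{eq:dual-free-energy-meas-opt}, using the scalar bounds
\[
\min\{x,0\}-T\ln 2\leq -T\ln\!\left(e^{-x/T}+1\right)\leq\min\{x,0\}
\]
applied eigenvalue-wise. That route would require the strong duality results, so the primal argument is the one to present.
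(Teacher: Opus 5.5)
Your proof is correct and follows the paper's own argument in Appendix~\ref{sec:proof-simple-approx-bnd} essentially line for line. Both work on the primal side with an arbitrary feasible $M$, use $S_{\FD}(M)\geq 0$ to get $E(\mathcal{Q},q)\geq F_{T}(\mathcal{Q},q)$, and use $S_{\FD}(M)\leq d\ln 2$ (from~\eqref{eq:FD-entropy-sum-binary-entropies} and $s(p)\leq\ln 2$) to get $F_{T}(\mathcal{Q},q)\geq E(\mathcal{Q},q)-Td\ln 2$, before optimizing over $M$.
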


\begin{proof}
See Appendix~\ref{sec:proof-simple-approx-bnd}. The proof uses the
definitions of $E(\mathcal{Q},q)$ and $F_{T}(\mathcal{Q},q)$ and
the upper bound $S_{\FD}(M)\leq d\ln2$, holding for all $M\in\mathcal{M}_{d}$.
\end{proof}
We now argue how well the following quantity
\begin{equation}
\tilde{f}_{T}(\mu)\coloneqq\mu\cdot q+\Tr\!\left[\left(H-\mu\cdot Q\right)M_{T}(\mu)\right].\label{eq:approx-without-FD-entropy}
\end{equation}
can approximate the optimal value $E(\mathcal{Q},q)$, for a particular
choice of $\mu$. This is useful for hybrid quantum--classical algorithms
that we detail later in Section~\ref{sec:Quantum-algorithm-gen-meas-opt}.
Suppose that $\mu_{T}^{\star}\in\mathbb{R}^{c}$ is an optimal choice
for $F_{T}(\mathcal{Q},q)$, so that
\begin{equation}
F_{T}(\mathcal{Q},q)=f_{T}(\mu_{T}^{\star}).\label{eq:free-energy-opt-dual}
\end{equation}
As a consequence of Lemma~\ref{lem:rewrite-free-energy-rel-ent},
the following equality holds for all $\mu\in\mathbb{R}^{c}$:
\begin{equation}
f_{T}(\mu)=\mu\cdot q+\Tr\!\left[\left(H-\mu\cdot Q\right)M_{T}(\mu)\right]-T\cdot S_{\FD}(M_{T}(\mu)).
\end{equation}
In applications, we may alternatively wish to output the value $\tilde{f}_{T}(\mu_{T}^{\star})$
as an approximation of $f_{T}(\mu_{T}^{\star})$. That is, the expression
in~\eqref{eq:approx-without-FD-entropy} omits the term $-T\cdot S_{\FD}(M_{T}(\mu))$.
\begin{prop}
\label{prop:no-FD-ent-simple-approx-err-bnd}Fix $\varepsilon,T>0$,
and recall $d\in\mathbb{N}$ from~\eqref{eq:vector-Herm-ops}. Let
$\mu_{T}^{\star}\in\mathbb{R}^{c}$ be a parameter vector that is
optimal for $F_{T}(\mathcal{Q},q)$, defined in~\eqref{eq:primal-FD-obj}.
If $T\leq\frac{\varepsilon}{d\ln2}$, then
\begin{equation}
\left|\tilde{f}_{T}(\mu)-E(\mathcal{Q},q)\right|\leq\varepsilon.\label{eq:simple-approx-bnd-no-FD-ent}
\end{equation}
\end{prop}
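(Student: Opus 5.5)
The plan is to sandwich $\tilde{f}_{T}(\mu_{T}^{\star})$ between $E(\mathcal{Q},q)-\varepsilon$ and $E(\mathcal{Q},q)+\varepsilon$. The statement writes $\tilde{f}_{T}(\mu)$, but we read it at $\mu=\mu_{T}^{\star}$, since that is the only choice for which it can hold. The starting point is the identity from Lemma~\ref{lem:rewrite-free-energy-rel-ent}:
\begin{equation}
\tilde{f}_{T}(\mu_{T}^{\star})=f_{T}(\mu_{T}^{\star})+T\cdot S_{\FD}(M_{T}(\mu_{T}^{\star}))=F_{T}(\mathcal{Q},q)+T\cdot S_{\FD}(M_{T}(\mu_{T}^{\star})).
\end{equation}
The second equality uses \eqref{eq:free-energy-opt-dual} together with Proposition~\ref{prop:dual-FD-free-energy}.

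For the lower bound, drop the nonnegative entropy term, so that $\tilde{f}_{T}(\mu_{T}^{\star})\geq F_{T}(\mathcal{Q},q)$. Then Proposition~\ref{prop:simple-approx-bnd} (valid since $T\leq\frac{\varepsilon}{d\ln2}$) gives $F_{T}(\mathcal{Q},q)\geq E(\mathcal{Q},q)-\varepsilon$.

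For the upper bound, use $S_{\FD}(M)\leq d\ln2$ on $\mathcal{M}_{d}$, which holds because each binary entropy is at most $\ln2$. This gives $\tilde{f}_{T}(\mu_{T}^{\star})\leq F_{T}(\mathcal{Q},q)+Td\ln2\leq F_{T}(\mathcal{Q},q)+\varepsilon$. Proposition~\ref{prop:simple-approx-bnd} also gives $F_{T}(\mathcal{Q},q)\leq E(\mathcal{Q},q)$, so $\tilde{f}_{T}(\mu_{T}^{\star})\leq E(\mathcal{Q},q)+\varepsilon$. Combining the two bounds gives \eqref{eq:simple-approx-bnd-no-FD-ent}.

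There is an alternative route to the upper bound. Since $M_{T}(\mu_{T}^{\star})$ is optimal for \eqref{eq:primal-FD-obj}, it satisfies the constraints $\Tr[Q_{i}M]=q_{i}$. Hence $\tilde{f}_{T}(\mu_{T}^{\star})=\Tr[HM_{T}(\mu_{T}^{\star})]$, which is at least $E(\mathcal{Q},q)$ and so also gives a one-sided bound. This sharpens the lower side but is not needed.

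The main obstacle is justifying the existence of $\mu_{T}^{\star}$ and the identity $F_{T}=f_{T}(\mu_{T}^{\star})$, since the dual is a supremum that may not be attained. Here I will simply take attainment as given by the hypothesis of the statement. I will invoke Proposition~\ref{prop:dual-FD-free-energy} and Lemma~\ref{lem:rewrite-free-energy-rel-ent} as stated.
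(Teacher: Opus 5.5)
Your proof is correct and follows essentially the same route as the paper's. You write $\tilde{f}_{T}(\mu_{T}^{\star})=F_{T}(\mathcal{Q},q)+T\cdot S_{\FD}(M_{T}(\mu_{T}^{\star}))$, drop the nonnegative entropy term and invoke Proposition~\ref{prop:simple-approx-bnd} for the lower bound, and use $T\cdot S_{\FD}(M_{T}(\mu_{T}^{\star}))\leq Td\ln2\leq\varepsilon$ together with $F_{T}(\mathcal{Q},q)\leq E(\mathcal{Q},q)$ for the upper bound. Your side remark is valid but mislabeled: feasibility of $M_{T}(\mu_{T}^{\star})$ (which follows from the vanishing gradient $q_{i}-\Tr[M_{T}(\mu_{T}^{\star})Q_{i}]=0$ at the unconstrained maximizer) gives $\tilde{f}_{T}(\mu_{T}^{\star})=\Tr[HM_{T}(\mu_{T}^{\star})]\geq E(\mathcal{Q},q)$, which is a sharper \emph{lower} bound, not an alternative route to the upper bound.
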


\begin{proof}
See Appendix~\ref{sec:no-FD-ent-simple-approx-err-bnd}.
\end{proof}

\subsubsection{Spectral-gap based approximation error bounds}

\label{subsec:Spectral-gap-based-approximation}We now present the
more nuanced appromixation error bounds, based on spectral gap and
groundspace degeneracy.
\begin{prop}
\label{prop:approx-error-spectral-gap}Let $T>0$. Let $\mu^{\star},\mu_{T}^{\star}\in\mathbb{R}^{c}$
be parameter vectors that are optimal for $E(\mathcal{Q},q)$ and
$F_{T}(\mathcal{Q},q)$ defined in~\eqref{eq:measurement-opt-gen}
and~\eqref{eq:primal-FD-obj}, respectively. Suppose that $\mu^{\star}$
and $\mu_{T}^{\star}$ are in a parameter set $\mathbb{M}$. Suppose
that the following bound holds:
\begin{equation}
\sup_{\mu\in\mathbb{M}}\dim\ker(H-\mu\cdot Q)\leq d_{0},\label{eq:def-d0}
\end{equation}
where $d_{0}\in\mathbb{N}$. Suppose that there exists $\Delta>0$
such that
\begin{equation}
\inf_{\mu\in\mathbb{M}}\min_{i:\lambda_{i}\neq0}\left|\lambda_{i}(H-\mu\cdot Q)\right|\geq\Delta,\label{eq:def-Delta}
\end{equation}
where $\lambda_{i}(H-\mu\cdot Q)$ denotes the $i$th eigenvalue of
$H-\mu\cdot Q$. Then the following bound holds:
\begin{equation}
F_{T}(\mathcal{Q},q)\geq E(\mathcal{Q},q)-T\left(d_{0}\ln2+\left(d-d_{0}\right)e^{-\frac{\Delta}{T}}\right).\label{eq:temp-depend-bnd-d0-spec-gap}
\end{equation}
Thus, for all $\varepsilon>0$, if the temperature $T$ satisfies
the following upper bound
\begin{equation}
T\leq\min\left\{ \frac{\varepsilon}{2d_{0}\ln2},\frac{\Delta}{\ln\!\left(\frac{1}{\ln2}\left(\frac{d-d_{0}}{d_{0}}\right)\right)}\right\} ,\label{eq:temp-eps-error-spec-gap}
\end{equation}
then 
\begin{equation}
E(\mathcal{Q},q)\geq F_{T}(\mathcal{Q},q)\geq E(\mathcal{Q},q)-\varepsilon.
\end{equation}
\end{prop}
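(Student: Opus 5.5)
The plan is to work entirely on the dual side, using Proposition~\ref{prop:dual-gen-meas-opt} and Proposition~\ref{prop:dual-FD-free-energy}, and to compare the two dual objectives eigenvalue by eigenvalue at the single point $\mu^{\star}$.

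\textbf{Upper bound.} The inequality $E(\mathcal{Q},q)\geq F_{T}(\mathcal{Q},q)$ is immediate. Indeed, $S_{\FD}(M)\geq0$ for every $M\in\mathcal{M}_{d}$, so the free-energy objective is pointwise no larger than $\Tr[HM]$ on the same feasible set. This is the easy half of Proposition~\ref{prop:simple-approx-bnd}.

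\textbf{Lower bound at $\mu^{\star}$.} Since $F_{T}(\mathcal{Q},q)=\sup_{\mu}f_{T}(\mu)$ and $E(\mathcal{Q},q)=f(\mu^{\star})$, we get
\begin{equation}
F_{T}(\mathcal{Q},q)\geq f_{T}(\mu^{\star})=E(\mathcal{Q},q)+\bigl(f_{T}(\mu^{\star})-f(\mu^{\star})\bigr).
\end{equation}
So only the gap at $\mu^{\star}$ needs bounding, and only $\mu^{\star}\in\mathbb{M}$ is used; the hypothesis on $\mu_{T}^{\star}$ is not needed for this direction.

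\textbf{Scalar identity.} Diagonalize $H-\mu^{\star}\cdot Q$ with eigenvalues $\lambda_{i}$. For every real $x$,
\begin{equation}
-T\ln\!\left(e^{-x/T}+1\right)-\min\{x,0\}=-T\ln\!\left(1+e^{-|x|/T}\right).
\end{equation}
This follows by factoring out $e^{-x/T}$ when $x<0$. Summing over eigenvalues gives
\begin{equation}
f_{T}(\mu^{\star})-f(\mu^{\star})=-T\sum_{i}\ln\!\left(1+e^{-|\lambda_{i}|/T}\right).
\end{equation}

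\textbf{Counting terms.} Let $k=\dim\ker(H-\mu^{\star}\cdot Q)\leq d_{0}$.
\begin{itemize}
\item Each zero eigenvalue contributes exactly $\ln2$.
\item Each nonzero eigenvalue has $|\lambda_{i}|\geq\Delta$ by \eqref{eq:def-Delta}. Its term is therefore at most $a\coloneqq\ln(1+e^{-\Delta/T})\leq\min\{\ln2,\,e^{-\Delta/T}\}$, using $\ln(1+y)\leq y$.
\end{itemize}
The sum is thus at most $k\ln2+(d-k)a$. Because $a\leq\ln2$, this is nondecreasing in $k$, so it is at most $d_{0}\ln2+(d-d_{0})e^{-\Delta/T}$. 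This is the monotonicity point I expect to need care: one must not simply substitute $k=d_{0}$ without noting $a\leq\ln2$. Combined with the upper bound, this gives \eqref{eq:temp-depend-bnd-d0-spec-gap}.

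\textbf{Temperature condition.} The second constraint in \eqref{eq:temp-eps-error-spec-gap} means $\Delta/T\geq\ln\!\bigl(\frac{d-d_{0}}{d_{0}\ln2}\bigr)$, i.e.\ $(d-d_{0})e^{-\Delta/T}\leq d_{0}\ln2$. If the logarithm is nonpositive, then $\frac{d-d_{0}}{d_{0}\ln2}\leq1$ and this inequality holds automatically, so that constraint should be read as vacuous. The error term is then at most $2Td_{0}\ln2$, which is at most $\varepsilon$ by the first constraint. This yields $E(\mathcal{Q},q)\geq F_{T}(\mathcal{Q},q)\geq E(\mathcal{Q},q)-\varepsilon$.

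The only genuine obstacle is the bookkeeping in the counting step above; everything else follows directly from the dual formulas.
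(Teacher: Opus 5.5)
Your proposal is correct and follows essentially the same route as the paper: both compute $f_{T}(\mu)-f(\mu)=-T\sum_{i}\ln\!\left(1+e^{-|\lambda_{i}|/T}\right)$ on the dual side and bound it via $d_{0}$ and $\Delta$ at a point of $\mathbb{M}$ containing $\mu^{\star}$. The paper phrases this as taking a supremum over $\mathbb{M}$, which likewise uses only $\mu^{\star}\in\mathbb{M}$.

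Your monotonicity-in-$k$ argument, using $\ln(1+e^{-\Delta/T})\leq\ln2$, is the right way to handle the counting. The paper's step (c) replaces $\sum_{i:\lambda_{i}\neq0}d_{i}$ by $d-d_{0}$. Taken on its own, that inequality goes the wrong direction whenever $\dim\ker(H-\mu\cdot Q)<d_{0}$. It is only valid when combined with step (a), exactly as you do.
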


\begin{proof}
See Appendix~\ref{sec:proof-approx-error-spectral-gap}.
\end{proof}
The bound in Proposition~\ref{prop:approx-error-spectral-gap} implies
that, in the case that the spectral gap $\Delta$ decays inverse polynomially
with $\ln d$ and if $d_{0}$ is polynomial in $\ln d$, then the
dependence of the temperature $T$ on the underlying dimension is
inverse polynomial in $\ln d$. If this is the case for a given choice
of $\mathcal{Q}$ in~\eqref{eq:vector-Herm-ops}, then this temperature
dependence is much more favorable when compared to that given by the
uniform bound in Proposition~\ref{prop:simple-approx-bnd}. Later
on, we see that the temperature dependence plays a direct role in
the runtime of hybrid quantum--classical algorithms for solving the
general measurement optimization problem in~\eqref{eq:measurement-opt-gen}.
For a choice of $\mathcal{Q}$ in~\eqref{eq:vector-Herm-ops} such
that the assumptions in Proposition~\ref{prop:approx-error-spectral-gap}
hold with $\Delta^{-1}$ and $d_{0}$ polynomial in $\ln d$, the
resulting hybrid quantum--classical are efficient.

We are again interested in how well the quantity $\tilde{f}_{T}(\mu)$
in~\eqref{eq:approx-without-FD-entropy} can approximate the optimal
value $E(\mathcal{Q},q)$, for a particular choice of $\mu$. As an
intermediate step, the following lemma provides an upper bound on
the quantity $T\cdot S_{\FD}(M_{T}(A))$, which is useful for this
purpose:
\begin{prop}
\label{prop:FD-entropy-up-bnd}Let $d\in\mathbb{N}$, $T>0$, and
let $A$ be a $d\times d$ Hermitian matrix such that
\begin{align}
d_{0} & \coloneqq\dim\ker(A),\label{eq:kernel-dim}\\
\Delta & \coloneqq\min_{i:\lambda_{i}\neq0}\left|\lambda_{i}(A)\right|,\label{eq:spectral-gap-def-FD-ent-bnd}
\end{align}
where $\lambda_{i}(A)$ denotes the $i$th eigenvalue of $A$. Then
\begin{equation}
T\cdot S_{\FD}(M_{T}(A))\leq T\left[d_{0}\ln2+\left(d-d_{0}\right)s\!\left(\frac{1}{e^{\Delta/T}+1}\right)\right],\label{eq:FD-entropy-up-bnd-1}
\end{equation}
where $M_{T}(A)$ is defined in~\eqref{eq:gen-FD-meas-op} and the
binary entropy $s(\cdot)$ is defined in~\eqref{eq:binary-entropy-def}.
Thus, for all $\varepsilon>0$, if
\begin{equation}
T\leq\min\left\{ T_{0},\frac{\varepsilon}{d_{0}2\ln2},\frac{\Delta}{\ln\!\left(\frac{2\left(d-d_{0}\right)\left(2T_{0}+\Delta\right)}{\varepsilon}\right)}\right\} ,\label{eq:temp-depend-eps-err-FD-ent}
\end{equation}
where $T_{0}>0$, then
\begin{equation}
T\cdot S_{\FD}(M_{T}(A))\leq\varepsilon.\label{eq:FD-entropy-up-bnd-2}
\end{equation}
\end{prop}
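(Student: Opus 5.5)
We work in an eigenbasis of $A$. Since $M_T(A)=(e^{A/T}+I)^{-1}$ is a function of $A$, its eigenvalues are $m_i = (e^{\lambda_i/T}+1)^{-1}$, and by \eqref{eq:FD-entropy-sum-binary-entropies} we have $S_{\FD}(M_T(A))=\sum_i s(m_i)$. We split the sum into kernel and non-kernel eigenvalues.

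\emph{Kernel eigenvalues.} There are $d_0$ eigenvalues with $\lambda_i=0$, so $m_i=1/2$ and each contributes $s(1/2)=\ln 2$.

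\emph{Non-kernel eigenvalues.} For the remaining $d-d_0$ eigenvalues we have $|\lambda_i|\geq\Delta$. Since $s(p)=s(1-p)$ and $1-(e^{x}+1)^{-1}=(e^{-x}+1)^{-1}$ by \eqref{eq:scalar-ident-FD-meas-compl}, we get $s(m_i)=s\bigl((e^{|\lambda_i|/T}+1)^{-1}\bigr)$. The argument $(e^{|\lambda_i|/T}+1)^{-1}$ lies in $(0,(e^{\Delta/T}+1)^{-1}]\subseteq(0,1/2]$, and $s$ is increasing on $[0,1/2]$. Hence each such term is at most $s\bigl((e^{\Delta/T}+1)^{-1}\bigr)$. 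Multiplying the two contributions by $T$ gives \eqref{eq:FD-entropy-up-bnd-1}.

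\emph{The $\varepsilon$-bound.} We make each of the two pieces at most $\varepsilon/2$. The condition $T\leq \varepsilon/(2d_0\ln2)$ immediately gives $Td_0\ln 2\leq\varepsilon/2$.

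For the second piece, write $p=(e^{\Delta/T}+1)^{-1}$ and $x=\Delta/T$. A direct computation gives
\[
s(p)=\ln(1+e^{-x})+\frac{x}{e^{x}+1}.
\]
Using $\ln(1+y)\leq y$ and $(e^x+1)^{-1}\leq e^{-x}$, we obtain $s(p)\leq (1+x)e^{-x}$. Therefore
\[
T s(p)\leq (T+\Delta)e^{-\Delta/T}\leq (2T_0+\Delta)e^{-\Delta/T},
\]
where the last step uses $T\leq T_0$; the factor $2T_0$ just gives slack. It then suffices to have
\[
(d-d_0)(2T_0+\Delta)e^{-\Delta/T}\leq \varepsilon/2,
\]
which is equivalent to
\[
T\leq \Delta/\ln\!\bigl(2(d-d_0)(2T_0+\Delta)/\varepsilon\bigr).
\]
This is exactly the third term in \eqref{eq:temp-depend-eps-err-FD-ent}. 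Adding the two pieces yields \eqref{eq:FD-entropy-up-bnd-2}.

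\emph{Main obstacle and edge cases.} The only delicate step is choosing the elementary upper bound on the binary entropy near $0$ so that the resulting threshold matches the stated form. The bound $s(p)\leq(1+x)e^{-x}$ accomplishes this, and the role of $T_0$ is to absorb the $T$ that multiplies $e^{-x}$. Two edge cases need a remark. If the logarithm in the third term is nonpositive, the condition is read as vacuous; in that case $(d-d_0)(2T_0+\Delta)\leq\varepsilon/2$ and the bound holds anyway since $e^{-\Delta/T}\leq1$. The cases $d_0=0$ and $d=d_0$ are handled in the same way.
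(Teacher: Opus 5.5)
Your proof is correct and follows the paper's argument: the same kernel/non-kernel split, the symmetry $s(p)=s(1-p)$ combined with monotonicity of $s$ on $[0,1/2]$, and the same $\varepsilon/2$ splitting with $T\leq T_0$ absorbing the prefactor of $e^{-\Delta/T}$. The only difference is the elementary entropy estimate. You use the exact identity $s(p)=\ln(1+e^{-x})+x/(e^{x}+1)$ to get $T s(p)\leq (T+\Delta)e^{-\Delta/T}$. The paper instead uses $s(p)\leq p(1-\ln p)$ together with $\ln(e^{x}+1)\leq x+1$ to get $(2T+\Delta)/(e^{\Delta/T}+1)$, which is where the factor $2T_0$ in the threshold comes from.
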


\begin{proof}
See Appendix~\ref{sec:FD-entropy-up-bnd-proof}.
\end{proof}
We can now combine the bounds from Propositions~\ref{prop:approx-error-spectral-gap}
and~\ref{prop:FD-entropy-up-bnd} to determine how well the value
$\tilde{f}_{T}(\mu_{T}^{\star})$ approximates $E(\mathcal{Q},q)$:
\begin{prop}
\label{prop:approx-err-no-FD-obj}Let $T>0$. Let $\mu^{\star},\mu_{T}^{\star}\in\mathbb{R}^{c}$
be parameter vectors that are optimal for $E(\mathcal{Q},q)$ and
$F_{T}(\mathcal{Q},q)$ defined in~\eqref{eq:measurement-opt-gen}
and~\eqref{eq:primal-FD-obj}, respectively. Suppose that $\mu^{\star}$
and $\mu_{T}^{\star}$ are in a parameter set $\mathbb{M}$. Suppose
that the following bound holds:
\begin{equation}
\sup_{\mu\in\mathbb{M}}\dim\ker(H-\mu\cdot Q)\leq d_{0},\label{eq:def-d0-1}
\end{equation}
where $d_{0}\in\mathbb{N}$. Suppose that there exists $\Delta>0$
such that
\begin{equation}
\inf_{\mu\in\mathbb{M}}\min_{i:\lambda_{i}\neq0}\left|\lambda_{i}(H-\mu\cdot Q)\right|\geq\Delta,\label{eq:def-Delta-1}
\end{equation}
where $\lambda_{i}(H-\mu\cdot Q)$ denotes the $i$th eigenvalue of
$H-\mu\cdot Q$. Then the following bounds hold:
\begin{multline}
E(\mathcal{Q},q)+T\left[d_{0}\ln2+\left(d-d_{0}\right)s\!\left(\frac{1}{e^{\Delta/T}+1}\right)\right]\\
\geq\tilde{f}_{T}(\mu_{T}^{\star})\geq E(\mathcal{Q},q)-T\left(d_{0}\ln2+\left(d-d_{0}\right)e^{-\frac{\Delta}{T}}\right).\label{eq:err-bnd-obj-no-FD-ent-tem-depend}
\end{multline}
Thus, for all $\varepsilon>0$, if
\begin{multline}
T\leq\\
\min\left\{ T_{0},\frac{\varepsilon}{d_{0}2\ln2},\frac{\Delta}{\ln\!\left(\frac{2\left(d-d_{0}\right)\left(2T_{0}+\Delta\right)}{\varepsilon}\right)},\frac{\Delta}{\ln\!\left(\frac{d-d_{0}}{d_{0}\ln2}\right)}\right\} ,\label{eq:temp-threshold-obj-no-FD-ent}
\end{multline}
where $T_{0}>0$, then
\begin{equation}
\left|\tilde{f}_{T}(\mu_{T}^{\star})-E(\mathcal{Q},q)\right|\leq\varepsilon.\label{eq:err-bnd-obj-no-FD-eps}
\end{equation}
\end{prop}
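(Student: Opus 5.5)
The plan is to sandwich $\tilde{f}_{T}(\mu_{T}^{\star})$ between $F_{T}(\mathcal{Q},q)$ and $F_{T}(\mathcal{Q},q)+T\cdot S_{\FD}(M_{T}(\mu_{T}^{\star}))$, and then to bound each side using Propositions~\ref{prop:approx-error-spectral-gap} and~\ref{prop:FD-entropy-up-bnd}. By Lemma~\ref{lem:rewrite-free-energy-rel-ent}, for every $\mu$ we have
\begin{equation}
\tilde{f}_{T}(\mu)=f_{T}(\mu)+T\cdot S_{\FD}(M_{T}(\mu)).
\end{equation}
At $\mu=\mu_{T}^{\star}$, Proposition~\ref{prop:dual-FD-free-energy} together with~\eqref{eq:free-energy-opt-dual} gives $f_{T}(\mu_{T}^{\star})=F_{T}(\mathcal{Q},q)$. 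Therefore
\begin{equation}
\tilde{f}_{T}(\mu_{T}^{\star})=F_{T}(\mathcal{Q},q)+T\cdot S_{\FD}(M_{T}(\mu_{T}^{\star})).
\end{equation}

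For the lower bound in~\eqref{eq:err-bnd-obj-no-FD-ent-tem-depend}, I would drop the nonnegative entropy term to get $\tilde{f}_{T}(\mu_{T}^{\star})\geq F_{T}(\mathcal{Q},q)$. I would then apply~\eqref{eq:temp-depend-bnd-d0-spec-gap} from Proposition~\ref{prop:approx-error-spectral-gap}, whose hypotheses are exactly~\eqref{eq:def-d0-1} and~\eqref{eq:def-Delta-1} with $\mu^{\star},\mu_{T}^{\star}\in\mathbb{M}$.

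For the upper bound, I would use $F_{T}(\mathcal{Q},q)\leq E(\mathcal{Q},q)$, which follows from Proposition~\ref{prop:simple-approx-bnd} or directly because $S_{\FD}\geq0$. I would then apply~\eqref{eq:FD-entropy-up-bnd-1} of Proposition~\ref{prop:FD-entropy-up-bnd} with $A=H-\mu_{T}^{\star}\cdot Q$. There is one small point to handle here. Proposition~\ref{prop:FD-entropy-up-bnd} is stated with the exact kernel dimension $d_{0}'=\dim\ker A\leq d_{0}$ and the exact gap $\Delta'\geq\Delta$. Its right-hand side is monotone in both parameters: it increases in $d_{0}'$ because $s(\cdot)\leq\ln2$, and it decreases in $\Delta'$ because $s(p)$ is increasing on $[0,1/2]$ and $1/(e^{\Delta/T}+1)$ decreases in $\Delta$. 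So I can replace $(d_{0}',\Delta')$ with $(d_{0},\Delta)$, which gives the stated upper bound.

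For the $\varepsilon$ statement, the upper side follows by invoking the second part of Proposition~\ref{prop:FD-entropy-up-bnd}. The first three terms of~\eqref{eq:temp-threshold-obj-no-FD-ent} guarantee $T\cdot S_{\FD}(M_{T}(\mu_{T}^{\star}))\leq\varepsilon$, again via the same monotonicity reduction to $d_{0},\Delta$. For the lower side, I would show $T(d_{0}\ln2+(d-d_{0})e^{-\Delta/T})\leq\varepsilon$. The last term of~\eqref{eq:temp-threshold-obj-no-FD-ent} gives $(d-d_{0})e^{-\Delta/T}\leq d_{0}\ln2$, so the quantity is at most $2Td_{0}\ln2$. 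The second term of~\eqref{eq:temp-threshold-obj-no-FD-ent} then bounds this by $\varepsilon$. This mirrors the argument for~\eqref{eq:temp-eps-error-spec-gap}. Combining the two sides gives~\eqref{eq:err-bnd-obj-no-FD-eps}.

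I expect the main obstacle to be bookkeeping rather than conceptual. First, the identity for $\tilde{f}_{T}$ requires that the supremum in~\eqref{eq:dual-free-energy-meas-opt} be attained at $\mu_{T}^{\star}$, which the statement assumes. Second, the monotonicity argument is needed to pass from the exact spectral data of $H-\mu_{T}^{\star}\cdot Q$ to the uniform bounds $d_{0},\Delta$. Degenerate cases also need a convention: if $d_{0}=0$, the logarithm term must be read as $+\infty$ and the lower side handled directly.
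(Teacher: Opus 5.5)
Your proposal is correct and follows the paper's proof. You sandwich $\tilde{f}_{T}(\mu_{T}^{\star})$ between $F_{T}(\mathcal{Q},q)$ and $F_{T}(\mathcal{Q},q)+T\cdot S_{\FD}(M_{T}(\mu_{T}^{\star}))$, get the lower side from Proposition~\ref{prop:approx-error-spectral-gap}, get the upper side from $F_{T}\leq E$ together with Proposition~\ref{prop:FD-entropy-up-bnd}, and use the same temperature thresholds. Your explicit monotonicity argument, which replaces the exact kernel dimension and gap of $H-\mu_{T}^{\star}\cdot Q$ by the uniform $d_{0}$ and $\Delta$, makes precise a step the paper leaves implicit.
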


\begin{proof}
See Appendix~\ref{sec:approx-err-no-FD-obj}.
\end{proof}

\subsection{Gradient and Hessian of dual objective function}

\label{subsec:Gradient-and-Hessian}

In this section, we derive analytical expressions for the gradient
and the Hessian of the objective function $f_{T}(\mu)$ in~\eqref{eq:dual-FD-obj}.
These expressions are essential in developing gradient-ascent algorithms
for optimizing $f_{T}(\mu)$ and analyzing the performance of such
algorithms. We also prove that the Hessian of $f_{T}(\mu)$ is negative
semidefinite and has bounded spectral norm, thus guaranteeing that
gradient ascent is guaranteed to converge. The expressions for the
Hessian matrix elements can be viewed as being analogous to those
reported in~\cite{liu2025qthermoSDPs} and in~\cite{Patel2025a} for
the Kubo--Mori information matrix of parameterized thermal states.
Indeed, as we show in Appendix~\ref{sec:Continuity-bounds-FD-rel-ent},
the Hessian matrix elements are proportional to those for the information
matrix resulting from the Fermi--Dirac relative entropy. 
\begin{prop}
\label{prop:gradient-FD-gen-obj-func}For $T>0$, the $i$th partial
derivative of the objective function $f_{T}(\mu)$ in~\eqref{eq:dual-FD-obj}
is given by
\begin{equation}
\frac{\partial}{\partial\mu_{i}}f_{T}(\mu)=q_{i}-\Tr\!\left[M_{T}(\mu)Q_{i}\right],
\end{equation}
where the measurement operator $M_{T}(\mu)$ is defined in~\eqref{eq:FD-meas-op}.
\end{prop}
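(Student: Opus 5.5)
The plan is to differentiate the two terms of $f_{T}(\mu)$ in~\eqref{eq:dual-FD-obj} separately. The linear term $\mu\cdot q$ has $i$th partial derivative $q_{i}$. For the trace term, set $A(\mu)\coloneqq H-\mu\cdot Q$, so that $\partial_{\mu_{i}}A(\mu)=-Q_{i}$. Define the scalar function
\begin{equation}
g(x)\coloneqq-T\ln\!\left(e^{-x/T}+1\right),
\end{equation}
so the second term of $f_{T}(\mu)$ is $\Tr[g(A(\mu))]$.

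The key tool is the standard first-order perturbation formula for trace functions. For a continuously differentiable scalar function $g$ and a differentiable Hermitian-matrix-valued path $A(t)$,
\begin{equation}
\frac{d}{dt}\Tr[g(A(t))]=\Tr\!\left[g'(A(t))\,A'(t)\right].
\end{equation}
I would justify this in one of two ways.
\begin{itemize}
\item Prove it first for monomials $g(x)=x^{n}$, using cyclicity of the trace: $\frac{d}{dt}\Tr[A^{n}]=\sum_{k}\Tr[A^{k}A'A^{n-1-k}]=n\Tr[A^{n-1}A']$. Then extend to the smooth function $g$ by polynomial approximation, uniformly together with its derivative on a compact interval containing the spectra.
\item Alternatively, use the Daleckii--Krein divided-difference formula in an eigenbasis of $A$. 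Under the trace only its diagonal terms survive, and these equal $g'(\lambda_{j})$.
\end{itemize}
The only point needing care is that the trace removes all noncommutativity between $A$ and $A'=-Q_{i}$. This is the main, though mild, obstacle, because the operators $Q_{i}$ need not commute with $H$.

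Next, compute the scalar derivative:
\begin{equation}
g'(x)=-T\cdot\frac{-\frac{1}{T}e^{-x/T}}{e^{-x/T}+1}=\frac{e^{-x/T}}{e^{-x/T}+1}=\frac{1}{e^{x/T}+1}.
\end{equation}
Applying this to $A(\mu)$ gives exactly $g'(A(\mu))=\left(e^{\frac{1}{T}(H-\mu\cdot Q)}+I\right)^{-1}=M_{T}(\mu)$, as in~\eqref{eq:FD-meas-op}.

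Combining the pieces yields
\begin{equation}
\partial_{\mu_{i}}f_{T}(\mu)=q_{i}+\Tr[M_{T}(\mu)(-Q_{i})]=q_{i}-\Tr[M_{T}(\mu)Q_{i}],
\end{equation}
which is the claimed formula. Note that $g$ is real-analytic on all of $\mathbb{R}$, so the differentiation formula applies without further regularity concerns.
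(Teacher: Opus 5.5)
Your proposal is correct and rests on the same key fact the paper invokes in its step (a), namely $\frac{\partial}{\partial x}\Tr[g(A(x))]=\Tr[g'(A(x))\,\frac{\partial}{\partial x}A(x)]$, which you additionally justify. The only difference is that you apply it once to the composite scalar function $g(x)=-T\ln(e^{-x/T}+1)$. The paper instead applies it to $\ln$ alone and then handles the inner exponential via Duhamel's formula, cyclicity, and commutation to collapse the integral, a step your route makes unnecessary.
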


\begin{proof}
See Appendix~\ref{sec:Proof-of-gradient-FD-gen}.
\end{proof}
\begin{prop}
\label{prop:hessian-FD-gen-obj-func}For all $c\in\mathbb{N}$, $T>0$,
and $i,j\in\left[c\right]$, the elements of the Hessian matrix $\nabla^{2}f_{T}(\mu)$
for the objective function $f_{T}(\mu)$ in~\eqref{eq:dual-FD-obj}
are given by
\begin{align}
 & \frac{\partial^{2}}{\partial\mu_{i}\partial\mu_{j}}f_{T}(\mu)\nonumber \\
 & =-\frac{1}{T}\int_{0}^{1}ds\,\Tr\!\left[M_{T}(\mu,s)Q_{i}M_{T}(\mu,1-s)Q_{j}\right],\label{eq:1st-hessian-exp}\\
 & =-\frac{1}{T}\Re\!\left[\Tr\!\left[M_{T}(\mu)\Phi_{\mu}(Q_{i})\left(I-M_{T}(\mu)\right)Q_{j}\right]\right],\label{eq:2nd-hessian-exp}
\end{align}
where the measurement operator $M_{T}(\mu,s)$ is defined for all
$s\in\left[0,1\right]$ as
\begin{equation}
M_{T}(\mu,s)\coloneqq\frac{e^{\frac{s}{T}\left(H-\mu\cdot Q\right)}}{e^{\frac{1}{T}\left(H-\mu\cdot Q\right)}+I},\label{eq:meas-op-t-depend}
\end{equation}
and the quantum channel $\Phi_{\mu}(X)$ and the high-peak tent probability
density $\gamma(t)$ are respectively defined as
\begin{align}
\Phi_{\mu}(X) & \coloneqq\int_{-\infty}^{\infty}dt\,\gamma(t)e^{-i\left(H-\mu\cdot Q\right)t/T}Xe^{i\left(H-\mu\cdot Q\right)t/T},\label{eq:Phi-channel}\\
\gamma(t) & \coloneqq\frac{2}{\pi}\ln\left|\coth\!\left(\frac{\pi t}{2}\right)\right|.\label{eq:high-peak-tent-prob-dens}
\end{align}
\end{prop}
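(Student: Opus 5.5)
Starting from Proposition~\ref{prop:gradient-FD-gen-obj-func}, the Hessian is the derivative of the gradient:
\begin{equation}
\frac{\partial^{2}}{\partial\mu_{i}\partial\mu_{j}}f_{T}(\mu)=-\frac{\partial}{\partial\mu_{j}}\Tr\!\left[M_{T}(\mu)Q_{i}\right].
\end{equation}
Write $G\coloneqq\frac{1}{T}(H-\mu\cdot Q)$, so that $M_{T}(\mu)=(e^{G}+I)^{-1}$ and $\partial_{\mu_{j}}G=-Q_{j}/T$. Use the derivative of an inverse, $\partial(X^{-1})=-X^{-1}(\partial X)X^{-1}$, together with Duhamel's formula
\begin{equation}
\partial e^{G}=\int_{0}^{1}ds\,e^{sG}(\partial G)e^{(1-s)G}.
\end{equation}
This gives
\begin{equation}
\partial_{\mu_{j}}M_{T}(\mu)=\frac{1}{T}\int_{0}^{1}ds\,(e^{G}+I)^{-1}e^{sG}Q_{j}e^{(1-s)G}(e^{G}+I)^{-1}.
\end{equation}
Every function of $G$ here commutes with every other. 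Hence $(e^{G}+I)^{-1}e^{sG}=M_{T}(\mu,s)$ and $e^{(1-s)G}(e^{G}+I)^{-1}=M_{T}(\mu,1-s)$. Taking the trace against $Q_{i}$, applying cyclicity, and including the overall minus sign yields \eqref{eq:1st-hessian-exp}. Interchanging the derivative with the trace and the integral is harmless in finite dimensions.

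For \eqref{eq:2nd-hessian-exp}, work in the eigenbasis $G=\sum_{k}g_{k}|k\rangle\!\langle k|$. The integral in \eqref{eq:1st-hessian-exp} becomes
\begin{equation}
\sum_{k,l}\langle k|Q_{i}|l\rangle\langle l|Q_{j}|k\rangle\int_{0}^{1}ds\,\frac{e^{sg_{l}}e^{(1-s)g_{k}}}{(e^{g_{k}}+1)(e^{g_{l}}+1)}.
\end{equation}
The $s$-integral evaluates to $\frac{e^{g_{l}}-e^{g_{k}}}{g_{l}-g_{k}}$, with the diagonal value $e^{g_{k}}$ when $g_{k}=g_{l}$. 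The resulting kernel factors as
\begin{equation}
m_{k}(1-m_{l})\cdot\frac{e^{g_{l}}-e^{g_{k}}}{(g_{l}-g_{k})\,e^{g_{l}}}=m_{k}(1-m_{l})\,\frac{1-e^{-(g_{l}-g_{k})}}{g_{l}-g_{k}},
\end{equation}
where $m_{k}=(e^{g_{k}}+1)^{-1}$ and $1-m_{l}=e^{g_{l}}/(e^{g_{l}}+1)$.

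Next compute the right-hand side of \eqref{eq:2nd-hessian-exp} in the same basis. The channel acts as $\langle k|\Phi_{\mu}(Q_{i})|l\rangle=\hat\gamma(g_{l}-g_{k})\langle k|Q_{i}|l\rangle$, where $\hat\gamma(\omega)\coloneqq\int dt\,\gamma(t)e^{i\omega t}$ is real because $\gamma$ is even. So the trace in \eqref{eq:2nd-hessian-exp} is
\begin{equation}
\sum_{k,l}m_{k}(1-m_{l})\,\hat\gamma(g_{l}-g_{k})\langle k|Q_{i}|l\rangle\langle l|Q_{j}|k\rangle.
\end{equation}

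It remains to reconcile the two kernels. Since the Hessian is real and symmetric, compare symmetrized kernels under the swap $k\leftrightarrow l$ (i.e.\ $\omega\to-\omega$, $\omega\coloneqq g_{l}-g_{k}$). Taking the real part of the trace sum symmetrizes it, because swapping $k,l$ complex-conjugates the matrix-element product once $Q_{i}$ and $Q_{j}$ are swapped, and $i\leftrightarrow j$ symmetry of the Hessian permits this. Using $m_{l}(1-m_{k})=m_{k}(1-m_{l})e^{-\omega}$, both symmetrized kernels become $m_{k}(1-m_{l})$ times a function of $\omega$. The identity to verify is
\begin{equation}
\hat\gamma(\omega)\bigl(1+e^{-\omega}\bigr)/2=\frac{1-e^{-\omega}}{\omega},
\end{equation}
or equivalently $\hat\gamma(\omega)=\frac{\tanh(\omega/2)}{\omega/2}$.

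This Fourier identity for the high-peak tent density is the main obstacle. I would take it from the literature on Kubo--Mori metrics and quantum Boltzmann machine gradients (Patel et al.), where exactly $\int\gamma(t)e^{i\omega t}dt=\tanh(\omega/2)/(\omega/2)$ is established. Alternatively, one can verify it by computing the inverse Fourier transform of $\tanh(\omega/2)/(\omega/2)$ via a partial-fraction expansion of $\tanh$ and recovering $\frac{2}{\pi}\ln|\coth(\pi t/2)|$. Positivity and normalization of $\gamma$ ($\hat\gamma(0)=1$) confirm that $\Phi_{\mu}$ is a channel, namely a mixture of unitaries.
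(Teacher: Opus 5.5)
Your proof is correct. For \eqref{eq:1st-hessian-exp} you follow the paper exactly (derivative of the inverse plus Duhamel's formula). To match the stated order of $Q_i$ and $Q_j$ you also need the substitution $s\mapsto 1-s$ alongside cyclicity.

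For \eqref{eq:2nd-hessian-exp} your route differs. The paper never passes through the $s$-integral. It returns to the intermediate form $\Tr[M_T(\mu)(\partial_{\mu_i}e^{G})M_T(\mu)Q_j]$ and substitutes the operator identity $\partial e^{A}=\frac12\{\Phi_A(\partial A),e^{A}\}$, imported from Patel et al. It then uses $e^{G}M_T(\mu)=I-M_T(\mu)$. The two anticommutator terms are then complex conjugates of each other, so the real part appears directly with no kernel comparison. You instead diagonalize $G$, compute both kernels, and reduce everything to the scalar identity $\hat\gamma(\omega)=\tanh(\omega/2)/(\omega/2)$, which is exactly the eigenbasis content of that operator identity.

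The paper's version is shorter and basis-free but rests entirely on the citation. Yours isolates the single nontrivial input and makes it checkable by hand. Expanding $\ln\coth(\pi|t|/2)=2\sum_{n\,\mathrm{odd}}e^{-n\pi|t|}/n$ gives $\hat\gamma(\omega)=8\sum_{n\,\mathrm{odd}}(n^2\pi^2+\omega^2)^{-1}$, which is the partial-fraction expansion of $\tanh(\omega/2)/(\omega/2)$.

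One small correction: the $k\leftrightarrow l$ symmetrization follows from Hermiticity alone, since $\overline{\langle k|Q_i|l\rangle\langle l|Q_j|k\rangle}=\langle l|Q_i|k\rangle\langle k|Q_j|l\rangle$. Your appeal to $i\leftrightarrow j$ symmetry is unnecessary, and it would be circular on the side of \eqref{eq:2nd-hessian-exp}, whose symmetry in $i,j$ is not known in advance.
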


\begin{proof}
See Appendix~\ref{sec:Proof-of-Hessian-FD-gen-obj-func}.
\end{proof}
\begin{prop}
\label{prop:hessian-NSD-spec-up-bnd}For all $T>0$, the Hessian matrix
$\nabla^{2}f_{T}(\mu)$ is negative semidefinite; i.e., for all $v\in\mathbb{R}^{c}$,
\begin{equation}
v^{T}\nabla^{2}f_{T}(\mu)v\leq0,
\end{equation}
and its spectral norm is bounded for all $\mu\in\mathbb{R}^{c}$ as
follows:
\begin{equation}
\left\Vert \nabla^{2}f_{T}(\mu)\right\Vert \leq L_{T}\coloneqq\frac{1}{T}\sum_{i\in\left[c\right]}\left\Vert Q_{i}\right\Vert _{1}\left\Vert Q_{i}\right\Vert .\label{eq:smoothness-parameter}
\end{equation}
\end{prop}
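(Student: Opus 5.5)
Both claims will follow from the integral form \eqref{eq:1st-hessian-exp} of the Hessian in Proposition~\ref{prop:hessian-FD-gen-obj-func}. Fix $\mu$ and write $K\coloneqq H-\mu\cdot Q$ with spectral decomposition $K=\sum_k\kappa_k|k\rangle\!\langle k|$. Then each $M_T(\mu,s)=\sum_k g_s(\kappa_k)|k\rangle\!\langle k|$, where
\begin{equation}
g_s(x)\coloneqq\frac{e^{sx/T}}{e^{x/T}+1}.
\end{equation}
This function is strictly positive for every $s\in[0,1]$.

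For negative semidefiniteness, take $v\in\mathbb{R}^c$ and set $Q_v\coloneqq\sum_i v_iQ_i$, which is Hermitian. Bilinearity of \eqref{eq:1st-hessian-exp} gives
\begin{equation}
v^T\nabla^2 f_T(\mu)v=-\frac{1}{T}\int_0^1 ds\,\Tr[M_T(\mu,s)Q_vM_T(\mu,1-s)Q_v].
\end{equation}
Expanding the trace in the eigenbasis of $K$ gives
\begin{equation}
\Tr[M_T(\mu,s)Q_vM_T(\mu,1-s)Q_v]=\sum_{k,l}g_s(\kappa_k)g_{1-s}(\kappa_l)\left|\langle k|Q_v|l\rangle\right|^2\ge0.
\end{equation}
Hence the Hessian quadratic form is $\le0$. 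Alternatively, one can write the trace as $\Tr[(A_sQ_vB_s)^\dagger(A_sQ_vB_s)]$ with $A_s=M_T(\mu,s)^{1/2}$ and $B_s=M_T(\mu,1-s)^{1/2}$.

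For the spectral norm, negative semidefiniteness gives $\|\nabla^2 f_T(\mu)\|=\sup_{\|v\|_2=1}\bigl(-v^T\nabla^2 f_T(\mu)v\bigr)$. I will bound each term of the expansion using
\begin{equation}
g_s(x)g_{1-s}(y)\le\frac{e^{sx/T}}{e^{x/T}+1}\cdot\frac{e^{(1-s)y/T}}{e^{y/T}+1}\le1.
\end{equation}
This holds because $e^{sx/T}\le\max(1,e^{x/T})\le e^{x/T}+1$, and likewise for $y$. The inner trace is therefore at most $\sum_{k,l}|\langle k|Q_v|l\rangle|^2=\|Q_v\|_2^2$. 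That gives $\|Q_v\|_2^2\le\|Q_v\|_1\|Q_v\|$, and the target is $\frac1T\sum_i\|Q_i\|_1\|Q_i\|$.

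The obstacle is passing from $Q_v$ to the individual $Q_i$ with exactly these constants. I will first try the Hilbert--Schmidt route: by Cauchy--Schwarz, $\|Q_v\|_2^2\le\bigl(\sum_i|v_i|\,\|Q_i\|_2\bigr)^2\le\|v\|_2^2\sum_i\|Q_i\|_2^2$. Then $\|Q_i\|_2^2=\Tr[Q_i^2]\le\|Q_i\|\,\|Q_i\|_1$ (Hölder), and $\|v\|_2=1$ yields
\begin{equation}
-v^T\nabla^2f_T(\mu)v\le\frac1T\sum_i\|Q_i\|_1\|Q_i\|=L_T,
\end{equation}
as required. If a sharper bound on the integrand is needed, note that $\int_0^1 g_s(x)g_{1-s}(y)\,ds$ is the divided difference of the Fermi function, which is at most $1/4$; this would only improve the constant. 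So the main care point is just the pointwise bound $g_sg_{1-s}\le1$ together with the Cauchy--Schwarz and Hölder steps.
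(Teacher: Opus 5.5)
Your proof is correct. The negative-semidefiniteness part matches the paper's. The paper writes $-v^{T}\nabla^{2}f_{T}(\mu)v=\frac{1}{T}\int_0^1 ds\,\Tr[W M_T(\mu,1-s)WM_T(\mu,s)]$ with $W=\sum_i v_iQ_i$ and notes the integrand is nonnegative because $WM_T(\mu,1-s)W\geq0$ and $M_T(\mu,s)\geq0$. Your eigenbasis expansion is the same observation.

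For the spectral-norm bound your route differs from the paper's. The paper first bounds every entry by H\"older, $|\partial_i\partial_j f_T(\mu)|\leq\frac{1}{T}\|Q_i\|_1\|Q_j\|$, using $\|M_T(\mu,s)\|\leq 1$. It then uses negative semidefiniteness to write $\|\nabla^2 f_T(\mu)\|\leq\|\nabla^2 f_T(\mu)\|_1=-\Tr[\nabla^2 f_T(\mu)]$ and sums the diagonal bounds.

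You instead bound the Rayleigh quotient directly. The same scalar inequality $g_s\leq1$, applied termwise in the eigenbasis, gives $-v^T\nabla^2 f_T(\mu)v\leq\frac{1}{T}\|Q_v\|_2^2=\frac{1}{T}v^TGv$, where $G$ is the Gram matrix with entries $G_{ij}=\Tr[Q_iQ_j]$. Cauchy--Schwarz and $\|Q_i\|_2^2\leq\|Q_i\|\,\|Q_i\|_1$ then recover $L_T$.

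Each approach has its own advantage:
\begin{itemize}
\item The paper's argument is shorter and gives entrywise Hessian bounds as a by-product.
\item Yours actually proves the Loewner-order statement $-\nabla^2 f_T(\mu)\leq\frac{1}{T}G$. This gives $\|\nabla^2 f_T(\mu)\|\leq\frac{1}{T}\|G\|\leq\frac{1}{T}\sum_i\|Q_i\|_2^2\leq L_T$.
\end{itemize}
That bound can be much smaller than $L_T$. For Hilbert--Schmidt-orthogonal $Q_i$, for example, it is $\frac{1}{T}\max_i\|Q_i\|_2^2$ rather than a sum over $i$. You lose this by applying Cauchy--Schwarz before taking the supremum over $v$.

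Your closing remark is also right, with one clarification. The integral $\int_0^1 g_s(x)g_{1-s}(y)\,ds$ is minus the divided difference of $u\mapsto(e^u+1)^{-1}$ at the rescaled points $x/T$ and $y/T$. Since that function's derivative lies in $[-\frac14,0]$, the integral lies in $[0,\frac14]$. This improves all of these constants by a further factor of $4$.
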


\begin{proof}
See Appendix~\ref{sec:hessian-NSD-spec-up-bnd}.
\end{proof}
\begin{rem}
\label{rem:dual-concave-smoothness-param}As a consequence of Proposition
\ref{prop:hessian-NSD-spec-up-bnd}, the function $f_{T}(\mu)$ is
concave in $\mu$ with smoothness parameter $L_{T}$, as defined in
\eqref{eq:smoothness-parameter}.
\end{rem}

\subsection{Gradient ascent for optimizing Fermi--Dirac thermal measurements}

\label{subsec:Gradient-ascent-gen-meas-opt}

An optimal $\mu_{T}^{\star}$ for~\eqref{eq:dual-free-energy-meas-opt}
depends on the constraint vector $q$ in~\eqref{eq:constraint-vector},
but it does not have a closed form in general. However, as observed
in Remark~\ref{rem:dual-concave-smoothness-param}, the function $f_{T}(\mu)$
is concave in $\mu$. As such, one can search for a value of $\mu$
that maximizes~\eqref{eq:dual-free-energy-meas-opt} by gradient ascent,
which is guaranteed to converge to a point $\delta$-close to the
global maximum of $f_{T}(\mu)$ in $O(1/\delta)$ steps. For the step
size $\eta$ of gradient ascent, it suffices to set $\eta\in\left(0,1/L_{T}\right]$,
where $L_{T}$ is defined in~\eqref{eq:smoothness-parameter}.

This leads to the following algorithm for performing the optimization
in~\eqref{eq:dual-free-energy-meas-opt}:
\begin{lyxalgorithm}
\label{alg:classical-gradient-ascent}A gradient-ascent algorithm
for measurement free-energy optimization consists of the following
steps:
\begin{enumerate}
\item Set $\delta>0$ to be the desired error, set $T\leftarrow\frac{\delta}{4d\ln2}$
to be the temperature, initialize $\mu^{0}\leftarrow\left(0,\ldots,0\right)$,
fix the learning rate $\eta\in\left(0,1/L_{T}\right]$, where $L_{T}$
is given in~\eqref{eq:smoothness-parameter}, and set the number of
steps, $J$, to satisfy $J\geq\frac{L_{T}\left\Vert \mu_{T}^{\star}\right\Vert ^{2}}{\delta}$,
where $\mu_{T}^{\star}$ is an optimal solution to~\eqref{eq:dual-free-energy-meas-opt}.
\item For all $j\in\left[J\right]$ and $i\in\left[c\right]$, set
\begin{equation}
\mu_{i}^{j}\leftarrow\mu_{i}^{j-1}+\eta\left(q_{i}-\Tr\!\left[M_{T}(\mu^{j-1})Q_{i}\right]\right).
\end{equation}
\item Output $\tilde{f}_{T}(\mu^{J})$, as defined in~\eqref{eq:approx-without-FD-entropy}.
\end{enumerate}
\end{lyxalgorithm}

To understand the performance of Algorithm~\ref{alg:classical-gradient-ascent},
consider that there are three sources of error: the error from approximating
$E(\mathcal{Q},q)$ with $F_{T}(\mathcal{Q},q)$, the error from gradient
ascent arriving at an approximate global minimum of $F_{T}(\mathcal{Q},q)$,
and the error from outputting $\tilde{f}_{T}(\mu^{J})$ instead of
$f_{T}(\mu^{J})$. Applying the triangle inequality, we find that
the total error of Algorithm~\ref{alg:classical-gradient-ascent}
is bounded as follows:
\begin{align}
 & \left|\tilde{f}_{T}(\mu^{J})-E(\mathcal{Q},q)\right|\nonumber \\
 & \overset{(a)}{\leq}\left|\tilde{f}_{T}(\mu^{J})-f_{T}(\mu^{J})\right|+\left|f_{T}(\mu^{J})-F_{T}(\mathcal{Q},q)\right|\nonumber \\
 & \qquad+\left|F_{T}(\mathcal{Q},q)-E(\mathcal{Q},q)\right|\\
 & \overset{(b)}{\leq}T\cdot S_{\FD}(M_{T}(\mu^{J}))+\frac{\delta}{2}+\frac{\delta}{4}\\
 & \leq\frac{\delta}{4}+\frac{\delta}{2}+\frac{\delta}{4}=\delta.
\end{align}
The inequality (a) follows from the triangle inequality. The inequality
(b) follows from the definitions of $\tilde{f}_{T}(\mu^{J})$ and
$f_{T}(\mu^{J})$ for bounding the first term, from \cite[Corollary~3.5]{Garrigos2024}
(regarding the convergence of gradient ascent) for bounding the second
term, and from Proposition~\ref{prop:simple-approx-bnd} for bounding
the third term.

Due to the choice of temperature $T$, the runtime of Algorithm~\ref{alg:classical-gradient-ascent}
is linear in the dimension $d$, or equivalently, exponential in $\ln d$,
with $\ln d$ corresponding to the number of qubits needed to specify
the matrices in $\mathcal{Q}$. Suppose instead that information about
the spectral gap $\Delta$ and ground space degeneracy $d_{0}$ of
$H-\mu\cdot Q$ is available for $\mu$ in a parameter set $\mathbb{M}$.
If $\Delta^{-1}$ and $d_{0}$ are polynomial in $\ln d$, then Proposition
\ref{prop:approx-err-no-FD-obj} can be invoked to conclude that the
number of steps needed for Algorithm~\ref{alg:classical-gradient-ascent}
(with the temperature $T$ adjusted appropriately) to converge is
polynomial in $\ln d$. Thus, if these conditions hold, then this
speedup in runtime can be significant, and this plays a role in determining
when we should expect our approach here to lead to efficient hybrid
quantum--classical optimization algorithms for solving the general
measurement optimization problem in~\eqref{eq:measurement-opt-gen}.

\subsection{General measurement optimization with equality and inequality constraints}

\label{subsec:General-measurement-opteq-ineq}

In this section, we briefly highlight a generalization of the measurement
optimization problem that involves inequality constraints in addition
to equality constraints. The main modification is that the inequality
constraints become associated with non-negative Lagrange multipliers.
This generalization to involve inequality constraints is needed to
capture composite quantum hypothesis testing problems of the form
considered in Section~\ref{subsec:Asymmetric-HT-composite-null}. 

Let $b,c,d\in\mathbb{N}$, let
\begin{equation}
\mathcal{Q}\equiv\left(H,Q_{1},\ldots,Q_{c}\right),\qquad\mathcal{P}\equiv\left(P_{1},\ldots,P_{b}\right),\label{eq:vector-Herm-ops-ineq}
\end{equation}
be tuples of $d\times d$ Hermitian matrices, and let
\begin{equation}
q\equiv\left(q_{1},\ldots,q_{c}\right)\in\mathbb{R}^{c},\qquad p\equiv\left(p_{1},\ldots,p_{b}\right)\in\mathbb{R}^{b},\label{eq:constraint-vector-ineq}
\end{equation}

The corresponding measurement optimization problem is as follows:
\begin{equation}
E(\mathcal{Q},\mathcal{P},q,p)\coloneqq\min_{M\in\mathcal{M}_{d}}\left\{ \begin{array}{c}
\Tr\!\left[HM\right]:\\
\Tr\!\left[Q_{i}M\right]=q_{i}\,\forall i\in\left[c\right],\\
\Tr\!\left[P_{j}M\right]\geq p_{j}\,\forall j\in\left[b\right]
\end{array}\right\} .\label{eq:measurement-opt-gen-ineq}
\end{equation}
The dual is given by
\begin{equation}
E(\mathcal{Q},\mathcal{P},q,p)=\sup_{\mu\in\mathbb{R}^{c},\nu\in\mathbb{R}_{+}^{b}}f(\mu,\nu),\label{eq:dual-gen-meas-opt-1}
\end{equation}
where the dual objective function $f(\mu)$ is defined as
\begin{align}
f(\mu,\nu) & \coloneqq\mu\cdot q+\nu\cdot p-\Tr\!\left[\left(H-\mu\cdot Q-\nu\cdot P\right)_{-}\right],\label{eq:dual-unreg-obj-func-1}
\end{align}
For $T>0$, the measurement free-energy minimization problem is given
by
\begin{equation}
F_{T}(\mathcal{Q},\mathcal{P},q,p)\coloneqq\min_{M\in\mathcal{M}_{d}}\left\{ \begin{array}{c}
\Tr\!\left[HM\right]-T\cdot S_{\FD}(M):\\
\Tr\!\left[Q_{i}M\right]=q_{i}\,\forall i\in\left[c\right],\\
\Tr\!\left[P_{j}M\right]\geq p_{j}\,\forall j\in\left[b\right]
\end{array}\right\} ,\label{eq:free-energy-eq-ineq-constraints}
\end{equation}
and the dual is given by
\begin{equation}
\sup_{\mu\in\mathbb{R}^{c},\nu\in\mathbb{R}_{+}^{b}}f_{T}(\mu,\nu),\label{eq:dual-free-energy-eq-ineq-constraints}
\end{equation}
where
\begin{equation}
f_{T}(\mu,\nu)\coloneqq\mu\cdot q+\nu\cdot p-T\Tr\!\left[\ln\!\left(e^{-\frac{1}{T}\left(H-\mu\cdot Q-\nu\cdot P\right)}+I\right)\right].
\end{equation}
The associated Fermi--Dirac thermal measurement operators, which
lead to an optimal form for~\eqref{eq:free-energy-eq-ineq-constraints},
have the following form:
\begin{equation}
M_{T}(\mu,\nu)\coloneqq\left(e^{\frac{1}{T}\left(H-\mu\cdot Q-\nu\cdot P\right)}+I\right)^{-1}.
\end{equation}

Propositions~\ref{prop:simple-approx-bnd},~\ref{prop:no-FD-ent-simple-approx-err-bnd},
\ref{prop:approx-error-spectral-gap}, and~\ref{prop:approx-err-no-FD-obj},
regarding the approximation error between measurement optimization
and measurement free-energy optimization, apply to $F_{T}(\mathcal{Q},\mathcal{P},q,p)$
and $E(\mathcal{Q},\mathcal{P},q,p)$.

Finally, the gradient of $f_{T}(\mu,\nu)$ has the following elements:
\begin{align}
\forall i\in\left[c\right],\quad\frac{\partial}{\partial\mu_{i}}f_{T}(\mu,\nu) & =q_{i}-\Tr\!\left[M_{T}(\mu,\nu)Q_{i}\right],\label{eq:gradient-ineq-eq-constraints-1}\\
\forall j\in\left[b\right],\quad\frac{\partial}{\partial\nu_{j}}f_{T}(\mu,\nu) & =p_{j}-\Tr\!\left[M_{T}(\mu,\nu)P_{j}\right],\label{eq:gradient-ineq-eq-constraints-2}
\end{align}
and the Hessian has the following elements for all $i_{1},i_{2}\in\left[c\right]$
and $j_{1},j_{2}\in\left[b\right]$:
\begin{align}
 & \frac{\partial^{2}}{\partial\mu_{i_{1}}\partial\mu_{i_{2}}}f_{T}(\mu,\nu)\nonumber \\
 & =-\frac{1}{T}\int_{0}^{1}ds\,\Tr\!\left[M_{T}(\mu,\nu,s)Q_{i_{1}}M_{T}(\mu,\nu,1-s)Q_{i_{2}}\right]\\
 & =-\frac{1}{T}\Re\!\left[\Tr\!\left[M_{T}(\mu,\nu)\Phi_{\mu,\nu}(Q_{i_{1}})\left(I-M_{T}(\mu,\nu)\right)Q_{i_{2}}\right]\right],\label{eq:2nd-hessian-exp-1}\\
 & \frac{\partial^{2}}{\partial\mu_{i_{1}}\partial\nu_{j_{1}}}f_{T}(\mu,\nu)\nonumber \\
 & =-\frac{1}{T}\int_{0}^{1}ds\,\Tr\!\left[M_{T}(\mu,\nu,s)Q_{i_{1}}M_{T}(\mu,\nu,1-s)P_{j_{1}}\right]\\
 & =-\frac{1}{T}\Re\!\left[\Tr\!\left[M_{T}(\mu,\nu)\Phi_{\mu,\nu}(Q_{i_{1}})\left(I-M_{T}(\mu,\nu)\right)P_{j_{1}}\right]\right],\\
 & \frac{\partial^{2}}{\partial\nu_{j_{1}}\partial\nu_{j_{2}}}f_{T}(\mu,\nu)\nonumber \\
 & =-\frac{1}{T}\int_{0}^{1}ds\,\Tr\!\left[M_{T}(\mu,\nu,s)P_{j_{1}}M_{T}(\mu,\nu,1-s)P_{j_{2}}\right]\\
 & =-\frac{1}{T}\Re\!\left[\Tr\!\left[M_{T}(\mu,\nu)\Phi_{\mu,\nu}(P_{j_{1}})\left(I-M_{T}(\mu,\nu)\right)P_{j_{2}}\right]\right],
\end{align}
where the measurement operator $M_{T}(\mu,\nu,s)$ is defined for
all $s\in\left[0,1\right]$ as
\begin{equation}
M_{T}(\mu,\nu,s)\coloneqq\frac{e^{\frac{s}{T}\left(H-\mu\cdot Q-\nu\cdot P\right)}}{e^{\frac{1}{T}\left(H-\mu\cdot Q-\nu\cdot P\right)}+I},\label{eq:meas-op-t-depend-1}
\end{equation}
and the quantum channel $\Phi_{\mu,\nu}(X)$ is defined as
\begin{multline}
\Phi_{\mu,\nu}(X)\coloneqq\\
\int_{-\infty}^{\infty}dt\,\gamma(t)e^{-i\left(H-\mu\cdot Q-\nu\cdot P\right)t/T}Xe^{i\left(H-\mu\cdot Q-\nu\cdot P\right)t/T}.
\end{multline}
Also, the high-peak tent probability density $\gamma(t)$ is defined
in~\eqref{eq:high-peak-tent-prob-dens}. The Hessian matrix $\nabla^{2}f_{T}(\mu,\nu)$
is negative semidefinite, and its spectral norm is bounded for all
$\mu\in\mathbb{R}^{c}$ and $\nu\in\mathbb{R}^{b}$ as follows:
\begin{multline}
\left\Vert \nabla^{2}f_{T}(\mu,\nu)\right\Vert \leq\\
\frac{1}{T}\left(\sum_{i\in\left[c\right]}\left\Vert Q_{i}\right\Vert _{1}\left\Vert Q_{i}\right\Vert +\sum_{j\in\left[b\right]}\left\Vert P_{i}\right\Vert _{1}\left\Vert P_{i}\right\Vert \right).\label{eq:hessian-up-bnd-eq-ineq}
\end{multline}
All proofs of the statements in~\eqref{eq:gradient-ineq-eq-constraints-1}--\eqref{eq:hessian-up-bnd-eq-ineq}
mirror those given for Propositions~\ref{prop:gradient-FD-gen-obj-func}--\ref{prop:hessian-NSD-spec-up-bnd}.

\subsection{Solving general semidefinite optimization problems using measurement
optimization}

\label{subsec:Solving-general-SDPs}

A general semidefinite optimization problem in standard form is as
follows \cite[Eq.~(4.51)]{Boyd2004}:
\begin{equation}
\alpha\coloneqq\min_{M\geq0}\left\{ \Tr\!\left[HM\right]:\Tr\!\left[Q_{i}M\right]=q_{i}\,\forall i\in\left[c\right]\right\} ,\label{eq:general-SDP}
\end{equation}
where $H$, $Q_{i}$, and $q_{i}$ are defined as in~\eqref{eq:vector-Herm-ops}--\eqref{eq:constraint-vector}.
The main difference between~\eqref{eq:general-SDP} and~\eqref{eq:measurement-opt-gen}
is the constraint on $M$ (i.e., $M\geq0$ in~\eqref{eq:general-SDP}
while $M\in\mathcal{M}_{d}$ in~\eqref{eq:measurement-opt-gen}).
Similar to what was done previously \cite[Lemma~1]{liu2025qthermoSDPs},
one can reduce a general SDP to a measurement optimization problem
of the form in~\eqref{eq:measurement-opt-gen}. As such, our approach
here can be used to solve general SDPs in standard form.

Let us begin by defining the following SDP as a modification of~\eqref{eq:general-SDP}:
\begin{equation}
\alpha_{R}\coloneqq\min_{M:0\leq M\leq RI}\left\{ \Tr\!\left[HM\right]:\Tr\!\left[Q_{i}M\right]=q_{i}\,\forall i\in\left[c\right]\right\} ,\label{eq:alpha-R-SDP}
\end{equation}
where $R>0$. Since the SDP for $\alpha_{R}$ involves an extra constraint,
the following inequality holds:
\begin{equation}
\alpha_{R}\geq\alpha,
\end{equation}
and it is saturated in the limit as $R\to\infty$:
\begin{equation}
\lim_{R\to\infty}\alpha_{R}=\alpha.
\end{equation}
We can think of $R$ as representing a guess on the spectral norm
of an optimal solution to~\eqref{eq:general-SDP}. If an optimal choice
of $M$ in~\eqref{eq:general-SDP} is such that $M\leq RI$, then
$\alpha_{R}=\alpha$.

By the simple observation that $M'\coloneqq\frac{M}{R}$ is a measurement
operator, we conclude the following reduction of the SDP in~\eqref{eq:alpha-R-SDP}
for $\alpha_{R}$ to a measurement optimization problem of the form
in~\eqref{eq:measurement-opt-gen}.
\begin{lem}
The following equality holds:
\begin{equation}
\alpha_{R}=R\cdot\min_{M\in\mathcal{M}_{d}}\left\{ \Tr\!\left[HM\right]:\Tr\!\left[Q_{i}M\right]=\frac{q_{i}}{R}\,\forall i\in\left[c\right]\right\} .
\end{equation}
\end{lem}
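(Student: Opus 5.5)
The plan is a direct change of variables $M' \coloneqq M/R$, showing that it is a bijection between the feasible sets of the two problems and that it rescales the objective by exactly $R$.

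First, fix $R>0$. The map $M\mapsto M/R$ is a linear bijection on $\Herm_d$, with inverse $M'\mapsto RM'$.

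1. The operator inequalities transform as $0\leq M\leq RI$ if and only if $0\leq M/R\leq I$, because multiplying a positive semidefinite operator by the positive scalar $1/R$ preserves positive semidefiniteness. So $M$ satisfies the semidefinite constraint in~\eqref{eq:alpha-R-SDP} if and only if $M'\in\mathcal{M}_d$, as defined in~\eqref{eq:meas-ops-def}.
2. By linearity of the trace, $\Tr[Q_iM]=q_i$ holds if and only if $\Tr[Q_iM']=q_i/R$, for each $i\in[c]$.
3. Hence the feasible set of~\eqref{eq:alpha-R-SDP} is exactly $R$ times the feasible set of the measurement problem on the right-hand side. On corresponding points the objectives satisfy $\Tr[HM]=R\,\Tr[HM']$.

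Second, take the infimum over corresponding feasible points. Since $R>0$, scaling by $R$ commutes with taking the minimum, which gives $\alpha_R=R\cdot\min_{M'\in\mathcal{M}_d}\{\Tr[HM']:\Tr[Q_iM']=q_i/R\ \forall i\in[c]\}$.

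Two degenerate points remain. If either feasible set is empty, the bijection shows both are, and both sides equal $+\infty$ under the convention of Assumption~\ref{assu:dual-existence}, so the identity still holds. When the sets are nonempty, the minimum is attained because each feasible set is compact: it is the intersection of a closed, bounded operator interval with an affine subspace, and the objective is continuous.

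Every step is a routine verification, so there is no real obstacle. The only point needing care is that $R>0$, which is needed both for the operator inequalities to transform correctly and for the minimum to scale without flipping to a maximum.
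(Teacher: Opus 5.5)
Your proposal is correct and follows the paper's argument, which is exactly the change of variables $M' \coloneqq M/R$ sending the feasible set of~\eqref{eq:alpha-R-SDP} bijectively onto that of the rescaled measurement problem. You fill in the details the paper leaves implicit: the operator inequalities and the linear constraints are preserved because $R>0$, the objective scales by $R$, and you handle the empty-feasible-set and attainment cases.
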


\section{Quantum algorithms for general measurement optimization problems}

\label{sec:Quantum-algorithm-gen-meas-opt}In this section, we provide
several quantum algorithms related to measurement optimization problems,
suitable for implementation on a hybrid architecture consisting of
qubits and qumodes~\cite{Liu2026}. We begin in Section~\ref{subsec:Quantum-algorithm-FD-thermal}
by developing a quantum algorithm for implementing Fermi--Dirac thermal
measurements. It relies on methods of Hamiltonian simulation~\cite{Lloyd1996,Childs2018},
Schr\"odingerization~\cite{Jin2024,Jin2023}, and the power of one
qumode~\cite{Liu2016}, and it works when either the underlying Hamiltonian
$A$ is a local or sparse Hamiltonian or if $A$ is represented as
a linear combination of quantum states, in which case we employ the
methods of density matrix exponentiation~\cite{lloyd2014quantum,Kimmel2017,Go2025,Sims2025}.
In Section~\ref{subsec:Quantum-algorithm-Hessian}, we present a quantum
algorithm for estimating the elements of the Hessian matrix, as given
in~\eqref{eq:2nd-hessian-exp}. This algorithm relies on Hamiltonian
simulation~\cite{Lloyd1996,Childs2018}, Algorithm~\ref{alg:FD-thermal-alg}
for implementing Fermi--Dirac thermal measurements, and the swap
test~\cite{Barenco1997,Buhrman2001}. Finally, in Section~\ref{subsec:HQC-for-gradient-ascent},
we present hybrid quantum--classical algorithms, using both first-
and second-order information, for performing the gradient ascent steps
needed for the optimization in~\eqref{eq:dual-free-energy-meas-opt}.

\subsection{Quantum algorithm for implementing Fermi--Dirac thermal measurements}

\label{subsec:Quantum-algorithm-FD-thermal}

Here we provide a quantum
algorithm for realizing Fermi--Dirac thermal measurements. That is,
suppose that the goal is to realize the Fermi--Dirac thermal measurement
$\left(M_{T}(A),I-M_{T}(A)\right)$, defined for a Hermitian operator
$A$ in~\eqref{eq:gen-FD-meas-op}--\eqref{eq:gen-FD-meas-op-2},
on an input state $\rho$. Then the algorithm should return the value
$0$ with probability $\Tr\!\left[\left(I-M_{T}(A)\right)\rho\right]$
and the value $1$ with probability $\Tr\!\left[M_{T}(A)\rho\right]$.
Under the assumption that one can realize Hamiltonian evolution according
to the Hamiltonian $\hat{x}\otimes A$, where $\hat{x}$ is the position-quadrature
operator, our quantum algorithm for doing so is similar to techniques
used in Schr\"odingerization~\cite{Jin2024,Jin2023} and the power
of one qumode~\cite{Liu2016}. For $T>0$, the algorithm makes use
of the following state vector $|\psi_{T}\rangle$ for a control qumode
register:
\begin{equation}
|\psi_{T}\rangle\coloneqq\int_{-\infty}^{\infty}dp\,\sqrt{\ell_{T}(p)}|p\rangle,\label{eq:control-state-alg-FD}
\end{equation}
where
\begin{equation}
\ell_{T}(p)\coloneqq\frac{e^{p/T}}{T\left(e^{p/T}+1\right)^{2}}
\end{equation}
and $\left\{ |p\rangle\right\} _{p\in\mathbb{R}}$ denotes the momentum
basis. The function $\ell_{T}(p)$ is known as a logistic probability
density function with location parameter $0$ and scale parameter
$T$. The logistic probabiity density function $\ell_{T}(p)$ is equal
to the derivative of the Fermi--Dirac function, i.e.,
\begin{equation}
\frac{e^{p/T}}{T\left(e^{p/T}+1\right)^{2}}=\frac{\partial}{\partial p}\left(\frac{1}{e^{-p/T}+1}\right),
\end{equation}
which is why it is useful in designing a quantum algorithm for realizing
Fermi--Dirac thermal measurements. The state $|\psi_{T_{1}}\rangle\!\langle\psi_{T_{1}}|$
is a non-Gaussian state, with heavier tails than a Gaussian state.
In the limit $T\to0$, the state vector $|\psi_{T}\rangle$ becomes
highly peaked at $p=0$, thus approaching a momentum eigenstate $|p=0\rangle$
in this limit. 
\begin{lyxalgorithm}
\label{alg:FD-thermal-alg}The algorithm for implementing the Fermi--Dirac
thermal measurement $\left(M_{T}(A),I-M_{T}(A)\right)$, where $T=T_{1}T_{2}$,
proceeds as follows:
\begin{enumerate}
\item For $T_{1}>0$, prepare a control qumode register in the state $|\psi_{T_{1}}\rangle\!\langle\psi_{T_{1}}|$
and a data register in the state $\rho$.
\item For $T_{2}>0$, apply the Hamiltonian evolution $e^{-i\hat{x}\otimes A/T_{2}}$
on the control and data registers.
\item Measure the control register with respect to the momentum basis $\left\{ |p\rangle\right\} _{p\in\mathbb{R}}$,
obtaining outcome $p\in\mathbb{R}$.
\item Output $0$ if $p\geq0$ and $1$ if $p<0$.
\end{enumerate}
\end{lyxalgorithm}

\begin{figure}
\begin{centering}
\includegraphics[width=3.1in]{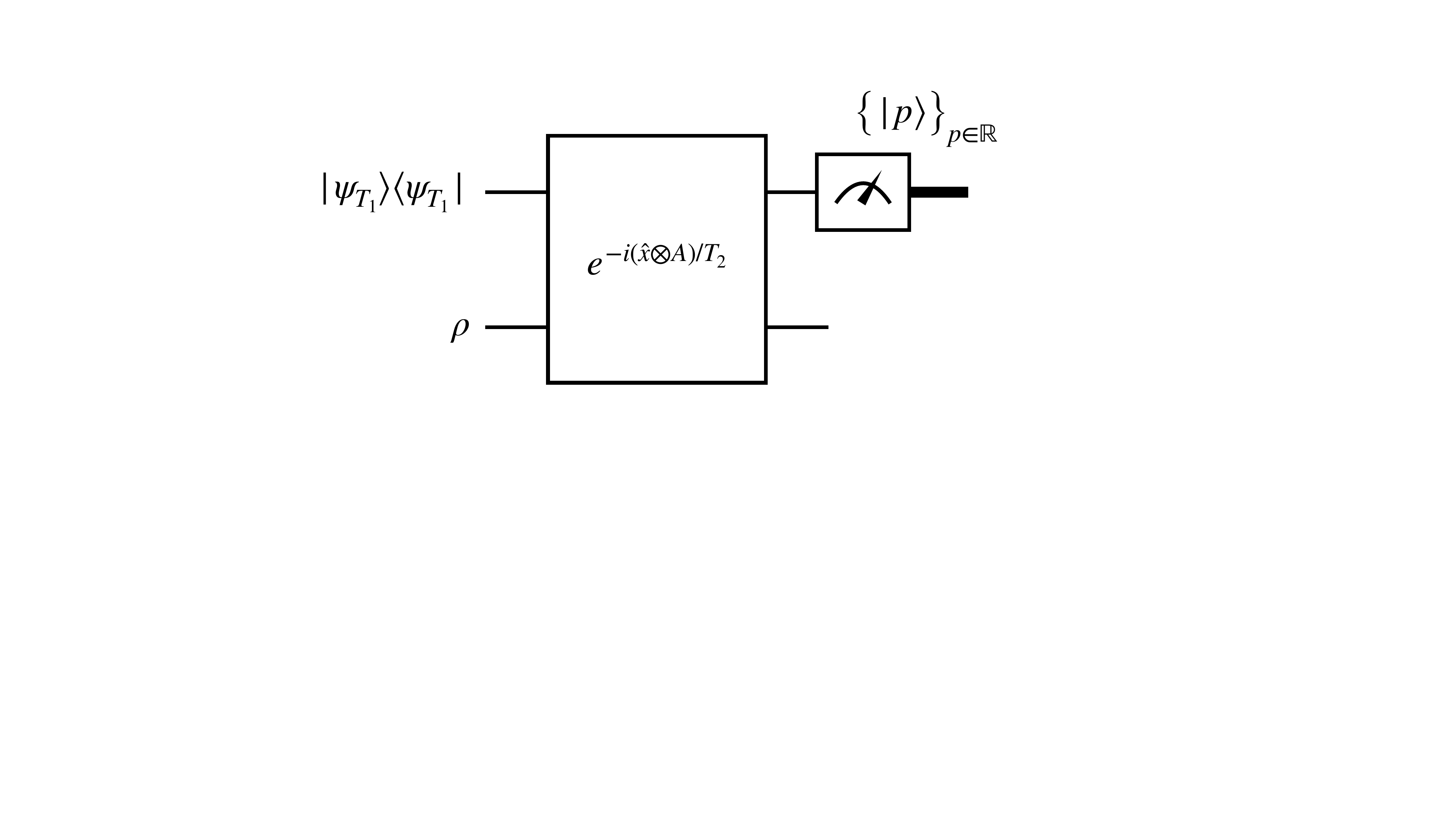}
\par\end{centering}
\caption{Quantum circuit for realizing a Fermi--Dirac thermal measurement
$\left(M_{T}(A),I-M_{T}(A)\right)$ of temperature $T=T_{1}T_{2}$,
where $T_{1},T_{2}>0$, as detailed in Algorithm~\ref{alg:FD-thermal-alg}.
The state $|\psi_{T_{1}}\rangle\!\langle\psi_{T_{1}}|$ of the control
qumode is defined in~\eqref{eq:control-state-alg-FD}, and $\rho$
is the input state on which we would like to perform the desired Fermi--Dirac
thermal measurement. The measurement of the control qumode at the
end is a momentum-quadrature measurement giving an outcome $p\in\mathbb{R}$.
The algorithm finally outputs $0$ if $p\protect\geq0$ and $1$ otherwise.
}\label{fig:fermi-dirac-circuit}

\end{figure}

See Figure~\ref{fig:fermi-dirac-circuit} for a visual depiction of
Algorithm~\ref{alg:FD-thermal-alg}. Interestingly, Algorithm~\ref{alg:FD-thermal-alg}
allows for two temperature parameters $T_{1}$ and $T_{2}$, which
allow for controlling the temperature of the resulting Fermi--Dirac
thermal measurement. In order to lower the temperature of the resulting
Fermi--Dirac thermal measurement, one can choose to keep $T_{1}$
fixed while lowering $T_{2}$, which amounts to increasing the time
needed for Hamiltonian evolution. Alternatively, one can keep $T_{2}$
fixed while lowering $T_{1}$, which amounts to keeping the Hamiltonian
evolution time fixed and instead producing a state $|\psi_{T_{1}}\rangle\!\langle\psi_{T_{1}}|$
that is more highly squeezed with respect to the momentum quadrature
(and thus more difficult to prepare). Proposition~\ref{prop:implementing-FD-measurements}
below asserts that the temperature of the implemented Fermi--Dirac
thermal measurement is equal to the product of $T_{1}$ and $T_{2}$.

\begin{prop}
\label{prop:implementing-FD-measurements}The probability that Algorithm
\ref{alg:FD-thermal-alg} outputs $0$ is equal to $\Tr\!\left[M_{T_{1}T_{2}}(A)\rho\right]$,
and the probability that it outputs $1$ is equal to $\Tr\!\left[\left(I-M_{T_{1}T_{2}}(A)\right)\rho\right]$.
\end{prop}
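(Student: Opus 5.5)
The plan is to compute the POVM that Algorithm~\ref{alg:FD-thermal-alg} induces on the data register and show that it is diagonal in the eigenbasis of $A$, with the Fermi--Dirac eigenvalues. Write $A=\sum_k a_k |k\rangle\!\langle k|$. Since $\hat{x}\otimes A$ is block diagonal in this eigenbasis, the evolution factorizes as
\begin{equation}
e^{-i\hat{x}\otimes A/T_2}=\sum_k e^{-i a_k\hat{x}/T_2}\otimes |k\rangle\!\langle k|.
\end{equation}
With the convention $[\hat{x},\hat{p}]=i$, each block $e^{-i a\hat{x}/T_2}$ is a momentum displacement: $e^{-ia\hat{x}/T_2}|p\rangle=|p-a/T_2\rangle$. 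Fixing this sign convention consistently with the output rule in step~4 is the one delicate point, and I will check it against the claimed assignment of outcomes at the end.

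Next I will compute the Kraus-type operators on the data register. For a momentum outcome $p$, define
\begin{equation}
K_p\coloneqq(\langle p|\otimes I)\,e^{-i\hat{x}\otimes A/T_2}\,(|\psi_{T_1}\rangle\otimes I).
\end{equation}
Using \eqref{eq:control-state-alg-FD}, this gives
\begin{equation}
K_p=\sum_k\sqrt{\ell_{T_1}(p+a_k/T_2)}\,|k\rangle\!\langle k|.
\end{equation}
The prefactors are real and carry no $k$-dependent phase, because the displacement acts exactly on momentum eigenstates. Hence the POVM element for output $0$ is
\begin{equation}
\int_0^\infty dp\,K_p^\dagger K_p=\sum_k\left(\int_{a_k/T_2}^{\infty}du\,\ell_{T_1}(u)\right)|k\rangle\!\langle k|.
\end{equation}
Off-diagonal coherences of $\rho$ in the eigenbasis of $A$ therefore never contribute.

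It then remains to evaluate the scalar integral. Because $\ell_{T_1}$ is the derivative of $u\mapsto(e^{-u/T_1}+1)^{-1}$, we get
\begin{equation}
\int_{a/T_2}^{\infty}\ell_{T_1}(u)\,du=1-\frac{1}{e^{-a/(T_1T_2)}+1}=\frac{1}{e^{a/(T_1T_2)}+1},
\end{equation}
where the last step uses \eqref{eq:scalar-ident-FD-meas-compl}. Summing over $k$, the POVM element for output $0$ is $M_{T_1T_2}(A)$ as defined in \eqref{eq:gen-FD-meas-op}, so output $0$ occurs with probability $\Tr[M_{T_1T_2}(A)\rho]$.

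The element for output $1$ is then $I-M_{T_1T_2}(A)$, since $\int_{\mathbb{R}}\ell_{T_1}=1$ (equivalently, by \eqref{eq:gen-FD-meas-op-2}), which gives the second claim. If the opposite displacement convention were used, the roles of the two outcomes would swap, so I will fix the convention to match step~4.
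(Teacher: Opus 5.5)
Your proof is correct and takes essentially the paper's route. The paper also computes $(\langle p|\otimes I)e^{-i\hat{x}\otimes A/T_2}(|\psi_{T_1}\rangle\otimes I)=\sum_k\sqrt{\ell_{T_1}(p+a_k/T_2)}\,|k\rangle\!\langle k|$ and integrates the logistic density over $p\geq 0$ to obtain $M_{T_1T_2}(A)$; the only differences are that it treats pure inputs first and extends by convexity, whereas your POVM-element formulation handles mixed $\rho$ directly, and that your convention $\langle x|p\rangle=e^{ipx}/\sqrt{2\pi}$ is exactly the paper's, so no further adjustment is needed to match step~4.
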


\begin{proof}
See Appendix~\ref{sec:implementing-FD-measurements}.
\end{proof}
We now consider a particular scenario in which the Hermitian operator
$A$ is equal to a linear combination of quantum states:
\begin{equation}
A=\sum_{x}\alpha_{x}\sigma_{x},\label{eq:linear-combo-states}
\end{equation}
where $\alpha_{x}\in\mathbb{R}$ and $\sigma_{x}$ is a quantum state,
for all $x$. We suppose that $\sum_x |\alpha_x| > 0$. Let us also suppose that we have sample access to each
state $\sigma_{x}$. This scenario is relevant for quantum hypothesis
testing applications, in which one might have sample access to each
state $\sigma_{x}$ and not necessarily a description of each state.
It is then of interest to implement the Fermi--Dirac measurement
$\left(M_{T}(A),I-M_{T}(A)\right)$, with $A$ as in~\eqref{eq:linear-combo-states}.
For this purpose, we can use the techniques of density matrix exponentiation
\cite{lloyd2014quantum,Kimmel2017,Go2025}, along with Algorithm~\ref{alg:FD-thermal-alg}.

Let us suppose that we can realize Hamiltonian evolution according
to the fixed Hamiltonian $\hat{x}\otimes F$, where $F$ is the unitary
and Hermitian swap operator that acts on the data register and another
auxiliary register:
\begin{equation}
F\coloneqq\sum_{i,j}|i\rangle\!\langle j|\otimes|j\rangle\!\langle i|.
\end{equation}
Then one can realize the Hamiltonian evolution $A$, up to an error
$\varepsilon$ and for a time $t>0$, by proceeding as follows:
\begin{lyxalgorithm}
\label{alg:DME-alg}The algorithm for realizing the Hamiltonian evolution
$e^{-i\left(\hat{x}\otimes A\right)t}$ up to an error $\varepsilon$,
where $A$ has the form in~\eqref{eq:linear-combo-states}, proceeds
according to the following steps:
\begin{enumerate}
\item Let $|\psi_{T}\rangle\!\langle\psi_{T}|$ be the state of the control
qumode register, where $T>0$, and let $\rho$ be the state of the
data register. Fix $n\in\mathbb{N}$ such that $n=O\!\left(c^{2}t^{2}/\varepsilon\right)$,
where $c\equiv\left\Vert \alpha\right\Vert _{1}\coloneqq\sum_{x}\left|\alpha_{x}\right|$.
Set $\Delta\coloneqq\frac{ct}{n}$.
\item Sample $x$ according to the probability distribution $p(x)\coloneqq\frac{\left|\alpha_{x}\right|}{c}$.
Prepare the auxiliary register in the state $\sigma_{x}$.
\item If $\alpha_{x}>0$, apply the Hamiltonian evolution $e^{-i\left(\hat{x}\otimes F\right)\Delta}$,
to all registers, where $\hat{x}$ acts on the control register and
$F$ on the data and auxiliary registers. If $\alpha_{x}<0$, apply
the Hamiltonian evolution $e^{i\left(\hat{x}\otimes F\right)\Delta}$,
to all registers. Discard the auxiliary register.
\item Repeat steps 2 and 3 $n$ times.
\end{enumerate}
\end{lyxalgorithm}

\begin{figure*}
\begin{centering}
\includegraphics[width=6.2in]{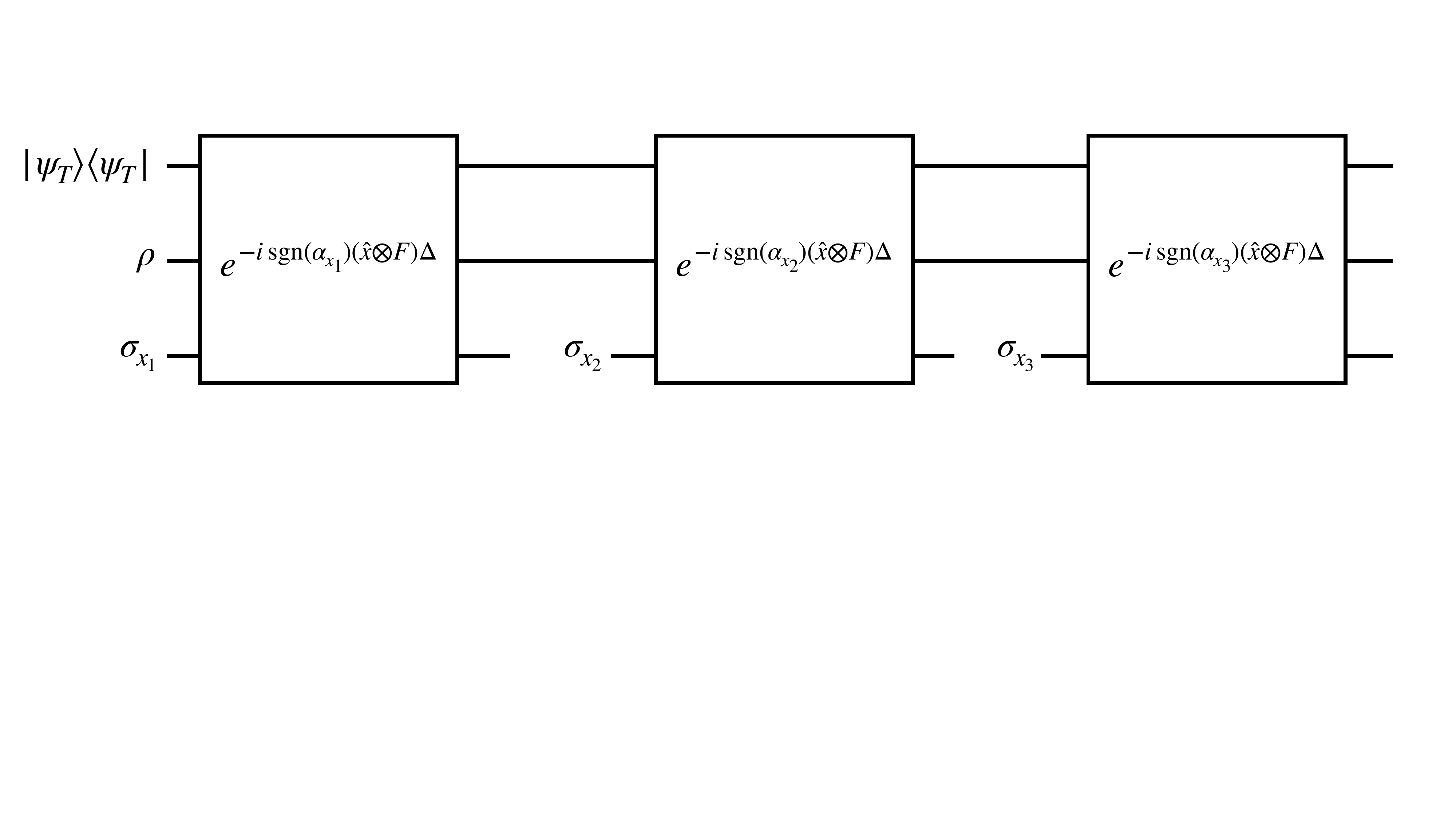}
\par\end{centering}
\caption{Quantum circuit for simulating the evolution in~\eqref{eq:simulated-evol-DME}
for the case $n=3$, as detailed in Algorithm~\ref{alg:DME-alg}.
}\label{fig:DME-sim}
\end{figure*}

See Figure~\ref{fig:DME-sim} for a visual depiction of Algorithm
\ref{alg:DME-alg} for the case $n=3$. By following an analysis similar
to that in \cite[Appendix~D]{Sims2025}, it follows that the state
resulting from Algorithm~\ref{alg:DME-alg} is $\varepsilon$-close
in trace distance to the following state:
\begin{equation}
e^{-i\left(\hat{x}\otimes A\right)t}\left(|\psi_{T}\rangle\!\langle\psi_{T}|\otimes\rho\right)e^{i\left(\hat{x}\otimes A\right)t}.\label{eq:simulated-evol-DME}
\end{equation}
To get a sense for why this holds, define
\begin{align}
\omega_{+} & \equiv\sum_{x:\alpha_{x}>0}p(x)\sigma_{x},\\
\omega_{-} & \equiv\sum_{x:\alpha_{x}<0}p(x)\sigma_{x},
\end{align}
observe that $\omega_{+}-\omega_{-}=\frac{1}{c}\sum_{x}\alpha_{x}\sigma_{x}=\frac{1}{c}A$,
and consider that the expected state after the first step of Algorithm~\ref{alg:DME-alg} has the following form:
\begin{align}
 & \Tr_{\text{aux}}\!\left[e^{-i\left(\hat{x}\otimes F\right)\Delta}\left(|\psi_{T}\rangle\!\langle\psi_{T}|\otimes\rho\otimes\omega_{+}\right)e^{i\left(\hat{x}\otimes F\right)\Delta}\right]\nonumber \\
 & \qquad+\Tr_{\text{aux}}\!\left[e^{i\left(\hat{x}\otimes F\right)\Delta}\left(|\psi_{T}\rangle\!\langle\psi_{T}|\otimes\rho\otimes\omega_{-}\right)e^{-i\left(\hat{x}\otimes F\right)\Delta}\right]\nonumber \\
 & =|\psi_{T}\rangle\!\langle\psi_{T}|\otimes\rho\nonumber \\
 & \qquad-i\Tr_{\text{aux}}\!\left[\left[\hat{x}\otimes F,|\psi_{T}\rangle\!\langle\psi_{T}|\otimes\rho\otimes\omega_{+}\right]\right]\Delta\nonumber \\
 & \qquad+i\Tr_{\text{aux}}\!\left[\left[\hat{x}\otimes F,|\psi_{T}\rangle\!\langle\psi_{T}|\otimes\rho\otimes\omega_{-}\right]\right]\Delta+O\!\left(\Delta^{2}\right)\\
 & =|\psi_{T}\rangle\!\langle\psi_{T}|\otimes\rho-i\left[\hat{x}\otimes\omega_{+},|\psi_{T}\rangle\!\langle\psi_{T}|\otimes\rho\right]\Delta\nonumber \\
 & \qquad+i\left[\hat{x}\otimes\omega_{-},|\psi_{T}\rangle\!\langle\psi_{T}|\otimes\rho\right]\Delta+O\!\left(\Delta^{2}\right)\\
 & =|\psi_{T}\rangle\!\langle\psi_{T}|\otimes\rho-i\left[\hat{x}\otimes\left(\omega_{+}-\omega_{-}\right),|\psi_{T}\rangle\!\langle\psi_{T}|\otimes\rho\right]\Delta\nonumber \\
 & \qquad+O\!\left(\Delta^{2}\right)\\
 & =|\psi_{T}\rangle\!\langle\psi_{T}|\otimes\rho-i\left[\hat{x}\otimes A,|\psi_{T}\rangle\!\langle\psi_{T}|\otimes\rho\right]\frac{\Delta}{c}+O\!\left(\Delta^{2}\right)\\
 & =e^{-i\left(\hat{x}\otimes A\right)\Delta/c}\left(|\psi_{T}\rangle\!\langle\psi_{T}|\otimes\rho\right)e^{i\left(\hat{x}\otimes A\right)\Delta/c}+O\!\left(\Delta^{2}\right).
\end{align}
In the above calculation, the first-order expansions are understood
after applying the corresponding unitaries to the states under
consideration, rather than as operator-norm expansions. This distinction
is necessary because $\hat{x}$ is an unbounded operator. More precisely, for every state vector $|\varphi\rangle$ and
for every bounded Hermitian operator $B$ acting on the finite-dimensional
data registers, the integral remainder formula gives
\begin{multline}
e^{-i(\hat{x}\otimes B)\Delta}
\left(|\psi_{T}\rangle\otimes|\varphi\rangle\right)
\\
 =
\left(I-i(\hat{x}\otimes B)\Delta\right)
\left(|\psi_{T}\rangle\otimes|\varphi\rangle\right)
+O(\Delta^{2}),
\end{multline}
where the error is in Hilbert-space norm. Indeed,
\begin{align}
&\left\|
e^{-i(\hat{x}\otimes B)\Delta}
\left(|\psi_{T}\rangle\otimes|\varphi\rangle\right)
-
\left(I-i(\hat{x}\otimes B)\Delta\right)
\left(|\psi_{T}\rangle\otimes|\varphi\rangle\right)
\right\|
\nonumber\\
&\qquad\leq
\frac{\Delta^{2}}{2}
\left\|(\hat{x}\otimes B)^{2}
\left(|\psi_{T}\rangle\otimes|\varphi\rangle\right)\right\|
\nonumber\\
&\qquad=
\frac{\Delta^{2}}{2}
\left\|\hat{x}^{2}|\psi_{T}\rangle\right\|
\left\|B^{2}|\varphi\rangle\right\|
\nonumber\\
&\qquad\leq
\frac{\Delta^{2}}{2}
\left\|\hat{x}^{2}|\psi_{T}\rangle\right\|
\left\|B\right\|^{2}.
\end{align}
The right-hand side is finite because
\begin{equation}
\left\|\hat{x}^{2}|\psi_{T}\rangle\right\|^{2}
=
\langle\psi_{T}|\hat{x}^{4}|\psi_{T}\rangle
<\infty.
\end{equation}
Thus, the first-order expansions used above are valid on the states
under consideration, even though $\hat{x}$ is unbounded. The same
argument applies with $B=F$ and $B=A/c$.

Equivalently, at the level of density operators, the integral remainder
formula gives
\begin{multline}
e^{-i(\hat{x}\otimes B)\Delta}
\Gamma
e^{i(\hat{x}\otimes B)\Delta}
\\
=
\Gamma-i[\hat{x}\otimes B,\Gamma]\Delta
+O(\Delta^{2}),
\end{multline}
where the error is in trace norm, for the states $\Gamma$ appearing
above. The finiteness of the corresponding second-order remainder
follows from
$\langle\psi_{T}|\hat{x}^{4}|\psi_{T}\rangle<\infty$ and the fact that
$B$ is bounded.

In the above, we also employed
the following identity:
\begin{equation}
\Tr_{2}[F(\tau\otimes\xi)]=\xi\tau,
\end{equation}
which holds for all states $\tau$ and $\xi$. Repeating this procedure
$n$ times leads to an error of
\begin{equation}
O\!\left(n\Delta^{2}\right)=O\!\left(n\left(\frac{ct}{n}\right)^{2}\right)=O\!\left(\frac{c^{2}t^{2}}{n}\right),
\end{equation}
so that we require $n=O\!\left(c^{2}t^{2}/\varepsilon\right)$ in
order to have $\varepsilon$ error.

Given that Algorithm~\ref{alg:DME-alg} simulates the evolution in
\eqref{eq:simulated-evol-DME}, it can be used as a subroutine in
Algorithm~\ref{alg:FD-thermal-alg} (specifically in step 2 therein)
in order to realize the Fermi--Dirac thermal measurement $\left(M_{T}(A),I-M_{T}(A)\right)$,
where $A$ is a linear combination of states as in~\eqref{eq:linear-combo-states}.

\subsection{Quantum algorithm for estimating Hessian matrix elements}

\label{subsec:Quantum-algorithm-Hessian}

Here we provide a quantum algorithm for estimating the elements of
the Hessian matrix for $f_{T}(\mu)$, as given in~\eqref{eq:2nd-hessian-exp}.
We suppose a quantum access model, called linear combinations of states
and previously considered in~\cite{Brandao2019,Apeldoorn2019,Chen2025},
in which $H$ and each $Q_{i}$ are available as a linear combination
of quantum states. That is,
\begin{align}
H & =\sum_{i}h_{i}\rho_{i},\label{eq:linear-combo-states-1}\\
Q_{i} & =\sum_{k}\alpha_{i,k}\sigma_{i,k},\label{eq:linear-combo-states-2}
\end{align}
where $h_{i}\in\mathbb{R}$ and $\rho_{i}$ is a quantum state for
all $i$ and $\alpha_{i,k}\in\mathbb{R}$ and $\sigma_{i,k}$ is a
state for all $i$ and $k$. Let us define the following norms:
\begin{align}
\left\Vert h\right\Vert _{1} & \coloneqq\sum_{i}\left|h_{i}\right|,\label{eq:h-norm}\\
\left\Vert \alpha_{i}\right\Vert _{1} & \coloneqq\sum_{k}\left|\alpha_{i,k}\right|.\label{eq:alpha_j-norm}
\end{align}

By density matrix exponentiation~\cite{lloyd2014quantum,Kimmel2017,Sims2025,Go2025},
the Hamiltonian evolution $e^{-i\left(H-\mu\cdot Q\right)t/T}$ can
be simulated, and by Algorithms~\ref{alg:FD-thermal-alg} and~\ref{alg:DME-alg},
the Fermi--Dirac thermal measurement $\left(M_{T}(\mu),I-M_{T}(\mu)\right)$
can be implemented, where $M_{T}(\mu)$ is defined in~\eqref{eq:FD-meas-op}.
There are controllable errors that occur in simulating both the Hamiltonian
evolution and Fermi--Dirac thermal measurement.

With this in place, we can now present a quantum algorithm for estimating
the element of the Hessian matrix indexed by $\left(i,j\right)$,
as given in~\eqref{eq:2nd-hessian-exp}. See Figure~\ref{fig:hessian-estimation}
for a visual depiction of the main quantum circuit used for Hessian
estimation.
\begin{lyxalgorithm}
\label{alg:hessian-est-alg}The algorithm for estimating the Hessian
element indexed by $\left(i,j\right)$ in~\eqref{eq:2nd-hessian-exp}
proceeds according to the following steps:
\begin{enumerate}
\item Set $s=1$.
\item Sample $k_{1}$ from the probability distribution $p_{i}(k)\equiv\frac{\left|\alpha_{i,k}\right|}{\left\Vert \alpha_{i}\right\Vert _{1}}$,
and sample $k_{2}$ from the probability distribution $p_{j}(k)\equiv\frac{\left|\alpha_{j,k}\right|}{\left\Vert \alpha_{j}\right\Vert _{1}}$.
\item Prepare a first data register in the state $\sigma_{i,k_{1}}$, and
prepare a second data register in the state $\sigma_{j,k_{2}}$.
\item Sample $t$ from the high-peak probability density $\gamma(t)$ in
\eqref{eq:high-peak-tent-prob-dens}, and apply the Hamiltonian evolution
$e^{-i\left(H-\mu\cdot Q\right)t/T}$ to the first data register.
\item Prepare a control qubit in the state $|-\rangle\!\langle-|$, where
$|-\rangle\coloneqq\frac{1}{\sqrt{2}}\left(|0\rangle-|1\rangle\right)$,
and perform a controlled swap $|0\rangle\!\langle0|\otimes I+|1\rangle\!\langle1|\otimes F$
on the control qubit and the two data registers.
\item Measure the control qubit in the eigenbasis of $\sigma_{X}$, and
denote the outcome by $Y_{s}\in\left\{ +1,-1\right\} $. Perform the
Fermi--Dirac thermal measurement $\left(M_{T}(\mu),I-M_{T}(\mu)\right)$
on both data registers.
\item Set
\begin{equation}
Z_{s}\leftarrow\frac{1}{T}\signum(\alpha_{i,k}\alpha_{j,k})\left\Vert \alpha_{i}\right\Vert _{1}\left\Vert \alpha_{j}\right\Vert _{1}Y_{s}
\end{equation}
if the outcome of the first Fermi--Dirac measurement is $M_{T}(\mu)$
and the outcome of the second Fermi--Dirac thermal measurement is
$I-M_{T}(\mu)$. Otherwise, set $Z_{s}=0$.
\item Set $s\leftarrow s+1$.
\item Repeat Steps 2-8 $S$ times, and set $\overline{Z_{S}}\leftarrow\frac{1}{S}\sum_{s=1}^{S}Z_{s}$.
\end{enumerate}
\end{lyxalgorithm}

\begin{figure}
\begin{centering}
\includegraphics[width=3in]{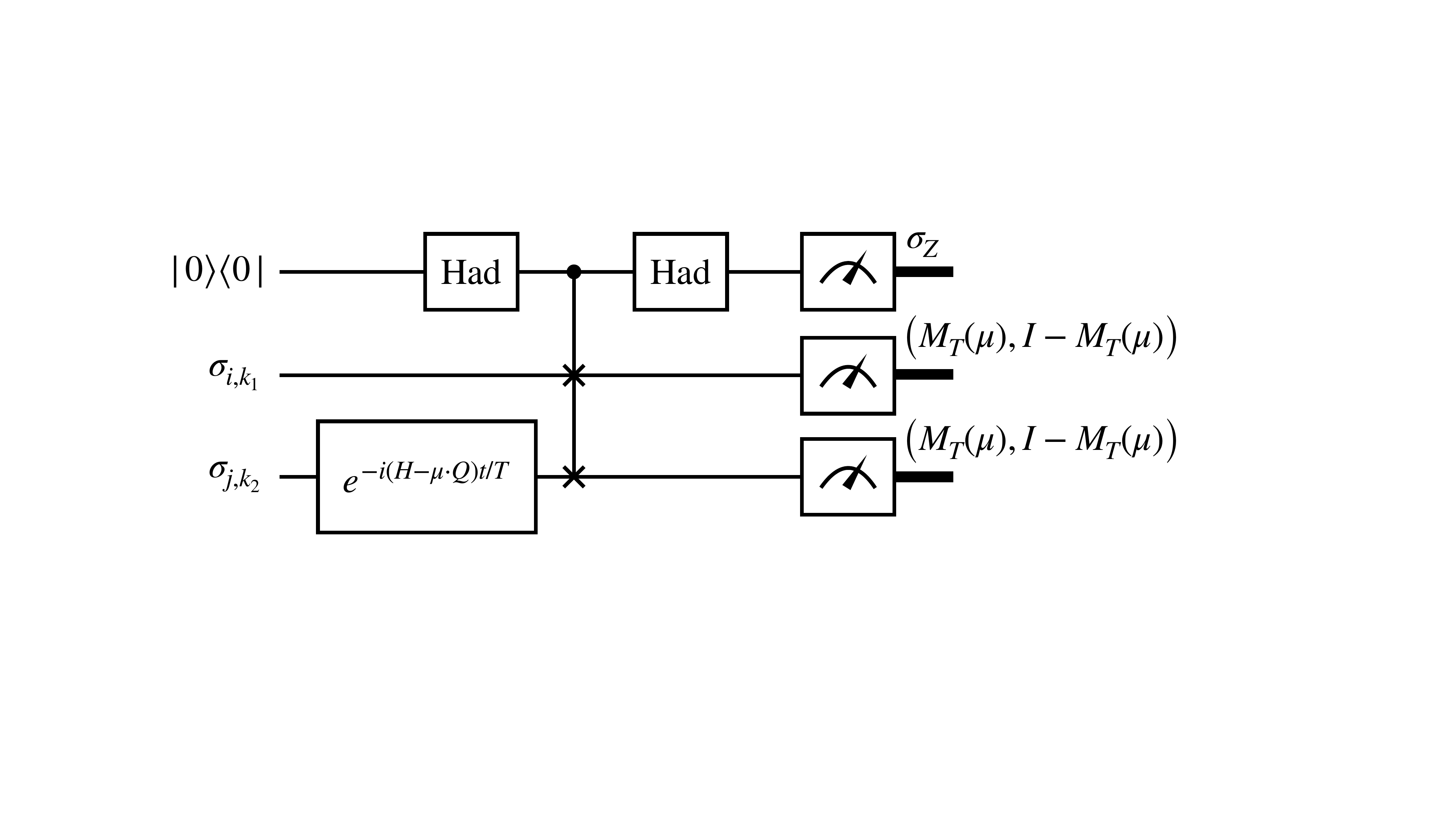}
\par\end{centering}
\caption{Quantum circuit for estimating the Hessian matrix element $\left(i,j\right)$,
as detailed in Algorithm~\ref{alg:hessian-est-alg}. \textquotedblleft Had\textquotedblright{}
stands for the qubit Hadamard gate. The classical values $k_{1}$,
$k_{2}$, and $t$ are sampled from the probability distributions
$\left|\alpha_{i,k_{1}}\right|/\left\Vert \alpha_{i}\right\Vert _{1}$,
$\left|\alpha_{j,k_{2}}\right|/\left\Vert \alpha_{j}\right\Vert _{1}$,
and $\gamma(t)$ in~\eqref{eq:high-peak-tent-prob-dens}, respectively.
The measurements at the end consist of a computational basis measurement
on the control qubit and the Fermi--Dirac thermal measurement $\left(M_{T}(\mu),I-M_{T}(\mu)\right)$
on both data registers. See Algorithm~\ref{alg:hessian-est-alg} for
more details.}\label{fig:hessian-estimation}

\end{figure}

For fixed values of $k_{1}$, $k_{2}$, and $t$, and defining the
unitary channels
\begin{align}
\mathcal{V}_{t}(\cdot) & \coloneqq e^{-i\left(H-\mu\cdot Q\right)t/T}(\cdot)e^{i\left(H-\mu\cdot Q\right)t/T},\\
\mathcal{U}(\cdot) & \coloneqq U(\cdot)U^{\dag},\\
U & \coloneqq|0\rangle\!\langle0|\otimes I+|1\rangle\!\langle1|\otimes F,
\end{align}
consider that the following equality holds:
\begin{align}
 & \Tr\!\left[\begin{array}{c}
\left(\sigma_{X}\otimes M_{T}(\mu)\otimes\left(I-M_{T}(\mu)\right)\right)\times\\
\mathcal{U}(|-\rangle\!\langle-|\otimes\mathcal{V}_{t}(\sigma_{i,k_{1}})\otimes\sigma_{j,k_{2}})
\end{array}\right]\nonumber \\
 & =-\Re\left[\Tr\!\left[\left(M_{T}(\mu)\otimes\left(I-M_{T}(\mu)\right)\right)(\mathcal{V}_{t}(\sigma_{i,k_{1}})\otimes\sigma_{j,k_{2}})F\right]\right]\\
 & =-\Re\left[\Tr\!\left[(M_{T}(\mu)\mathcal{V}_{t}(\sigma_{i,k_{1}})\otimes\left(I-M_{T}(\mu)\right)\sigma_{j,k_{2}})F\right]\right]\\
 & =-\Re\left[\Tr\!\left[M_{T}(\mu)\mathcal{V}_{t}(\sigma_{i,k_{1}})\left(I-M_{T}(\mu)\right)\sigma_{j,k_{2}}\right]\right],
\end{align}
which is proportional to the expected value of the random variable
$Z_{s}$, as defined in step 7 of Algorithm~\ref{alg:hessian-est-alg}.
If we now take the expectation of this expression over the randomness
associated with $k_{1}$, $k_{2}$, and $t$, the resulting expectation
is as follows:
\begin{equation}
-\frac{1}{\left\Vert \alpha_{i}\right\Vert _{1}\left\Vert \alpha_{j}\right\Vert _{1}}\Re\!\left[\Tr\!\left[M_{T}(\mu)\Phi_{\mu}(Q_{i})\left(I-M_{T}(\mu)\right)Q_{j}\right]\right],
\end{equation}
where the quantum channel $\Phi_{\mu}$ is defined in~\eqref{eq:Phi-channel}.
Finally multiplying this expression by $\frac{1}{T}\left\Vert \alpha_{i}\right\Vert _{1}\left\Vert \alpha_{j}\right\Vert _{1}$
then results in an expression equal to~\eqref{eq:2nd-hessian-exp}.
Thus, by applying the Hoeffding inequality~\cite{Hoeffding1963} (e.g.,
in the form of \cite[Theorem~1]{Bandyopadhyay2023}), a sufficient
number of steps, $S$, for Algorithm~\ref{alg:hessian-est-alg} to
arrive at an $\varepsilon$-accurate estimate, with success probability
$\geq1-\delta$, is given by
\begin{equation}
O\!\left(\frac{\left\Vert \alpha_{i}\right\Vert _{1}^{2}\left\Vert \alpha_{j}\right\Vert _{1}^{2}}{T^{2}\varepsilon^{2}}\ln\!\left(\frac{1}{\delta}\right)\right).
\end{equation}

\subsection{Hybrid quantum--classical algorithms for gradient ascent}

\label{subsec:HQC-for-gradient-ascent}

In this section, we sketch various algorithms that can be used for
solving the general measurement optimization problem in~\eqref{eq:measurement-opt-gen}.
Here we assume the input access model to be linear combination of
quantum states, as given previously in~\eqref{eq:linear-combo-states-1}--\eqref{eq:linear-combo-states-2}.

A first-order optimization algorithm proceeds similarly to Algorithm
\ref{alg:classical-gradient-ascent}, with the main difference being
that it is a hybrid quantum--classical algorithm, employing stochastic
gradient ascent, rather than a deterministic gradient ascent algorithm.
We provide it here for completeness:
\begin{lyxalgorithm}
\label{alg:q-stoch-gradient-ascent}The hybrid quantum--classical
algorithm for measurement optimization consists of the following steps:
\begin{enumerate}
\item Set $\delta>0$ to be the desired error, set $T\leftarrow\frac{\delta}{4d\ln2}$
to be the temperature, initialize $\mu^{0}\leftarrow\left(0,\ldots,0\right)$,
fix the learning rate $\eta>0$, and set the number of steps, $J$.
\item For all $j\in\left[J\right]$ and $i\in\left[c\right]$,
\begin{enumerate}
\item Estimate $\Tr\!\left[M_{T}(\mu^{j-1})Q_{i}\right]$ as $\tilde{Q}_{i}$,
by using Algorithm~\ref{alg:FD-thermal-alg}.
\item Set
\begin{equation}
\mu_{i}^{j}\leftarrow\mu_{i}^{j-1}+\eta\left(q_{i}-\tilde{Q}_{i}\right).
\end{equation}
\end{enumerate}
\item Estimate $\Tr\!\left[M_{T}(\mu^{J})H\right]$ as $\tilde{H}$, by
using Algorithm~\ref{alg:FD-thermal-alg}.
\item Output $\tilde{f}_{T}(\mu^{J})=\mu^{J}\cdot q+\tilde{H}-\mu^{J}\cdot\tilde{Q}$,
as defined in~\eqref{eq:approx-without-FD-entropy}.
\end{enumerate}
\end{lyxalgorithm}

We expect the runtime of Algorithm~\ref{alg:q-stoch-gradient-ascent}
to be comparable to that reported in \cite[Theorem~2]{liu2025qthermoSDPs}.
Indeed, based on standard runtime estimates for stochastic gradient
ascent (see, e.g., \cite[Theorem~6.3]{Bubeck2015}), the runtime should
be inverse quartic in the desired approximation error $\delta$, quadratic
in the search radius $r$ (which upper bounds the norm $\left\Vert \mu_{T}^{\star}\right\Vert $
of an optimal $\mu_{T}^{\star}$), and a low-order polynomial in the
norms defined in~\eqref{eq:h-norm}--\eqref{eq:alpha_j-norm}. If
there is no information about the spectral gap $\Delta$ and the ground-space
degeneracy $d_{0}$, as defined in~\eqref{eq:def-d0-1} and~\eqref{eq:def-Delta-1},
respectively, then the runtime of Algorithm~\ref{alg:q-stoch-gradient-ascent}
is linear in the dimension $d$, or equivalently, exponential in $\ln d$,
with $\ln d$ corresponding to the number of qubits. However, if $\Delta^{-1}$
and $d_{0}$ are polynomial in $\ln d$, then Algorithm~\ref{alg:q-stoch-gradient-ascent}
is an efficient hybrid quantum--classical algorithm for solving the
general measurement optimization problem in~\eqref{eq:measurement-opt-gen},
with its runtime instead proportional to a polynomial in $\ln d$.

We now sketch a second-order hybrid quantum--classical algorithm
for measurement optimization:
\begin{lyxalgorithm}
\label{alg:second-order-HQC}The second-order hybrid quantum--classical
algorithm for measurement optimization consists of the following steps:
\begin{enumerate}
\item Set $\delta>0$ to be the desired error, set $T\leftarrow\frac{\delta}{4d\ln2}$
to be the temperature, initialize $\mu^{0}\leftarrow\left(0,\ldots,0\right)$,
fix the learning rate $\eta>0$, and set the number of steps, $J$.
\item For all $j\in\left[J\right]$ and $i\in\left[c\right]$,
\begin{enumerate}
\item Estimate $\Tr\!\left[M_{T}(\mu^{j-1})Q_{i}\right]$ as $\tilde{Q}_{i}$,
by using Algorithm~\ref{alg:FD-thermal-alg}.
\item For all $k\in\left[c\right]$, estimate the Hessian matrix element
\begin{equation}
-\frac{1}{T}\Re\!\left[\Tr\!\left[M_{T}(\mu)\Phi_{\mu}(Q_{i})\left(I-M_{T}(\mu)\right)Q_{k}\right]\right]
\end{equation}
as $\widetilde{\nabla^{2}}_{ik}$, by using Algorithm~\ref{alg:hessian-est-alg}.
\end{enumerate}
\item Set $\widetilde{\nabla}$to be the estimate of the gradient vector:
$\widetilde{\nabla}\leftarrow\left(q_{i}-\tilde{Q}_{i}\right)_{i\in\left[c\right]}$.
Solve for $\Lambda$ in $\widetilde{\nabla^{2}}\Lambda=\widetilde{\nabla}$,
and set
\begin{equation}
\mu^{j}\leftarrow\mu^{j-1}-\eta\Lambda.
\end{equation}
\item Estimate $\Tr\!\left[M_{T}(\mu^{J})H\right]$ as $\tilde{H}$.
\item Output $\tilde{f}_{T}(\mu^{J})=\mu^{J}\cdot q+\tilde{H}-\mu^{J}\cdot\tilde{Q}$,
as defined in~\eqref{eq:approx-without-FD-entropy}.
\end{enumerate}
\end{lyxalgorithm}

Similar comments as above apply to Algorithm~\ref{alg:second-order-HQC}.
Without information about the spectral gap $\Delta$ and groundspace
degeneracy $d_{0}$, the expected runtime of Algorithm~\ref{alg:second-order-HQC}
is exponential in the number of qubits. However, if information is
available and similar polynomial bounds apply to $\Delta^{-1}$ and
$d_{0}$, then the runtime can be reduced to polynomial in the number
of qubits.

\subsection{Discussion on computational complexity}

For general measurement optimization problems, we should not expect
to have an efficient quantum algorithm. Indeed, if the states $\rho$
and $\sigma$ are states generated by arbitrary quantum circuits,
then the ability to implement the optimal Helstrom--Holevo measurement
implies the ability to solve computational problems that are complete
for the complexity class quantum statistical zero knowledge (QSZK)
\cite{Watrous2002LimitsPowerQSZK,Watrous2009ZeroKnowledgeQuantum,VidickWatrous2016QuantumProofs}.
It is believed that problems in QSZK are difficult for quantum computers
to solve, as it contains QMA, which in turn contains NP. This difficulty
is reflect in the fact that the best runtime our approach can guarantee
in the general case is linear in the dimension of the states, or equivalently,
exponential in the number of qubits needed to represent the states.

However, we have also outlined cases in which the runtime of our quantum
algorithms are efficient. Indeed, if the spectral gap and ground-space
degeneracy conditions outlined in Proposition~\ref{prop:approx-error-spectral-gap}
hold with $\Delta^{-1}$ and $d_{0}$ polynomial in $\ln d$, then
the resulting quantum algorithms are efficient for solving the corresponding
measurement optimization problem.

\section{Application to quantum hypothesis testing and binary classification}

\label{sec:Application-q-hypo-test}

Quantum hypothesis testing plays a prominent role in quantum information
theory, being foundational in its own right~\cite{Helstrom1967,Helstrom1969,Holevo1972}
while also being fundamentally connected to quantum communication
\cite{Hayashi2003,Hayashi2017,khatri2024} and quantum resource theories
\cite{Chitambar2019,Gour2025} more generally. One can consider a
variety of scenarios, with the most basic ones being symmetric and
asymmetric binary hypothesis testing. Additionally, one can consider binary classification, as well as
composite hypothesis testing, in which one or both hypotheses involve
a state selected from a set of states, rather than being a single
state.

In all scenarios of quantum hypothesis testing, the goal is to find
a quantum measurement that is optimal for the particular task being
considered. As such, this involves optimization, and we highlight
four scenarios below that are special cases of the measurement optimization
problems in~\eqref{eq:measurement-opt-gen} and~\eqref{eq:measurement-opt-gen-ineq}.
As such, all of our findings from Section~\ref{sec:General-measurement-optimization}
are applicable to these quantum hypothesis testing problems. Perhaps
most interestingly, our results demonstrate that Fermi--Dirac thermal
measurements are nearly optimal in all of the scenarios whenever the
temperature is sufficiently low, thus motivating this class of measurements
for practical problems of interest.

\subsection{Symmetric binary quantum hypothesis testing}

\label{subsec:Symmetric-binary-HT}

Let us begin by recalling symmetric quantum hypothesis testing. In
this setting, a state $\rho$ is selected with probability $p\in\left(0,1\right)$,
and a state $\sigma$ is selected with probability $q\equiv1-p$.
Both $\rho$ and $\sigma$ are $d\times d$ density operators, as
defined in~\eqref{eq:dens-ops-def}. The goal is to perform a measurement
to minimize the expected error probability when attempting to identify
the chosen state. Thus, the optimization problem is as follows:
\begin{align}
p_{e}(p,\rho,\sigma) & \coloneqq\min_{M\in\mathcal{M}_{d}}\left\{ p\Tr\!\left[\left(I-M\right)\rho\right]+q\Tr\!\left[M\sigma\right]\right\} \\
 & =\min_{M\in\mathcal{M}_{d}}\left\{ p+\Tr\!\left[M\left(q\sigma-p\rho\right)\right]\right\} \\
 & =p+\min_{M\in\mathcal{M}_{d}}\left\{ \Tr\!\left[M\left(q\sigma-p\rho\right)\right]\right\} .\label{eq:rewrite-sym-hypo-test}
\end{align}
By inspecting~\eqref{eq:rewrite-sym-hypo-test}, we observe that it
is a special case of the optimization problem in~\eqref{eq:measurement-opt-gen},
where $H=q\sigma-p\rho$ and there are no constraints. Applying Proposition
\ref{prop:dual-gen-meas-opt}, we conclude that
\begin{equation}
p_{e}(p,\rho,\sigma)=p-\Tr\!\left[\left(q\sigma-p\rho\right)_{-}\right],
\end{equation}
and an optimal measurement operator has the form in~\eqref{eq:optimal-meas-op},
given by
\begin{equation}
\Pi_{q\sigma<p\rho}+M_{0},
\end{equation}
where $0\leq M_{0}\leq\Pi_{q\sigma=p\rho}$. Such an optimal measurement
is known as a Helstrom--Holevo measurement~\cite{Helstrom1967,Helstrom1969,Holevo1972}.

For $T>0$, the modified free-energy optimization problem is a special
case of~\eqref{eq:primal-FD-obj}, given by:
\begin{multline}
p_{e,T}(p,\rho,\sigma)\coloneqq \\
p+\min_{M\in\mathcal{M}_{d}}\left\{ \Tr\!\left[M\left(q\sigma-p\rho\right)\right]-T\cdot S_{\FD}(M)\right\} .\label{eq:opt-prob-sym-hypo-test}
\end{multline}
Applying Proposition~\ref{prop:dual-FD-free-energy}, we conclude
that
\begin{equation}
p_{e,T}(p,\rho,\sigma)=p-T\Tr\!\left[\ln\!\left(e^{-\frac{1}{T}\left(q\sigma-p\rho\right)}+I\right)\right],
\end{equation}
and the optimal measurement operator for~\eqref{eq:opt-prob-sym-hypo-test}
is as follows:
\begin{equation}
M_{T}=\left(e^{\frac{1}{T}\left(q\sigma-p\rho\right)}+I\right)^{-1}.
\end{equation}

In the zero-temperature limit, the following equalities hold:
\begin{align}
\lim_{T\to0^{+}}p_{e,T}(p,\rho,\sigma) & =p_{e}(p,\rho,\sigma),\\
\lim_{T\to0^{+}}M_{T} & =\Pi_{q\sigma<p\rho}+\frac{1}{2}\Pi_{q\sigma=p\rho},
\end{align}
which follow from~\eqref{eq:temp-T-obj-to-zero-temp-obj} and~\eqref{eq:temp-T-meas-op-to-zero-temp-meas-op},
respectively.

In general, the following bound from Proposition~\ref{prop:simple-approx-bnd}
is applicable:
\begin{equation}
\left|p_{e}(p,\rho,\sigma)-p_{e,T}(p,\rho,\sigma)\right|\leq T\cdot d\ln2.
\end{equation}
We additionally have the following bound from Proposition~\ref{prop:approx-error-spectral-gap}:
\begin{equation}
\left|p_{e}(p,\rho,\sigma)-p_{e,T}(p,\rho,\sigma)\right|\leq T\left(d_{0}\ln2+\left(d-d_{0}\right)e^{-\frac{\Delta}{T}}\right),\label{eq:hels-hol-err-bound-FD}
\end{equation}
where $d_{0}\equiv\dim\ker(q\sigma-p\rho)$, $\Delta\equiv\min_{i:\lambda_{i}\neq0}\left|\lambda_{i}\right|$,
and $\lambda_{i}$ denotes the $i$th eigenvalue of $q\sigma-p\rho$.
By applying the error bound in~\eqref{eq:hels-hol-err-bound-FD},
the temperature $T$ needed to have $\varepsilon$ error, when approximating
$p_{e}(p,\rho,\sigma)$ by $p_{e,T}(p,\rho,\sigma)$ and employing
a Fermi--Dirac thermal measurement, is given by~\eqref{eq:temp-eps-error-spec-gap}.

\subsection{Binary classification}

\label{subsec:binary-class}

The task of binary classification is similar to a symmetric
hypothesis testing problem. In binary classification, there is a Bayes cost matrix associated
to the results of classification.

Let us first begin with a simple example of $1$-$0$ loss and show that it is equivalent to the symmetric hypothesis testing problem. Suppose we have training set $(\sigma_i, y_i)_{i=1}^m$, where $\sigma_i$ is a quantum state with corresponding label $y_i \in \{0, 1\}$. This means that the states belong to one of two classes $\mathcal{C}_0\coloneqq \{\sigma_i \vert y_i=0\}$ and $\mathcal{C}_1\coloneqq \{\sigma_i \vert y_i=1\}$. The conditional probability of outputting a predicted label $\hat{y}(\sigma)=c \in \{0, 1\}$, when given the state $\sigma_i$, can be written as $P(\hat{y}=c|\sigma_j)=\Tr (M_c \sigma_j)$, where we have the binary measurement $\left( M_{0},M_{1}\right) $ satisfying $M_{0}\in\mathcal{M}_{d}$ and $M_{1}=I-M_{0}$. 

Then the common case of $1$-$0$ loss (misclassification cost) is the average error probability 
\begin{align}
    R(M_0)& \coloneqq \frac{1}{m}\left(\sum_{\sigma_i \in C_0}\Tr (M_1 \sigma_i)+\sum_{\sigma_i \in \mathcal{C}_1}\Tr (M_0 \sigma_i)\right) \nonumber \\
   &=p_0 \Tr ((I-M_0)\rho_0)+(1-p_0) \Tr (M_0 \rho_1), \nonumber \\
   \rho_j& =\frac{1}{m_j}\sum_{\sigma_i \in \mathcal{C}_j}\sigma_i, \quad j \in \{0,1\},
\end{align}
where $p_i\coloneqq m_i/m$ and $m_i$ is the number of training states in class $\mathcal{C}_i$, for $i\in \{0,1\}$. This is of the same form as the loss function in~\eqref{eq:rewrite-sym-hypo-test}.

We can proceed to a more general case with a more general cost matrix with entries $C_{i,j}$, where $i,j\in\left\{ 0,1\right\} $. The $1$-$0$ loss example is a special case where $C_{00}=0=C_{11}$, $C_{10}=1=C_{01}$. 
Then the Bayes risk is as follows:
\begin{align}
& R(M_{0}) \notag \\
& \coloneqq\sum_{i,j\in\left\{ 0,1\right\} }C_{j,i}p_{i}\Tr\!\left[M_{j}\rho_{i}\right]\\
 & =p_{0}\left(C_{0,0}\Tr\!\left[M_{0}\rho_{0}\right]+C_{1,0}\Tr\!\left[M_{1}\rho_{0}\right]\right)\nonumber \\
 & \qquad+p_{1}\left(C_{0,1}\Tr\!\left[M_{0}\rho_{1}\right]+C_{1,1}\Tr\!\left[M_{1}\rho_{1}\right]\right)\\
 & =p_{0}\left(C_{0,0}\Tr\!\left[M_{0}\rho_{0}\right]+C_{1,0}\Tr\!\left[\left(I-M_{0}\right)\rho_{0}\right]\right)\nonumber \\
 & \qquad+p_{1}\left(C_{0,1}\Tr\!\left[M_{0}\rho_{1}\right]+C_{1,1}\Tr\!\left[\left(I-M_{0}\right)\rho_{1}\right]\right)\\
 & =p_{0}C_{1,0}+p_{1}C_{1,1}+\Tr\!\left[M_{0}\Gamma\right],
\end{align}
where
\begin{equation}
\Gamma\coloneqq p_{0}\left(C_{0,0}-C_{1,0}\right)\rho_{0}+p_{1}\left(C_{0,1}-C_{1,1}\right)\rho_{1}.
\end{equation}
Given the prior probabilities $p_{0}$ and $p_{1}$, quantum states
$\rho_{0}$ and $\rho_{1}$, and the cost matrix with entries $C_{i,j}$
for $i,j\in\left\{ 0,1\right\} $, the minimum Bayes risk is given
by
\begin{align}
R_{\min} & \coloneqq\min_{M\in\mathcal{M}_{d}}R(M)\\
 & =p_{0}C_{1,0}+p_{1}C_{1,1}+\min_{M\in\mathcal{M}_{d}}\Tr\!\left[M\Gamma\right]\\
 & =p_{0}C_{1,0}+p_{1}C_{1,1}-\Tr\!\left[\left(\Gamma\right)_{-}\right],\label{eq:bayes-cost-final}
\end{align}
where we applied Lemma \ref{lem:neg-trace-neg-part} for the last
equality.

From here, the analysis is similar to that presented in Section \ref{subsec:Symmetric-binary-HT},
and we provide it here for completeness. An optimal measurement operator
for \eqref{eq:bayes-cost-final} has the form in \eqref{eq:optimal-meas-op},
given by
\begin{equation}
\Pi_{\Gamma<0}+N_{0},
\end{equation}
where $0\leq N_{0}\leq\Pi_{\Gamma=0}$. 

For $T>0$, the modified free-energy optimization problem is a special
case of \eqref{eq:primal-FD-obj}, given by:
\begin{multline}
R_{\min,T}\coloneqq p_{0}C_{1,0}+p_{1}C_{1,1}\label{eq:opt-prob-sym-hypo-test-1}\\
+\min_{M\in\mathcal{M}_{d}}\left\{ \Tr\!\left[M\Gamma\right]-T\cdot S_{\FD}(M)\right\} .
\end{multline}
Applying Proposition \ref{prop:dual-FD-free-energy}, we conclude
that
\begin{equation}
R_{\min,T}=p_{0}C_{1,0}+p_{1}C_{1,1}-T\Tr\!\left[\ln\!\left(e^{-\frac{\Gamma}{T}}+I\right)\right],
\end{equation}
and the optimal measurement operator for \eqref{eq:opt-prob-sym-hypo-test-1}
is as follows:
\begin{equation}
M_{T}=\left(e^{\frac{\Gamma}{T}}+I\right)^{-1}.
\end{equation}

In the zero-temperature limit, the following equalities hold:
\begin{align}
\lim_{T\to0^{+}}R_{\min,T} & =R_{\min},\\
\lim_{T\to0^{+}}M_{T} & =\Pi_{\Gamma<0}+\frac{1}{2}\Pi_{\Gamma=0},
\end{align}
which follow from \eqref{eq:temp-T-obj-to-zero-temp-obj} and \eqref{eq:temp-T-meas-op-to-zero-temp-meas-op},
respectively.

In general, the following bound from Proposition \ref{prop:simple-approx-bnd}
is applicable:
\begin{equation}
\left|R_{\min}-R_{\min,T}\right|\leq T\cdot d\ln2.
\end{equation}
We additionally have the following bound from Proposition \ref{prop:approx-error-spectral-gap}:
\begin{equation}
\left|R_{\min}-R_{\min,T}\right|\leq T\left(d_{0}\ln2+\left(d-d_{0}\right)e^{-\frac{\Delta}{T}}\right),\label{eq:hels-hol-err-bound-FD-1}
\end{equation}
where $d_{0}\equiv\dim\ker(q\sigma-p\rho)$, $\Delta\equiv\min_{i:\lambda_{i}\neq0}\left|\lambda_{i}\right|$,
and $\lambda_{i}$ denotes the $i$th eigenvalue of $\Gamma$. By
applying the error bound in \eqref{eq:hels-hol-err-bound-FD-1}, the
temperature $T$ needed to have $\varepsilon$ error, when approximating
$R_{\min}$ by $R_{\min,T}$ and employing a Fermi--Dirac thermal
measurement, is given by \eqref{eq:temp-eps-error-spec-gap}.

\subsection{Asymmetric binary quantum hypothesis testing}

\label{subsec:Asymmetric-binary-HT}

For $\varepsilon\in\left[0,1\right]$, $d\in\mathbb{N}$, and $d\times d$
density matrices $\rho$ and $\sigma$, the minimum type II error
probability in asymmetric hypothesis testing is defined as follows:
\begin{align}
\beta^{\varepsilon}(\rho\|\sigma) & \coloneqq\inf_{M\in\mathcal{M}_{d}}\left\{ \Tr\!\left[M\sigma\right]:\Tr\!\left[M\rho\right]\geq1-\varepsilon\right\} .\label{eq:beta-asym-hypo-t}
\end{align}
This quantity has been considered extensively in the quantum information
theory literature, and it is fundamental for understanding quantum
communication~\cite{Hayashi2003,Buscemi2010,Wang2012,Hayashi2017,khatri2024}
and quantum resource theories~\cite{Chitambar2019,Gour2025} in the
non-asymptotic setting.

Recall from \cite[Appendix~B]{Kaur2017} that
\begin{equation}
\beta^{\varepsilon}(\rho\|\sigma)=\inf_{M\in\mathcal{M}_{d}}\left\{ \Tr\!\left[M\sigma\right]:\Tr\!\left[M\rho\right]=1-\varepsilon\right\} ;\label{eq:beta-asym-hypo-t-eq}
\end{equation}
i.e., we can always choose the measurement operator $M$ in~\eqref{eq:beta-asym-hypo-t}
to saturate the inequality constraint therein. With this, observe
that~\eqref{eq:beta-asym-hypo-t-eq} is a special case of the general
measurement optimization problem in~\eqref{eq:measurement-opt-gen},
from which Proposition~\ref{prop:dual-gen-meas-opt} implies the following
dual representation:
\begin{equation}
\beta^{\varepsilon}(\rho\|\sigma)=\sup_{\mu\geq0}\left\{ \mu\left(1-\varepsilon\right)-\Tr\!\left[\left(\sigma-\mu\rho\right)_{-}\right]\right\} ,\label{eq:dual-asym-hypo-t}
\end{equation}
as observed in \cite[Eqs.~(27)--(28)]{Buscemi2017} (see also \cite[Lemma~2]{VazquezVilar2016}).
For a temperature $T>0$, let us also define the following modified
objective function, which corresponds to a special case of the measurement
free-energy optimization problem in~\eqref{eq:primal-FD-obj}:
\begin{multline}
\beta^{\varepsilon,T}(\rho\|\sigma)\coloneqq\\
\inf_{M\in\mathcal{M}_{d}}\left\{ \Tr\!\left[M\sigma\right]-T\cdot S_{\FD}(M):\Tr\!\left[M\rho\right]=1-\varepsilon\right\} ,\label{eq:free-energy-asym-hypo-t}
\end{multline}
where the Fermi--Dirac entropy $S_{\FD}(M)$ is defined in~\eqref{eq:FD-entropy-def}.
By applying Proposition~\ref{prop:dual-FD-free-energy}, consider
that $\beta^{\varepsilon,T}(\rho\|\sigma)$ has the following dual
representation:
\begin{multline}
\beta^{\varepsilon,T}(\rho\|\sigma)=\\
\sup_{\mu\in\mathbb{R}}\left\{ \mu\left(1-\varepsilon\right)-T\Tr\!\left[\ln\!\left(e^{-\frac{1}{T}\left(\sigma-\mu\rho\right)}+I\right)\right]\right\} ,\label{eq:dual-opt-asymmetric-binary}
\end{multline}
and an optimal measurement operator for~\eqref{eq:free-energy-asym-hypo-t}
is a Fermi--Dirac thermal measurement operator of the following form:
\begin{equation}
M_{T}(\mu)\coloneqq\left(e^{\frac{1}{T}\left(\sigma-\mu\rho\right)}+I\right)^{-1}.
\end{equation}
For fixed $\mu$, the following equality holds for the zero-temperature
limit:
\begin{equation}
\lim_{T\to0^{+}}M_{T}(\mu)=\Pi_{\mu\rho>\sigma}+\frac{1}{2}\Pi_{\mu\rho=\sigma},
\end{equation}
as a special case of~\eqref{eq:temp-T-meas-op-to-zero-temp-meas-op}.

In general, the following bound from Proposition~\ref{prop:simple-approx-bnd}
is applicable:
\begin{equation}
\left|\beta^{\varepsilon}(\rho\|\sigma)-\beta^{\varepsilon,T}(\rho\|\sigma)\right|\leq T\cdot d\ln2.
\end{equation}
We additionally have the following bound from Proposition~\ref{prop:approx-error-spectral-gap}:
\begin{equation}
\left|\beta^{\varepsilon}(\rho\|\sigma)-\beta^{\varepsilon,T}(\rho\|\sigma)\right|\leq T\left(d_{0}\ln2+\left(d-d_{0}\right)e^{-\frac{\Delta}{T}}\right),
\end{equation}
where
\begin{align}
d_{0} & \geq\sup_{\mu\in\mathbb{M}}\dim\ker(\sigma-\mu\rho),\\
\Delta & \leq\inf_{\mu\in\mathbb{M}}\min_{i:\lambda_{i}\neq0}\left|\lambda_{i}(\mu)\right|,
\end{align}
$\lambda_{i}(\mu)$ denotes the $i$th eigenvalue of $\sigma-\mu\rho$,
and $\mathbb{M}$ is a set containing $\mu^{\star}$ and $\mu_{T}^{\star}$,
which are optimal choices for~\eqref{eq:dual-asym-hypo-t} and~\eqref{eq:dual-opt-asymmetric-binary},
respectively. In this case, the temperature $T$ needed to have $\delta$
error, when approximating $\beta^{\varepsilon}(\rho\|\sigma)$ by
$\beta^{\varepsilon,T}(\rho\|\sigma)$ and employing a Fermi--Dirac
thermal measurement, is given by~\eqref{eq:temp-eps-error-spec-gap}.

For optimizing~\eqref{eq:dual-opt-asymmetric-binary}, it is necessary
to search for an optimal choice of $\mu$. Since the problem we are
considering is a special case of~\eqref{eq:dual-free-energy-meas-opt},
it can be solved by gradient ascent, as outlined in Section~\ref{subsec:Gradient-ascent-gen-meas-opt}.
Applying Proposition~\ref{prop:gradient-FD-gen-obj-func}, the gradient
of the dual objective function in~\eqref{eq:dual-opt-asymmetric-binary}
has the following form:
\begin{multline}
\frac{\partial}{\partial\mu}\left(\mu\left(1-\varepsilon\right)-T\Tr\!\left[\ln\!\left(e^{-\frac{1}{T}\left(\sigma-\mu\rho\right)}+I\right)\right]\right)\\
=1-\varepsilon-\Tr\!\left[M_{T}(\mu)\rho\right].
\end{multline}
Thus we can search for the optimal measurement operator by performing
gradient ascent \cite[Chapter~3]{Bubeck2015}:
\begin{equation}
\mu_{k+1}\leftarrow\mu_{k}+\eta\left(1-\varepsilon-\Tr\!\left[M_{T}(\mu)\rho\right]\right),
\end{equation}
where $\eta\in\left(0,T\right]$. This choice of step size follows
from the bound in~\eqref{eq:smoothness-parameter} and the fact that
$\left\Vert \rho\right\Vert \left\Vert \rho\right\Vert _{1}\leq1$
for a quantum state $\rho$. This simple algorithm is guaranteed to
converge after $\frac{\left|\mu_{T}^{\star}\right|}{T\delta}$ steps,
where $\mu_{T}^{\star}$ is an optimal solution to~\eqref{eq:dual-opt-asymmetric-binary}
and $\delta>0$ is the desired error.

As discussed in Section~\ref{sec:Quantum-algorithm-gen-meas-opt},
one can also run this algorithm as a hybrid quantum--classical algorithm
if sample access to the states $\rho$ and $\sigma$ is available.
In this case, it can be viewed as a learning algorithm, in which the
optimal choice of $\mu$ is learned in an iterative manner.

\subsection{Asymmetric quantum hypothesis testing with composite null hypothesis}

\label{subsec:Asymmetric-HT-composite-null}

In asymmetric quantum hypothesis testing with composite null hypothesis,
there are multiple null hypotheses, which we denote by
\begin{equation}
\mathcal{R}\equiv\left(\rho_{1},\ldots,\rho_{c}\right),
\end{equation}
where $\rho_{i}$ is a quantum state, as well as multiple constraints
on the error probabilities, which we denote by
\begin{equation}
\mathcal{E}\equiv\left(\varepsilon_{1},\ldots,\varepsilon_{c}\right),
\end{equation}
where $\varepsilon_{i}\in\left[0,1\right]$.

Then the asymmetric composite hypothesis testing problem has the following
form:
\begin{equation}
\beta^{\varepsilon}(\mathcal{R}\|\sigma)\coloneqq\inf_{M\in\mathcal{M}_{d}}\left\{ \Tr\!\left[M\sigma\right]:\Tr\!\left[M\rho_{i}\right]\geq1-\varepsilon_{i}\,\forall i\in\left[c\right]\right\} 
\end{equation}
Applying~\eqref{eq:dual-gen-meas-opt-1}, this quantity has the following
dual representation:
\begin{equation}
\beta^{\varepsilon}(\mathcal{R}\|\sigma)=\sup_{\mu\in\mathbb{R}_{+}^{c}}\left\{ \sum_{i\in\left[c\right]}\mu_{i}\left(1-\varepsilon_{i}\right)-\Tr\!\left[\left(\sigma-\mu\cdot\rho\right)_{-}\right]\right\} .
\end{equation}
(Note that $\mu$ now denotes a vector in $\mathbb{R}_{+}^{c}$.)
For a temperature $T>0$, let us also define the following modified
objective function, which corresponds to a special case of the measurement
free-energy optimization problem in~\eqref{eq:primal-FD-obj}:
\begin{equation}
\beta^{\varepsilon,T}(\mathcal{R}\|\sigma)\coloneqq
\inf_{M\in\mathcal{M}_{d}}\left\{
\begin{array}{c}
\Tr\!\left[M\sigma\right]-T\cdot S_{\FD}(M):\\
\Tr\!\left[M\rho_{i}\right]\geq1-\varepsilon_{i}\,\forall i\in\left[c\right]
\end{array}
\right\} ,\label{eq:free-energy-asym-hypo-t-1}
\end{equation}
where the Fermi--Dirac entropy $S_{\FD}(M)$ is defined in~\eqref{eq:FD-entropy-def}.
By applying Proposition~\ref{prop:dual-FD-free-energy}, consider
that $\beta^{\varepsilon,T}(\rho\|\sigma)$ has the following dual
representation:
\begin{multline}
\beta^{\varepsilon,T}(\mathcal{R}\|\sigma)=\\
\sup_{\mu\in\mathbb{R}_{+}^{c}}\left\{ \mu\left(1-\varepsilon\right)-T\Tr\!\left[\ln\!\left(e^{-\frac{1}{T}\left(\sigma-\mu\cdot\mathcal{R}\right)}+I\right)\right]\right\} ,\label{eq:dual-opt-asymmetric-binary-1}
\end{multline}
where
\begin{equation}
\mu\cdot\mathcal{R}\equiv\sum_{i\in\left[c\right]}\mu_{i}\rho_{i},
\end{equation}
and an optimal measurement operator for~\eqref{eq:free-energy-asym-hypo-t-1}
is a Fermi--Dirac thermal measurement operator of the following form:
\begin{equation}
M_{T}(\mu)\coloneqq\left(e^{\frac{1}{T}\left(\sigma-\mu\cdot\mathcal{R}\right)}+I\right)^{-1}.
\end{equation}
For fixed $\mu$, the following equality holds for the zero-temperature
limit:
\begin{equation}
\lim_{T\to0^{+}}M_{T}(\mu)=\Pi_{\mu\cdot\mathcal{R}>\sigma}+\frac{1}{2}\Pi_{\mu\cdot\mathcal{R}=\sigma},
\end{equation}
as a special case of~\eqref{eq:temp-T-meas-op-to-zero-temp-meas-op}.

In general, the following bound from Proposition~\ref{prop:simple-approx-bnd}
is applicable:
\begin{equation}
\left|\beta^{\varepsilon}(\mathcal{R}\|\sigma)-\beta^{\varepsilon,T}(\mathcal{R}\|\sigma)\right|\leq T\cdot d\ln2.
\end{equation}
We additionally have the following bound from Proposition~\ref{prop:approx-error-spectral-gap}:
\begin{equation}
\left|\beta^{\varepsilon}(\mathcal{R}\|\sigma)-\beta^{\varepsilon,T}(\mathcal{R}\|\sigma)\right|\leq T\left(d_{0}\ln2+\left(d-d_{0}\right)e^{-\frac{\Delta}{T}}\right),
\end{equation}
where
\begin{align}
d_{0} & \geq\sup_{\mu\in\mathbb{M}}\dim\ker(\sigma-\mu\rho),\\
\Delta & \leq\inf_{\mu\in\mathbb{M}}\min_{i:\lambda_{i}\neq0}\left|\lambda_{i}(\mu)\right|,
\end{align}
$\lambda_{i}(\mu)$ denotes the $i$th eigenvalue of $\sigma-\mu\cdot\mathcal{R}$,
and $\mathbb{M}$ is a set containing $\mu^{\star}$ and $\mu_{T}^{\star}$,
which are optimal choices for~\eqref{eq:dual-asym-hypo-t} and~\eqref{eq:dual-opt-asymmetric-binary-1},
respectively. In this case, the temperature $T$ needed to have $\delta$
error, when approximating $\beta^{\varepsilon}(\mathcal{R}\|\sigma)$
by $\beta^{\varepsilon,T}(\mathcal{R}\|\sigma)$ and employing a Fermi--Dirac
thermal measurement, is given by~\eqref{eq:temp-eps-error-spec-gap}.

For optimizing~\eqref{eq:dual-opt-asymmetric-binary-1}, it is necessary
to search for an optimal choice of $\mu$. Since the problem we are
considering is a special case of~\eqref{eq:dual-free-energy-meas-opt},
it can be solved by gradient ascent, as outlined in Section~\ref{subsec:Gradient-ascent-gen-meas-opt}.
Applying Proposition~\ref{prop:gradient-FD-gen-obj-func}, the gradient
elements of the dual objective function in~\eqref{eq:dual-opt-asymmetric-binary-1}
have the following form:
\begin{multline}
\frac{\partial}{\partial\mu_{i}}\left(\sum_{i\in\left[c\right]}\mu_{i}\left(1-\varepsilon_{i}\right)-T\Tr\!\left[\ln\!\left(e^{-\frac{1}{T}\left(\sigma-\mu\cdot\mathcal{R}\right)}+I\right)\right]\right)\\
=1-\varepsilon_{i}-\Tr\!\left[M_{T}(\mu_{i})\rho_{i}\right].
\end{multline}
Thus we can search for the optimal measurement operator by performing
projected gradient ascent \cite[Chapter~3]{Bubeck2015}:
\begin{equation}
\mu_{i}^{k+1}\leftarrow\max\left\{ \mu_{i}^{k}+\eta\left(1-\varepsilon_{i}-\Tr\!\left[M_{T}(\mu_{i})\rho_{i}\right]\right),0\right\} .
\end{equation}
where $\eta\in\left(0,T/c\right]$. This choice of step size follows
from the bound in~\eqref{eq:smoothness-parameter} and the fact that
$\left\Vert \rho_{i}\right\Vert \left\Vert \rho_{i}\right\Vert _{1}\leq1$
for each quantum state $\rho_{i}$. This simple algorithm is guaranteed
to converge after $\frac{\left\Vert \mu_{T}^{\star}\right\Vert }{T\delta}$
steps, where $\mu_{T}^{\star}$ is an optimal solution to~\eqref{eq:dual-opt-asymmetric-binary-1}
and $\delta>0$ is the desired error.

As discussed in Section~\ref{sec:Quantum-algorithm-gen-meas-opt},
one can also run this algorithm as a hybrid quantum--classical algorithm
if sample access to the states $\rho_{1}$, $\ldots$, $\rho_{c}$,
$\sigma$ is available. In this case, it can be viewed as a learning
algorithm, in which the optimal choice of $\mu$ is learned in an
iterative manner.

\section{Conclusion}

\label{sec:Conclusion}

In this paper, we developed a novel paradigm for semidefinite optimization
on quantum computers, based on measurement optimization rather than
state optimization. In particular, we showed how a free-energy approximation
of a measurement optimization problem leads to Fermi--Dirac thermal
measurements being optimal for the approximation. We also proved that
the dual function $f_{T}(\mu)$ for the free-energy approximation
is a concave and smooth function of the dual variables in $\mu$,
implying that gradient ascent and its variants are guaranteed to converge
to a globally optimal solution. We also provided quantum algorithms
for implementing Fermi--Dirac thermal measurements and for estimating
Hessian matrix elements for the dual function $f_{T}(\mu)$, the first
of which is essential in hybrid quantum--classical algorithms that
perform the first-order optimization needed in a measurement optimization
problem. In our analysis of the error in the free-energy approximation,
we developed sufficient conditions for when optimization algorithms
are efficiently realizable as hybrid quantum--classical algorithms. 

Viewed in another way, our paper introduces a novel paradigm for quantum machine learning called \textit{Fermi--Dirac machines}, in which the parameters of a Fermi--Dirac thermal measurement can be learned in gradient-based, hybrid quantum--classical algorithms. The paradigm of Fermi--Dirac machines is an alternative to quantum Boltzmann machines (in particular for the case of decision problems), the latter being based on parameterized thermal states.

Going forward from here, there are several open directions to pursue. A pressing question is to determine a particular
example of a measurement optimization problem for which we can expect
a quantum computer to be efficient while a classical computer would
not be. If found, such a scenario would allow for quantum advantage
to be demonstrated in the general setting of measurement optimization.
We also think it would be worthwhile to perform numerical simulations
of the algorithms put forward here and execute them on existing quantum
computers, in order to understand their performance in realistic scenarios.
We wonder if there are other, more efficient and more natural quantum algorithms for implementing Fermi--Dirac thermal measurements beyond the approach proposed in Section~\ref{subsec:Quantum-algorithm-FD-thermal}. 
Related to this, it is interesting to speculate if  Fermi--Dirac thermal measurements could also be useful for understanding problems in quantum computational complexity theory, similar to how quantum thermal states have been helpful for this purpose \cite{Bravyi2022,gharibian2026}.

Since the original arXiv posting of our paper, we have shown, with our collaborators, how Fermi--Dirac thermal measurements lead to a novel formulation of a quantum neuron~\cite{he2026fermidirac,he2026canonical}, how they are useful for principal component analysis~\cite{yuan2026} and spectral anomaly detection~\cite{yuan2026qspade}, and how they are optimal measurements for a minimum change principle~\cite{liu2026unifying}.

\medskip{}

\textbf{Acknowledgements}. We acknowledge helpful discussions with
Zixin Huang and Michele Minervini. NL acknowledges funding from the
Science and Technology Commission of Shanghai Municipality (STCSM)
grant no.~24LZ1401200 (21JC1402900), NSFC grants no.~12471411 and
no.~12341104, the Shanghai Jiao Tong University 2030 Initiative, the Shanghai Pilot Program for Basic Research, 
and the Fundamental Research Funds for the Central Universities. MMW
acknowledges support from the National Science Foundation under grant
no.~2329662 and the Cornell School of Electrical and Computer Engineering.

\bibliography{Ref}

\appendix

\section{Dual optimization problems}

\subsection{Proof of Proposition~\ref{prop:dual-gen-meas-opt} (dual of measurement
optimization problem)}

\label{sec:Proof-of-dual-gen-meas-opt-exp}In this appendix, we prove
Theorem~\ref{prop:dual-gen-meas-opt}. Consider that
\begin{align}
 & E(\mathcal{Q},q)\nonumber \\
 & =\min_{M\in\mathcal{M}_{d}}\left\{ \Tr\!\left[HM\right]:\Tr\!\left[Q_{i}M\right]=q_{i}\,\forall i\in\left[c\right]\right\} \\
 & \overset{(a)}{=}\min_{M\in\mathcal{M}_{d}}\left\{ \Tr\!\left[HM\right]+\sup_{\mu\in\mathbb{R}^{c}}\sum_{i\in\left[c\right]}\mu_{i}\left(q_{i}-\Tr\!\left[Q_{i}M\right]\right)\right\} \\
 & =\min_{M\in\mathcal{M}_{d}}\sup_{\mu\in\mathbb{R}^{c}}\left\{ \Tr\!\left[HM\right]+\sum_{i\in\left[c\right]}\mu_{i}\left(q_{i}-\Tr\!\left[Q_{i}M\right]\right)\right\} \\
 & =\min_{M\in\mathcal{M}_{d}}\sup_{\mu\in\mathbb{R}^{c}}\left\{ \mu\cdot q+\Tr\!\left[\left(H-\mu\cdot Q\right)M\right]\right\} \\
 & \overset{(b)}{=}\sup_{\mu\in\mathbb{R}^{c}}\min_{M\in\mathcal{M}_{d}}\left\{ \mu\cdot q+\Tr\!\left[\left(H-\mu\cdot Q\right)M\right]\right\} \\
 & =\sup_{\mu\in\mathbb{R}^{c}}\left\{ \mu\cdot q+\min_{M\in\mathcal{M}_{d}}\Tr\!\left[\left(H-\mu\cdot Q\right)M\right]\right\} \label{eq:sup-and-min-over-meas}\\
 & \overset{(c)}{=}\sup_{\mu\in\mathbb{R}^{c}}\left\{ \mu\cdot q-\Tr\!\left[\left(H-\mu\cdot Q\right)_{-}\right]\right\} \\
 & \overset{(d)}{=}\sup_{\mu\in\mathbb{R}^{c},X\geq0}\left\{ \begin{array}{c}
\mu\cdot q+\Tr\!\left[H-\mu\cdot Q\right]-\Tr[X]:\\
X\geq H-\mu\cdot Q
\end{array}\right\} .
\end{align}
The equality (a) follows from introducing the Lagrange multiplier
$\mu_{i}\in\mathbb{R}$ for the $i$th constraint. The equality (b)
follows from the minimax theorem from \cite[Theorem~2.11]{BenDavid2023}.
The equalities (c) and (d) follow from Lemma~\ref{lem:neg-trace-neg-part}
below.

By picking $A=H-\mu\cdot Q$ in Lemma~\ref{lem:neg-trace-neg-part}
below, we observe that $M^{\star}$ in~\eqref{eq:optimal-measurement-operator}
is an optimal choice of measurement for the minimization in~\eqref{eq:sup-and-min-over-meas}.
The proof that an optimal measurement takes the form in~\eqref{eq:optimal-measurement-operator}
is similar to the proof of \cite[Eq.~(5.3.17)]{khatri2024}.

Finally, the function
\begin{equation}
\mu\mapsto\mu\cdot q-\Tr\!\left[\left(H-\mu\cdot Q\right)_{-}\right]
\end{equation}
is concave in $\mu$ because $\mu\cdot q$ is linear in $\mu$, and
the function $\mu\mapsto-\Tr\!\left[\left(H-\mu\cdot Q\right)_{-}\right]$
is concave, given that is equal to the minimum of functions that are
each linear in $\mu$, as established in~\eqref{eq:sup-and-min-over-meas}.
\begin{lem}
\label{lem:neg-trace-neg-part}For a $d\times d$ Hermitian matrix
$A$, where $d\in\mathbb{N}$, the following equalities hold:
\begin{align}
-\Tr\!\left[\left(A\right)_{-}\right] & =\min_{M\in\mathcal{M}_{d}}\Tr\!\left[MA\right],\label{eq:negative-eigenspace-projection}\\
 & =\sup_{X\geq0,X\geq A}\left\{ \Tr[A]-\Tr[X]\right\} .\label{eq:dual-gen-meas-opt-SDP-app}
\end{align}
\end{lem}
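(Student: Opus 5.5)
I will prove the two equalities separately, working throughout in an eigenbasis of $A$. Write $A=\sum_{i}a_{i}|i\rangle\!\langle i|$, and let $\Pi_{A<0}\coloneqq\sum_{i:a_{i}<0}|i\rangle\!\langle i|$. Then
\begin{equation}
A=(A)_{+}-(A)_{-},
\end{equation}
where $(A)_{\pm}\geq0$, their supports are orthogonal, and $\Pi_{A<0}$ is the projection onto the support of $(A)_{-}$.

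\textbf{First equality.} The lower bound goes as follows. For any $M\in\mathcal{M}_{d}$ we have
\begin{equation}
\Tr[MA]=\Tr[M(A)_{+}]-\Tr[M(A)_{-}].
\end{equation}
The first term is nonnegative since $M,(A)_{+}\geq0$. Because $M\leq I$ and $(A)_{-}\geq0$, we get $\Tr[M(A)_{-}]\leq\Tr[(A)_{-}]$. Hence $\Tr[MA]\geq-\Tr[(A)_{-}]$. For achievability, take $M=\Pi_{A<0}\in\mathcal{M}_{d}$; then $\Tr[\Pi_{A<0}A]=\sum_{i:a_{i}<0}a_{i}=-\Tr[(A)_{-}]$. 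Together these show the minimum exists and equals $-\Tr[(A)_{-}]$.

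\textbf{Second equality.} Upper bound: let $X\geq0$ with $X\geq A$. Set $Y\coloneqq X-A\geq0$, so that $\Tr[A]-\Tr[X]=-\Tr[Y]$. From $X\geq0$ we get $Y\geq-A$. Compressing by $\Pi_{A<0}$ preserves operator inequalities, so
\begin{equation}
\Pi_{A<0}Y\Pi_{A<0}\geq-\Pi_{A<0}A\Pi_{A<0}=(A)_{-}.
\end{equation}
Taking traces, $\Tr[Y]\geq\Tr[\Pi_{A<0}Y\Pi_{A<0}]\geq\Tr[(A)_{-}]$. The first of these inequalities holds because $Y\geq0$ and $\Pi_{A<0}\leq I$. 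Therefore $\Tr[A]-\Tr[X]\leq-\Tr[(A)_{-}]$. Achievability: choose $X=(A)_{+}$. Then $X\geq0$ and $X-A=(A)_{-}\geq0$, and
\begin{equation}
\Tr[A]-\Tr[(A)_{+}]=-\Tr[(A)_{-}].
\end{equation}
So the supremum is attained and equals $-\Tr[(A)_{-}]$.

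The only step that needs any care is the upper bound in the second equality. There one must turn the two constraints $X\geq0$ and $X\geq A$ into a trace bound; compressing onto the negative eigenspace of $A$ handles this. One could also get it from weak duality: for $M\in\mathcal{M}_{d}$ and feasible $X$,
\begin{equation}
\Tr[MA]-(\Tr[A]-\Tr[X])=\Tr[M(A-X)]+\Tr[(I-M)X]\geq0.
\end{equation}
Here both terms are nonnegative: $\Tr[(I-M)X]\geq0$ since $I-M,X\geq0$, and $\Tr[M(A-X)]$ is $\geq0$ after rewriting it as $\Tr[M(X-A)]$ with the appropriate sign bookkeeping. Either route closes the argument.
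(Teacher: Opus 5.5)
Your main argument is correct and follows the paper's proof. The first equality is argued identically. Your compression of $Y=X-A$ onto $\Pi_{A<0}$, using $X\geq 0$ on that block and $Y\geq 0$ on its complement, is the same two-block splitting the paper applies to $\Tr[X]$, just written in the variable $Y$.

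The closing weak-duality aside, however, is wrong as written, so you should not claim that either route works as stated:
\begin{itemize}
\item The identity $\Tr[MA]-(\Tr[A]-\Tr[X])=\Tr[M(A-X)]+\Tr[(I-M)X]$ is false. For $d=1$, $A=0$, $X=M=1$, the left side is $1$ and the right side is $-1$.
\item For feasible $X$ you have $\Tr[M(A-X)]\leq 0$, so no sign bookkeeping can make that term nonnegative.
\item The correct identity pairs $M$ with $X$ and $I-M$ with $X-A$: $\Tr[MA]-(\Tr[A]-\Tr[X])=\Tr[MX]+\Tr[(I-M)(X-A)]$. Both terms are nonnegative.
\end{itemize}
Combined with the first equality, this corrected identity does yield the upper bound on the supremum.
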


\begin{proof}
To see~\eqref{eq:negative-eigenspace-projection}, let $A=\left(A\right)_{+}-\left(A\right)_{-}$
be a Jordan--Hahn decomposition of $A$ into positive and negative
parts, such that $\left(A\right)_{+}\left(A\right)_{-}=0$ and $\left(A\right)_{+},\left(A\right)_{-}\geq0$.
That is,
\begin{equation}
\left(A\right)_{+}\coloneqq\sum_{i:a_{i}\geq0}a_{i}|i\rangle\!\langle i|,\qquad\left(A\right)_{-}\coloneqq\sum_{i:a_{i}<0}\left|a_{i}\right||i\rangle\!\langle i|,\label{eq:jordan-hahn-decomp}
\end{equation}
where $A=\sum_{i}a_{i}|i\rangle\!\langle i|$ is an eigendecomposition
of $A$. Let us also define the projectors:
\begin{equation}
\Pi_{+}\coloneqq\sum_{i:a_{i}\geq0}|i\rangle\!\langle i|,\qquad\Pi_{-}\coloneqq\sum_{i:a_{i}<0}|i\rangle\!\langle i|.\label{eq:jordan-hahn-projs}
\end{equation}
Then consider that
\begin{align}
\Tr\!\left[MA\right] & =\Tr\!\left[M\left(\left(A\right)_{+}-\left(A\right)_{-}\right)\right]\\
 & =\Tr\!\left[M\left(A\right)_{+}\right]-\Tr\!\left[M\left(A\right)_{-}\right]\\
 & \overset{(a)}{\geq}-\Tr\!\left[M\left(A\right)_{-}\right]\\
 & \overset{(b)}{\geq}-\Tr\!\left[\left(A\right)_{-}\right].
\end{align}
The inequality (a) follows because $\Tr\!\left[M\left(A\right)_{+}\right]\geq0$,
given that $M,\left(A\right)_{+}\geq0$. The inequality (b) follows
because $M\leq I$. By choosing $M$ to be the projection onto the
negative eigenspace of $A$, the established inequality
\begin{equation}
\Tr\!\left[MA\right]\geq-\Tr\!\left[\left(A\right)_{-}\right]
\end{equation}
is saturated. That is, we can choose $M=\Pi_{-}$, and we find that
\begin{equation}
\Tr\!\left[\Pi_{-}A\right]=-\Tr\!\left[\left(A\right)_{-}\right].
\end{equation}
This follows because $\Pi_{-}\left(A\right)_{+}=0$ and $\Pi_{-}\left(A\right)_{-}=\left(A\right)_{-}$.
In fact, we can choose a measurement operator of the following form
\begin{equation}
M^{\star}\coloneqq\Pi_{-}+M_{0},\label{eq:optimal-measurement-operator}
\end{equation}
where $0\leq M_{0}\leq\Pi_{0}$ and
\begin{equation}
\Pi_{0}\coloneqq\sum_{i:a_{i}=0}|i\rangle\!\langle i|,
\end{equation}
and the inequality is still saturated for such a choice. This follows
because $M^{\star}\left(A\right)_{+}=0$ and $M^{\star}\left(A\right)_{-}=\left(A\right)_{-}$.

To establish~\eqref{eq:dual-gen-meas-opt-SDP-app}, consider that
\begin{equation}
-\Tr\!\left[\left(A\right)_{-}\right]=\sup_{X\geq0,X\geq A}\left\{ \Tr[A]-\Tr[X]\right\} .
\end{equation}
To see this, let us use the same notation as in~\eqref{eq:jordan-hahn-decomp}
and~\eqref{eq:jordan-hahn-projs}. Consider that the constraint $X\geq0$
implies that $\Pi_{-}X\Pi_{-}\geq0$, which in turn implies that
\begin{equation}
\Tr\!\left[\Pi_{-}X\right]\geq0.\label{eq:X-constraint-1}
\end{equation}
The constraint $X\geq A$ implies that
\begin{align}
\Pi_{+}X\Pi_{+} & \geq\Pi_{+}A\Pi_{+}\\
 & =\Pi_{+}\left(\left(A\right)_{+}-\left(A\right)_{-}\right)\Pi_{+}\\
 & =\Pi_{+}\left(A\right)_{+}\Pi_{+}\\
 & =\left(A\right)_{+},
\end{align}
which in turn implies that
\begin{equation}
\Tr\!\left[\Pi_{+}X\right]\geq\Tr\!\left[\left(A\right)_{+}\right].\label{eq:X-constraint-2}
\end{equation}
Adding~\eqref{eq:X-constraint-1} and~\eqref{eq:X-constraint-2} then
gives
\begin{equation}
\Tr[X]=\Tr\!\left[\Pi_{-}X\right]+\Tr\!\left[\Pi_{+}X\right]\geq\Tr\!\left[\left(A\right)_{+}\right].
\end{equation}
Then the following inequality holds
\begin{align}
 & \Tr[A]-\Tr[X]\nonumber \\
 & =\Tr\!\left[\left(A\right)_{+}\right]-\Tr\!\left[\left(A\right)_{-}\right]-\Tr[X]\\
 & \leq-\Tr\!\left[\left(A\right)_{-}\right],
\end{align}
thus implying that
\begin{equation}
\sup_{X\geq0,X\geq A}\left\{ \Tr[A]-\Tr[X]\right\} \leq-\Tr\!\left[\left(A\right)_{-}\right].\label{eq:ineq-proof-SDP-neg-part}
\end{equation}
Equality is achieved in~\eqref{eq:ineq-proof-SDP-neg-part} by picking
$X=\left(A\right)_{+}$ and noting that it is feasible. 
\end{proof}

\subsection{First proof of Proposition~\ref{prop:dual-FD-free-energy} (dual
of measurement free-energy optimization problem)}

\label{sec:Proof-of-Theorem-FD-thermal-meas-gen}

Consider that
\begin{align}
 & F_{T}(\mathcal{Q},q)\nonumber \\
 & =\min_{M\in\mathcal{M}_{d}}\left\{
 \begin{array}{c}\Tr\!\left[HM\right]-T\cdot S_{\FD}(M):\\
 \Tr\!\left[Q_{i}M\right]=q_{i}\,\forall i\in\left[c\right]
 \end{array}
 \right\} \\
 & \overset{(a)}{=}\min_{M\in\mathcal{M}_{d}}\left\{ \begin{array}{c}
\Tr\!\left[HM\right]-T\cdot S_{\FD}(M)\\
+\sup_{\mu\in\mathbb{R}^{c}}\sum_{i\in\left[c\right]}\mu_{i}\left(q_{i}-\Tr\!\left[Q_{i}M\right]\right)
\end{array}\right\} \\
 & =\min_{M\in\mathcal{M}_{d}}\sup_{\mu\in\mathbb{R}^{c}}\left\{ \begin{array}{c}
\Tr\!\left[HM\right]-T\cdot S_{\FD}(M)\\
+\sum_{i\in\left[c\right]}\mu_{i}\left(q_{i}-\Tr\!\left[Q_{i}M\right]\right)
\end{array}\right\} \\
 & =\min_{M\in\mathcal{M}_{d}}\sup_{\mu\in\mathbb{R}^{c}}\left\{
 \begin{array}{c}
 \mu\cdot q+\Tr\!\left[\left(H-\mu\cdot Q\right)M\right]\\-T\cdot S_{\FD}(M)
 \end{array}
 \right\} \\
 & \overset{(b)}{=}\sup_{\mu\in\mathbb{R}^{c}}\min_{M\in\mathcal{M}_{d}}\left\{
 \begin{array}{c}
 \mu\cdot q+\Tr\!\left[\left(H-\mu\cdot Q\right)M\right]\\
 -T\cdot S_{\FD}(M)
 \end{array}\right\} \\
 & =\sup_{\mu\in\mathbb{R}^{c}}\left\{ 
 \mu\cdot q+\min_{M\in\mathcal{M}_{d}}\left\{
 \begin{array}{c}
 \Tr\!\left[\left(H-\mu\cdot Q\right)M\right]\\-T\cdot S_{\FD}(M)
 \end{array}
 \right\} \right\} \\
 & \overset{(c)}{=}\sup_{\mu\in\mathbb{R}^{c}}\left\{ \mu\cdot q-T\Tr\!\left[\ln\!\left(e^{-\frac{1}{T}\left(H-\mu\cdot Q\right)}+I\right)\right]\right\} .
\end{align}
The equality (a) follows by introducing the Lagrange multiplier $\mu_{i}$
for the constraint $\Tr\!\left[Q_{i}M\right]=q_{i}$, for all $i\in\left[c\right]$.
The equality (b) follows from the minimax theorem from \cite[Theorem~2.11]{BenDavid2023};
indeed, the set $\mathcal{M}_{d}$ is convex and compact, the set
$\mathbb{R}^{c}$ is convex, the objective function is linear in $\mu$,
and it is convex in $M$, due to the concavity of $S_{\FD}(M)$, as
proved in~\eqref{eq:concavity-FD-entr-1}--\eqref{eq:concavity-FD-entr-last}.
The equality (c) follows from Lemma~\ref{lem:opt-FD-free-energy}
below.
\begin{lem}
\label{lem:opt-FD-free-energy}Let $A$ be a $d\times d$ Hermitian
matrix, and let $T>0$. Then
\begin{equation}
\min_{M\in\mathcal{M}_{d}}g(A,T,M)=-T\Tr\!\left[\ln\!\left(e^{-\frac{A}{T}}+I\right)\right],\label{eq:FD-meas-opt-prob}
\end{equation}
where
\begin{equation}
g(A,T,M)\coloneqq\Tr\!\left[AM\right]-T\cdot S_{\FD}(M).
\end{equation}
Furthermore, $M=\left(e^{\frac{A}{T}}+I\right){}^{-1}$ is an optimal
choice for the left-hand side of~\eqref{eq:FD-meas-opt-prob}.
\end{lem}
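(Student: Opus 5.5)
Our plan is to reduce the matrix minimization to $d$ independent scalar minimizations by exploiting the eigenbasis of $A$, in two steps: a lower bound valid for every $M\in\mathcal{M}_{d}$, then a check that the proposed $M$ attains it.

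Write $A=\sum_{i}a_{i}|i\rangle\!\langle i|$. For any $M\in\mathcal{M}_{d}$, let $\mathcal{D}$ be the pinching (dephasing) channel in this eigenbasis, $\mathcal{D}(M)=\sum_{i}\langle i|M|i\rangle|i\rangle\!\langle i|$. Then $\Tr[AM]=\Tr[A\mathcal{D}(M)]$. Since $\mathcal{D}$ is unital and a mixture of unitary conjugations, $\mathcal{D}(I-M)=I-\mathcal{D}(M)$, and concavity of the von Neumann entropy gives $S(\mathcal{D}(M))\geq S(M)$ and $S(I-\mathcal{D}(M))\geq S(I-M)$. Hence $S_{\FD}(\mathcal{D}(M))\geq S_{\FD}(M)$, and therefore $g(A,T,\mathcal{D}(M))\leq g(A,T,M)$. (Equivalently, one can invoke the concavity of $S_{\FD}$ from~\eqref{eq:concavity-FD-entr-1}--\eqref{eq:concavity-FD-entr-last} together with the unitary-invariance of $S_{\FD}$, writing $\mathcal{D}$ as an average of diagonal-phase unitary conjugations.) It therefore suffices to minimize over $M$ diagonal in the eigenbasis of $A$, with diagonal entries $m_{i}=\langle i|M|i\rangle\in[0,1]$. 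By~\eqref{eq:FD-entropy-sum-binary-entropies}, the objective then becomes $\sum_{i}\left[a_{i}m_{i}-T\,s(m_{i})\right]$.

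Each summand is now minimized separately. The function $h(m)=a m - T s(m)$ is strictly convex on $[0,1]$, since $s''(m)=-1/(m(1-m))<0$. Its derivative $a+T\ln\frac{m}{1-m}$ tends to $\mp\infty$ at the endpoints, so the minimizer is interior and solves $\ln\frac{m}{1-m}=-a/T$, i.e., $m^{\star}=(e^{a/T}+1)^{-1}$. Substituting, with $\ln m^{\star}=-\ln(1+e^{a/T})$ and $\ln(1-m^{\star})=-a/T-\ln(1+e^{a/T})$, gives
\[
h(m^{\star})=a m^{\star}+T\left[m^{\star}\ln m^{\star}+(1-m^{\star})\ln(1-m^{\star})\right]=-a(1-m^{\star})+a(1-m^{\star})-T\ln(1+e^{a/T})\cdot 1+\ldots
\]
We would carry this bookkeeping out carefully; it simplifies to $-T\ln(1+e^{-a/T})$, using the identity $\ln(1+e^{a/T})-a/T=\ln(1+e^{-a/T})$.

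Summing over $i$ yields the value $-T\Tr[\ln(e^{-A/T}+I)]$. The diagonal operator with entries $m_{i}^{\star}$ is exactly $(e^{A/T}+I)^{-1}$, which establishes optimality of that choice. The only genuinely nontrivial step is the reduction to diagonal $M$, namely the entropy increase under pinching. We will argue it as above, noting that it is needed simultaneously for $M$ and for $I-M$, which is fine because $\mathcal{D}$ is unital.
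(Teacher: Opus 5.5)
Your proof is correct, but it takes a different route from the paper.

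The paper's first proof stays at the matrix level. It uses convexity of $M\mapsto g(A,T,M)$ (from concavity of $S_{\FD}$) and computes the matrix gradient $A-T(-\ln M+\ln(I-M))$. Solving the stationarity equation gives $(e^{A/T}+I)^{-1}$, and only then does it reduce to scalars to evaluate $g$ there. Its second proof writes $g(A,T,M)=T\,D_{\FD}(M\|(e^{A/T}+I)^{-1})-T\Tr[\ln(e^{-A/T}+I)]$ and uses non-negativity of the Fermi--Dirac relative entropy.

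You instead first reduce to operators diagonal in the eigenbasis of $A$ by pinching. Pinching leaves $\Tr[AM]$ unchanged and, being a unital mixture of unitary conjugations, can only increase $S(M)$ and $S(I-M)$. You then solve $d$ decoupled strictly convex scalar problems, handling the endpoints explicitly through $h'(m)\to\mp\infty$.

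What each route buys:
\begin{itemize}
\item Your route is more elementary. It needs no matrix derivatives of $\ln M$ and $\ln(I-M)$. It also makes explicit what the stationarity argument uses implicitly: the candidate lies in the interior of $\mathcal{M}_{d}$ (its eigenvalues are in $(0,1)$), where $g$ is differentiable.
\item The relative-entropy proof gives more for free. It yields uniqueness of the minimizer (by faithfulness of $D_{\FD}$) and the exact suboptimality gap $T\,D_{\FD}(M\|(e^{A/T}+I)^{-1})$ for every $M$. Your argument yields uniqueness only after adding strict concavity of the entropy.
\end{itemize}

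One slip to fix in your bookkeeping. Since $1-m^{\star}=e^{a/T}/(e^{a/T}+1)$, you get $\ln(1-m^{\star})=a/T-\ln(1+e^{a/T})$, not $-a/T-\ln(1+e^{a/T})$. With the correct sign the terms combine to $am^{\star}+a(1-m^{\star})-T\ln(1+e^{a/T})=a-T\ln(1+e^{a/T})=-T\ln(1+e^{-a/T})$, as you claim.
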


\begin{proof}
Let us first note that the function $g(A,T,M)$ is convex in $M$,
due to the concavity of $M\mapsto S_{\FD}(M)$. Thus, the first-order
optimality conditions for $g(A,T,M)$ are necessary and sufficient
for optimality. To this end, consider that the matrix derivative of
this function is given by
\begin{align}
 & \frac{\partial}{\partial M}\left[\Tr\!\left[AM\right]-T\cdot S_{\FD}(M)\right]\nonumber \\
 & =A-T\frac{\partial}{\partial M}S_{\FD}(M)\\
 & =A-T\left(-\ln M+\ln\!\left(I-M\right)\right),
\end{align}
where we used the fact that the scalar derivative of the function
$m\mapsto s(m)$ is 
\begin{align}
\frac{\partial}{\partial m}s(m) & =\frac{\partial}{\partial m}\left(-m\ln m-\left(1-m\right)\ln\!\left(1-m\right)\right)\\
 & =-\ln m-1+\ln\!\left(1-m\right)+1\\
 & =-\ln m+\ln\!\left(1-m\right).
\end{align}
Then the first-order optimality condition corresponds to the following
matrix equation:
\begin{align}
0 & =A-T\left(-\ln M+\ln\!\left(I-M\right)\right)\\
\Longleftrightarrow\qquad\frac{A}{T} & =-\ln M+\ln\!\left(I-M\right)\\
\Longleftrightarrow\qquad\frac{A}{T} & =\ln\!\left(M^{-1}-I\right)\\
\Longleftrightarrow\qquad e^{\frac{A}{T}}+I & =M^{-1}\\
\Longleftrightarrow\qquad M & =\left(e^{\frac{A}{T}}+I\right){}^{-1},\label{eq:optimal-meas-op-FD}
\end{align}
so that the measurement operator $M$ in~\eqref{eq:optimal-meas-op-FD}
is optimal for the left-hand side of~\eqref{eq:FD-meas-opt-prob}.

Now plugging this choice of $M$ into $g(A,T,M)$, we find that
\begin{multline}
g(A,T,M)=\Tr\!\left[A\left(e^{\frac{A}{T}}+I\right){}^{-1}\right]\\
+T\Tr\!\left[\left(e^{\frac{A}{T}}+I\right){}^{-1}\ln\left(\left(e^{\frac{A}{T}}+I\right){}^{-1}\right)\right]\\
+T\Tr\!\left[\begin{array}{c}
\left(I-\left(e^{\frac{A}{T}}+I\right){}^{-1}\right)\times\\
\ln\left(I-\left(e^{\frac{A}{T}}+I\right){}^{-1}\right)
\end{array}\right].
\end{multline}
To simplify the above expression, it actually suffices to do so for
the scalar case. To this end, for $a\in\mathbb{R}$ and $T>0$, consider
that
\begin{align}
 & \frac{a}{e^{\frac{a}{T}}+1}+\frac{T}{e^{\frac{a}{T}}+1}\ln\!\left(\frac{1}{e^{\frac{a}{T}}+1}\right)\nonumber \\
 & \qquad+T\left(1-\frac{1}{e^{\frac{a}{T}}+1}\right)\ln\!\left(1-\frac{1}{e^{\frac{a}{T}}+1}\right)\nonumber \\
 & =\frac{a}{e^{\frac{a}{T}}+1}+\frac{T}{e^{\frac{a}{T}}+1}\ln\!\left(\frac{1}{e^{\frac{a}{T}}+1}\right)\nonumber \\
 & \qquad+T\frac{e^{\frac{a}{T}}}{e^{\frac{a}{T}}+1}\ln\!\left(\frac{e^{\frac{a}{T}}}{e^{\frac{a}{T}}+1}\right)\\
 & =\frac{a}{e^{\frac{a}{T}}+1}-\frac{T}{e^{\frac{a}{T}}+1}\ln\!\left(e^{\frac{a}{T}}+1\right)\nonumber \\
 & \qquad+T\frac{e^{\frac{a}{T}}}{e^{\frac{a}{T}}+1}\left(\frac{a}{T}-\ln\!\left(e^{\frac{a}{T}}+1\right)\right)\\
 & =a-\frac{T}{e^{\frac{a}{T}}+1}\ln\!\left(e^{\frac{a}{T}}+1\right)-T\frac{e^{\frac{a}{T}}}{e^{\frac{a}{T}}+1}\ln\!\left(e^{\frac{a}{T}}+1\right)\\
 & =a-T\ln\!\left(e^{\frac{a}{T}}+1\right)\\
 & =T\ln e^{\frac{a}{T}}-T\ln\!\left(e^{\frac{a}{T}}+1\right)\\
 & =-T\ln\!\left(\frac{e^{\frac{a}{T}}+1}{e^{\frac{a}{T}}}\right)\\
 & =-T\ln\!\left(1+e^{-\frac{a}{T}}\right).
\end{align}
Thus we conclude that
\begin{equation}
g(A,T,M)=-T\Tr\!\left[\ln\!\left(e^{-\frac{A}{T}}+I\right)\right],
\end{equation}
as claimed.
\end{proof}

\subsection{Second proof of Proposition~\ref{prop:dual-FD-free-energy} (dual
of measurement free-energy optimization problem)}

\label{sec:Second-Proof-dual-FD-meas-op}
\begin{defn}[Fermi--Dirac relative entropy]
The Fermi--Dirac relative entropy of measurement operators $M_{1},M_{2}\in\mathcal{M}_{d}$
is defined as follows:
\begin{align}
 & D_{\FD}(M_{1}\|M_{2})\nonumber \\
 & \coloneqq D(M_{1}\|M_{2})+D(I-M_{1}\|I-M_{2})\\
 & =\Tr\!\left[M_{1}\left(\ln M_{1}-\ln M_{2}\right)\right]\nonumber \\
 & \qquad+\Tr\!\left[\left(I-M_{1}\right)\left(\ln\!\left(I-M_{1}\right)-\ln\!\left(I-M_{2}\right)\right)\right].
\end{align}
\end{defn}

We can also write
\begin{multline}
D_{\FD}(M_{1}\|M_{2})=-S_{\FD}(M_{1})-\Tr\!\left[M_{1}\ln M_{2}\right]\\
-\Tr\!\left[\left(I-M_{1}\right)\left(\ln\!\left(I-M_{2}\right)\right)\right].
\end{multline}
Defining
\begin{equation}
\widetilde{M}_{i}\coloneqq|1\rangle\!\langle1|\otimes M_{i}+|0\rangle\!\langle0|\otimes\left(I-M_{i}\right),
\end{equation}
observe that
\begin{align}
D_{\FD}(M_{1}\|M_{2}) & =D(\widetilde{M}_{1}\|\widetilde{M}_{2}).\\
 & =-S(\widetilde{M}_{1})-\Tr\!\left[\widetilde{M}_{1}\ln\widetilde{M}_{2}\right].
\end{align}
We can also define density operators from $\widetilde{M}_{i}$ as
\begin{equation}
\rho_{i}\coloneqq\frac{\widetilde{M}_{i}}{d},
\end{equation}
from which we observe that
\begin{equation}
D(\rho_{1}\|\rho_{2})=d\cdot D(\widetilde{M}_{1}\|\widetilde{M}_{2})=d\cdot D_{\FD}(M_{1}\|M_{2}).\label{eq:FD-rel-ent-in-terms-of-states}
\end{equation}

\begin{lem}
The Fermi--Dirac relative entropy of two measurement operators is
proportional to the standard quantum relative entropy of the Choi
states of the corresponding measurement channels. In more detail,
let us define the measurement channel $\mathcal{M}_{i}$, for $i\in\left\{ 1,2\right\} $,
as
\begin{equation}
\mathcal{M}_{i}(\rho)\coloneqq\Tr[M_{i}\rho]|0\rangle\!\langle0|+\Tr[(I-M_{i})\rho]|1\rangle\!\langle1|.
\end{equation}
The Choi state of this channel is defined as
\begin{equation}
\left(\id\otimes\mathcal{M}_{i}\right)\left(\Phi_{d}\right),
\end{equation}
where the maximally entangled state $\Phi_{d}$ is defined as
\begin{equation}
\Phi_{d}\coloneqq\frac{1}{d}\sum_{i,j=0}^{d-1}|i\rangle\!\langle j|\otimes|i\rangle\!\langle j|.
\end{equation}
Then 
\begin{equation}
D\!\left(\left(\id\otimes\mathcal{M}_{1}\right)\left(\Phi_{d}\right)\|\left(\id\otimes\mathcal{M}_{2}\right)\left(\Phi_{d}\right)\right)=\frac{1}{d}D_{\FD}(M_{1}\|M_{2}).\label{eq:rel-ent-FD-meas}
\end{equation}
\end{lem}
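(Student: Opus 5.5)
The plan is to compute the Choi state explicitly, observe that it is block diagonal with respect to the classical output register, and then use three elementary properties of the (unnormalized) relative entropy: additivity over direct sums, invariance under transposition, and its behaviour under a common positive rescaling.

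\emph{Computing the Choi state.} Using the definition of $\Phi_{d}$ and linearity of $\mathcal{M}_{i}$, we get
\begin{equation}
\left(\id\otimes\mathcal{M}_{i}\right)(\Phi_{d})=\frac{1}{d}\sum_{j,k}|j\rangle\!\langle k|\otimes\left(\Tr[M_{i}|j\rangle\!\langle k|]\,|0\rangle\!\langle0|+\Tr[(I-M_{i})|j\rangle\!\langle k|]\,|1\rangle\!\langle1|\right).
\end{equation}
Since $\Tr[M_{i}|j\rangle\!\langle k|]=\langle k|M_{i}|j\rangle=(M_{i}^{T})_{jk}$, this simplifies to
\begin{equation}
\frac{1}{d}\left(M_{i}^{T}\otimes|0\rangle\!\langle0|+(I-M_{i})^{T}\otimes|1\rangle\!\langle1|\right).
\end{equation}
This is a direct sum of the two blocks $\frac{1}{d}M_{i}^{T}$ and $\frac{1}{d}(I-M_{i})^{T}$, in the spirit of the operator $\widetilde{M}_{i}$ introduced above, with the classical labels swapped and a transpose applied.

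\emph{Evaluating the relative entropy.} For block-diagonal positive operators $A_{0}\oplus A_{1}$ and $B_{0}\oplus B_{1}$, the logarithm acts blockwise. Hence
\begin{equation}
D(A_{0}\oplus A_{1}\|B_{0}\oplus B_{1})=D(A_{0}\|B_{0})+D(A_{1}\|B_{1}),
\end{equation}
with $D(A\|B)=\Tr[A(\ln A-\ln B)]$ understood for unnormalized arguments.

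Next, $D(A/d\|B/d)=\frac{1}{d}D(A\|B)$, because $\ln(A/d)-\ln(B/d)=\ln A-\ln B$ on the supports.

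Finally, $\ln(A^{T})=(\ln A)^{T}$, and $\Tr[X^{T}Y^{T}]=\Tr[(YX)^{T}]=\Tr[XY]$. Therefore $D(A^{T}\|B^{T})=D(A\|B)$.

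Combining these three facts gives
\begin{equation}
\frac{1}{d}\left[D(M_{1}\|M_{2})+D(I-M_{1}\|I-M_{2})\right]=\frac{1}{d}D_{\FD}(M_{1}\|M_{2}),
\end{equation}
which is \eqref{eq:rel-ent-FD-meas}.

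\emph{Main obstacle.} The only delicate point is the support convention. Here $M_{i}$ and $I-M_{i}$ may be singular, so the logarithms must be taken on supports, and $D=+\infty$ whenever the support of the first argument is not contained in that of the second. I would check that the support condition for the Choi states holds exactly when both blockwise conditions $\operatorname{supp}M_{1}\subseteq\operatorname{supp}M_{2}$ and $\operatorname{supp}(I-M_{1})\subseteq\operatorname{supp}(I-M_{2})$ hold. This is immediate from the block structure together with the fact that transposition preserves kernel dimensions and inclusions, applied via complex conjugation. With this check, both sides are infinite simultaneously, and otherwise the computation above applies verbatim.
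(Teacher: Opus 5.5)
Your proposal is correct and follows the same route as the paper: compute the Choi state explicitly as $\frac{1}{d}\left(M_{i}^{T}\otimes|0\rangle\!\langle0|+(I-M_{i})^{T}\otimes|1\rangle\!\langle1|\right)$, split the relative entropy with the direct-sum property, and remove the transposes. The one difference is how transpose invariance is justified: you check it directly from $\ln(A^{T})=(\ln A)^{T}$ and $\Tr[X^{T}Y^{T}]=\Tr[XY]$, whereas the paper cites data processing under positive trace-preserving maps, so your version (which also spells out the $1/d$ scaling and the support conditions) is more elementary and self-contained.
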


\begin{proof}
Before establishing this equality, consider that the following identity
holds:
\begin{equation}
\left(\id\otimes\mathcal{M}_{i}\right)\left(\Phi_{d}\right)=\frac{1}{d}\left(M_{i}^{T}\otimes|0\rangle\!\langle0|+(I-M_{i})^{T}\otimes|1\rangle\!\langle1|\right),
\end{equation}
because
\begin{align}
 & \left(\id\otimes\mathcal{M}_{i}\right)\left(\Phi_{d}\right)\nonumber \\
 & =\frac{1}{d}\sum_{i,j=0}^{d-1}|i\rangle\!\langle j|\otimes\Tr[M_{i}|i\rangle\!\langle j|]|0\rangle\!\langle0|\nonumber \\
 & \qquad+\frac{1}{d}\sum_{i,j=0}^{d-1}|i\rangle\!\langle j|\otimes\Tr[(I-M_{i})|i\rangle\!\langle j|]|1\rangle\!\langle1|\\
 & =\frac{1}{d}\sum_{i,j=0}^{d-1}|i\rangle\!\langle j|\otimes\langle j|M_{i}|i\rangle|0\rangle\!\langle0|\nonumber \\
 & \qquad+\frac{1}{d}\sum_{i,j=0}^{d-1}|i\rangle\!\langle j|\otimes\langle j|(I-M_{i})|i\rangle|1\rangle\!\langle1|\\
 & =\frac{1}{d}\sum_{i,j=0}^{d-1}|i\rangle\langle j|M_{i}|i\rangle\langle j|\otimes|0\rangle\!\langle0|\nonumber \\
 & \qquad+\frac{1}{d}\sum_{i,j=0}^{d-1}|i\rangle\langle j|(I-M_{i})|i\rangle\langle j|\otimes|1\rangle\!\langle1|\\
 & =\frac{1}{d}\left(M_{i}^{T}\otimes|0\rangle\!\langle0|+(I-M_{i})^{T}\otimes|1\rangle\!\langle1|\right).
\end{align}
(See also \cite[Exercise~4.19]{khatri2024}). Then~\eqref{eq:rel-ent-FD-meas}
follows because
\begin{align}
 & D\!\left(\left(\id\otimes\mathcal{M}_{1}\right)\left(\Phi_{d}\right)\|\left(\id\otimes\mathcal{M}_{2}\right)\left(\Phi_{d}\right)\right)\nonumber \\
 & \overset{(a)}{=}\frac{1}{d}D\!\left(M_{1}^{T}\|M_{2}^{T}\right)+\frac{1}{d}D\!\left((I-M_{1})^{T}\|(I-M_{2})^{T}\right)\\
 & \overset{(b)}{=}\frac{1}{d}\left[D\!\left(M_{1}\|M_{2}\right)+D\!\left(I-M_{1}\|I-M_{2}\right)\right]\\
 & =\frac{1}{d}D_{\FD}(M_{1}\|M_{2}).
\end{align}
The equality (a) follows from the direct-sum property of quantum relative
entropy \cite[Proposition~7.3]{khatri2024}. The equality (b) follows
because quantum relative entropy does not increase under the action
of a positive, trace-preserving map~\cite{MuellerHermes2017}, which
implies it is invariant under the transpose map.
\end{proof}
\begin{prop}
\label{prop:FD-relative-ent-nonneg}The Fermi--Dirac relative entropy
of measurement operators $M_{1}$ and $M_{2}$ is non-negative:
\begin{equation}
D_{\FD}(M_{1}\|M_{2})\geq0,\label{eq:non-neg-FD-rel-ent}
\end{equation}
and it is faithful:
\begin{equation}
D_{\FD}(M_{1}\|M_{2})=0\qquad\Longleftrightarrow\qquad M_{1}=M_{2}.
\end{equation}
\end{prop}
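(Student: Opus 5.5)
The plan is to reduce the statement to non-negativity and faithfulness of the ordinary quantum relative entropy (Klein's inequality), using the identity \eqref{eq:FD-rel-ent-in-terms-of-states} established just above. Recall that $\widetilde{M}_{i}=|1\rangle\!\langle1|\otimes M_{i}+|0\rangle\!\langle0|\otimes(I-M_{i})$ is positive semidefinite with trace $\Tr[M_{i}]+\Tr[I-M_{i}]=d$. Hence $\rho_{i}=\widetilde{M}_{i}/d$ is a genuine density operator on $\mathbb{C}^{2}\otimes\mathbb{C}^{d}$. Then $D_{\FD}(M_{1}\|M_{2})=\frac{1}{d}D(\rho_{1}\|\rho_{2})$, and we invoke the standard fact that $D(\rho_{1}\|\rho_{2})\geq0$ for density operators, with equality if and only if $\rho_{1}=\rho_{2}$. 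Alternatively, the preceding lemma expresses $D_{\FD}$ as $d$ times the relative entropy of the Choi states, which yields the same conclusion.

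For faithfulness, suppose $\rho_{1}=\rho_{2}$. Then $\widetilde{M}_{1}=\widetilde{M}_{2}$. Compressing by $|1\rangle\!\langle1|\otimes I$ gives $M_{1}=M_{2}$. Conversely, $M_{1}=M_{2}$ gives $D_{\FD}=0$ directly from the definition.

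The main point needing care is the support convention. When $M_{2}$ or $I-M_{2}$ is singular, the logarithms may be infinite, and $D_{\FD}$ should be interpreted as $+\infty$ exactly when $\operatorname{supp}(\rho_{1})\not\subseteq\operatorname{supp}(\rho_{2})$. This convention is consistent with the block structure, since the support of $\widetilde{M}_{i}$ splits blockwise into the supports of $M_{i}$ and $I-M_{i}$. In that case non-negativity holds trivially and faithfulness is unaffected, because $+\infty\neq0$. Finally, I would check that the scaling $D(\rho_{1}\|\rho_{2})=d\cdot D_{\FD}$ is correct even though the $\widetilde{M}_{i}$ are not normalized. This holds because both arguments share the same trace $d$, so the $\ln d$ terms cancel in $\Tr[\rho_{1}(\ln\rho_{1}-\ln\rho_{2})]$.
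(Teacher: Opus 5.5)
Your proposal is correct and follows essentially the same route as the paper's proof: normalize $\widetilde{M}_i$ to the state $\rho_i=\widetilde{M}_i/d$ and inherit non-negativity and faithfulness from the ordinary quantum relative entropy, with your compression argument and support discussion spelling out what the paper leaves implicit. One harmless slip, which the paper's \eqref{eq:FD-rel-ent-in-terms-of-states} also makes: the correct relation is $D(\rho_1\|\rho_2)=\frac{1}{d}D_{\FD}(M_1\|M_2)$, because the $\ln d$ terms cancel but the prefactor $1/d$ of $\rho_1$ remains; this matches your Choi-state remark but not your other two statements of the scaling, and it does not affect the argument since only the positivity of the constant matters.
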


\begin{proof}
To establish these properties, we simply appeal to the expression
in~\eqref{eq:FD-rel-ent-in-terms-of-states} and the fact that these
properties hold for the quantum relative entropy of states.
\end{proof}
\begin{lem}
\label{lem:rewrite-free-energy-rel-ent}For a Hermitian matrix $A$,
a measurement operator $M$, and a temperature $T>0$, the following
equality holds:
\begin{multline}
\Tr\!\left[AM\right]-T\cdot S_{\FD}(M)\\
=TD_{\FD}\!\left(M\middle\|\left(e^{\frac{A}{T}}+I\right)^{-1}\right)-T\Tr\!\left[\ln\!\left(e^{-\frac{A}{T}}+I\right)\right].
\end{multline}
\end{lem}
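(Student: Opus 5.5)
The plan is to expand the right-hand side directly. Set $M_A\coloneqq(e^{A/T}+I)^{-1}$ and use the expression recorded above,
\begin{equation*}
D_{\FD}(M\|M_A)=-S_{\FD}(M)-\Tr[M\ln M_A]-\Tr[(I-M)\ln(I-M_A)].
\end{equation*}
Multiplying by $T$ and comparing with the left-hand side, the identity reduces to showing
\begin{equation*}
-T\Tr[M\ln M_A]-T\Tr[(I-M)\ln(I-M_A)]-T\Tr\!\left[\ln\!\left(e^{-A/T}+I\right)\right]=\Tr[AM].
\end{equation*}

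The key step is to compute $\ln M_A$ and $\ln(I-M_A)$ explicitly by functional calculus. Both are functions of $A$, so it suffices to work with scalars $a$.

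For $M_A$ we have
\begin{equation*}
\ln M_A=-\ln\!\left(e^{A/T}+I\right)=-\frac{A}{T}-\ln\!\left(I+e^{-A/T}\right),
\end{equation*}
using the scalar identity $\ln(e^{a/T}+1)=a/T+\ln(1+e^{-a/T})$.

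For $I-M_A$, equation~\eqref{eq:other-outcome-FD} gives $I-M_A=(e^{-A/T}+I)^{-1}$, so
\begin{equation*}
\ln(I-M_A)=-\ln\!\left(I+e^{-A/T}\right).
\end{equation*}

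Substituting these into the reduced identity, the left-hand side becomes
\begin{equation*}
\Tr[AM]+T\Tr\!\left[M\ln\!\left(I+e^{-A/T}\right)\right]+T\Tr\!\left[(I-M)\ln\!\left(I+e^{-A/T}\right)\right]-T\Tr\!\left[\ln\!\left(I+e^{-A/T}\right)\right].
\end{equation*}
Since $M+(I-M)=I$, the last three terms cancel, leaving $\Tr[AM]$ as required.

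The one technical point is well-definedness when $M$ has eigenvalues $0$ or $1$. Here $\ln M$ and $\ln(I-M)$ appear only through $S_{\FD}(M)$, under the convention $0\ln 0=0$. Meanwhile $M_A$ and $I-M_A$ are strictly positive definite, with spectrum in $(0,1)$, so $\ln M_A$ and $\ln(I-M_A)$ are bounded and every term above is finite. Thus there is no genuine obstacle: the identity is a finite algebraic rearrangement. As a consequence, via Proposition~\ref{prop:FD-relative-ent-nonneg}, it yields the minimum value in Lemma~\ref{lem:opt-FD-free-energy} together with uniqueness of the minimizer $M_A$.
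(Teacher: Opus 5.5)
Your proof is correct and is essentially the paper's argument run in the opposite direction: the paper starts from the left-hand side, adds and subtracts $\Tr[M\ln M_A]$ and $\Tr[(I-M)\ln(I-M_A)]$ to isolate $D_{\FD}(M\|M_A)$, and then uses the same functional-calculus identities ($e^{A/T}(e^{A/T}+I)^{-1}=(e^{-A/T}+I)^{-1}$ and $I-M_A=(e^{-A/T}+I)^{-1}$) together with $M+(I-M)=I$. Your remark on well-definedness (full support of $M_A$ and the convention $0\ln 0=0$) is a small point the paper leaves implicit.
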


\begin{proof}
Consider that 
\begin{align}
 & \Tr\!\left[AM\right]-T\cdot S_{\FD}(M)\nonumber \\
 & =T\left(-S_{\FD}(M)+\Tr\!\left[M\frac{A}{T}\right]\right)\\
 & =T\left(-S_{\FD}(M)+\Tr\!\left[M\ln e^{\frac{A}{T}}\right]\right).
\end{align}
Now consider that
\begin{align}
 & -S_{\FD}(M)+\Tr\!\left[M\ln e^{\frac{A}{T}}\right]\nonumber \\
 & =-S_{\FD}(M)+\Tr\!\left[M\ln e^{\frac{A}{T}}\right]-\Tr\!\left[M\ln\left(e^{\frac{A}{T}}+I\right)^{-1}\right]\nonumber \\
 & \qquad+\Tr\!\left[M\ln\left(e^{\frac{A}{T}}+I\right)^{-1}\right]\nonumber \\
 & \qquad-\Tr\!\left[\left(I-M\right)\ln\!\left(I-\left(e^{\frac{A}{T}}+I\right)^{-1}\right)\right]\nonumber \\
 & \qquad+\Tr\!\left[\left(I-M\right)\ln\!\left(I-\left(e^{\frac{A}{T}}+I\right)^{-1}\right)\right]\\
 & =D_{\FD}\!\left(M\middle\|\left(e^{\frac{A}{T}}+I\right)^{-1}\right)+\Tr\!\left[M\ln e^{\frac{A}{T}}\right]\nonumber \\
 & \qquad+\Tr\!\left[M\ln\left(e^{\frac{A}{T}}+I\right)^{-1}\right]\nonumber \\
 & \qquad+\Tr\!\left[\left(I-M\right)\ln\!\left(I-\left(e^{\frac{A}{T}}+I\right)^{-1}\right)\right]\\
 & =D_{\FD}\!\left(M\middle\|\left(e^{\frac{A}{T}}+I\right)^{-1}\right)\notag \\
 & \qquad +\Tr\!\left[M\ln\left(e^{\frac{A}{T}}\left(e^{\frac{A}{T}}+I\right)^{-1}\right)\right]\nonumber \\
 & \qquad+\Tr\!\left[\left(I-M\right)\ln\!\left(I-\left(e^{\frac{A}{T}}+I\right)^{-1}\right)\right]\\
 & =D_{\FD}\!\left(M\middle\|\left(e^{\frac{A}{T}}+I\right)^{-1}\right)\notag \\
 & \qquad +\Tr\!\left[M\ln\left(\left(e^{-\frac{A}{T}}+I\right)^{-1}\right)\right]\nonumber \\
 & \qquad+\Tr\!\left[\left(I-M\right)\ln\!\left(\left(e^{-\frac{A}{T}}+I\right)^{-1}\right)\right]\\
 & =D_{\FD}\!\left(M\middle\|\left(e^{\frac{A}{T}}+I\right)^{-1}\right)+\Tr\!\left[\ln\!\left(\left(e^{-\frac{A}{T}}+I\right)^{-1}\right)\right]\\
 & =D_{\FD}\!\left(M\middle\|\left(e^{\frac{A}{T}}+I\right)^{-1}\right)-\Tr\!\left[\ln\!\left(e^{-\frac{A}{T}}+I\right)\right],
\end{align}
thus concluding the proof.
\end{proof}
\begin{cor}
For a Hermitian matrix $A$ and a temperature $T>0$, the following
equality holds:
\begin{equation}
\min_{M\in\mathcal{M}_{d}}\left\{ \Tr\!\left[AM\right]-T\cdot S_{\FD}(M)\right\} =-T\Tr\!\left[\ln\!\left(e^{-\frac{A}{T}}+I\right)\right].\label{eq:min-free-energy}
\end{equation}
\end{cor}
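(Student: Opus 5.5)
The corollary follows directly from Lemma~\ref{lem:rewrite-free-energy-rel-ent} together with Proposition~\ref{prop:FD-relative-ent-nonneg}. Fix a Hermitian $A$ and $T>0$, and set $M_{T}(A)=\left(e^{\frac{A}{T}}+I\right)^{-1}$ as in~\eqref{eq:gen-FD-meas-op}. By Lemma~\ref{lem:rewrite-free-energy-rel-ent}, for every $M\in\mathcal{M}_{d}$ the objective can be written as
\begin{equation}
\Tr\!\left[AM\right]-T\cdot S_{\FD}(M)=T\,D_{\FD}\!\left(M\middle\|M_{T}(A)\right)-T\Tr\!\left[\ln\!\left(e^{-\frac{A}{T}}+I\right)\right].
\end{equation}
The second term on the right does not depend on $M$. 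Minimizing the left-hand side over $\mathcal{M}_{d}$ therefore amounts to minimizing $T\,D_{\FD}(M\|M_{T}(A))$ over $\mathcal{M}_{d}$.

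Since $T>0$, Proposition~\ref{prop:FD-relative-ent-nonneg} gives $T\,D_{\FD}(M\|M_{T}(A))\geq0$ for every measurement operator $M$. This yields the lower bound $\geq -T\Tr[\ln(e^{-A/T}+I)]$ for the minimum. For the matching upper bound, note that $M_{T}(A)$ is itself a measurement operator: its eigenvalues $(e^{a/T}+1)^{-1}$ lie in $(0,1)$, so it is feasible. By faithfulness, $D_{\FD}(M_{T}(A)\|M_{T}(A))=0$, and the bound is attained. This establishes~\eqref{eq:min-free-energy}. It also shows that $M_{T}(A)$ is the \emph{unique} minimizer, since faithfulness forces $D_{\FD}(M\|M_{T}(A))>0$ whenever $M\neq M_{T}(A)$.

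The only point needing care is that $D_{\FD}(M\|M_{T}(A))$ be finite and well defined, so that the identity in Lemma~\ref{lem:rewrite-free-energy-rel-ent} is meaningful for every $M\in\mathcal{M}_{d}$, including $M$ with eigenvalues $0$ or $1$. This holds because $M_{T}(A)$ and $I-M_{T}(A)$ are both strictly positive definite. Hence $\ln M_{T}(A)$ and $\ln(I-M_{T}(A))$ are bounded, and the terms $\Tr[M\ln M]$ and $\Tr[(I-M)\ln(I-M)]$ are finite with the convention $0\ln0=0$. With this checked, no minimax or differentiability argument is needed, unlike in the first proof in Appendix~\ref{sec:Proof-of-Theorem-FD-thermal-meas-gen}.

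Substituting $A=H-\mu\cdot Q$ into~\eqref{eq:min-free-energy} replaces step~(c) of the first proof. Combined with the minimax exchange, this yields Proposition~\ref{prop:dual-FD-free-energy} and the optimality of $M_{T}(\mu)$.
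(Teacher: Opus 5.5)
Your proposal is correct and follows the paper's proof. You rewrite the objective via Lemma~\ref{lem:rewrite-free-energy-rel-ent}, lower-bound it using the non-negativity of $D_{\FD}$ from Proposition~\ref{prop:FD-relative-ent-nonneg}, and attain the bound at the feasible point $M=\left(e^{A/T}+I\right)^{-1}$. Your extra checks are valid refinements the paper leaves implicit: finiteness of $D_{\FD}(M\|M_{T}(A))$ for $M$ with eigenvalues $0$ or $1$ (because $M_{T}(A)$ and $I-M_{T}(A)$ are strictly positive definite), and uniqueness of the minimizer via faithfulness.
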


\begin{proof}
This is a direct consequence of Proposition~\ref{prop:FD-relative-ent-nonneg}
and Lemma~\ref{lem:rewrite-free-energy-rel-ent}. Indeed, we can rewrite
the objective function on the left-hand side of~\eqref{eq:min-free-energy}
as in Lemma~\ref{lem:rewrite-free-energy-rel-ent}, and then we choose
$M=\left(e^{\frac{A}{T}}+I\right)^{-1}$. Then Proposition~\ref{prop:FD-relative-ent-nonneg}
establishes that this leads to the minimum value of the Fermi--Dirac
relative entropy.
\end{proof}

\section{Continuity bounds for Fermi--Dirac relative entropy and fermionic
free energy}

\label{sec:Continuity-bounds-FD-rel-ent}In this appendix, we establish
a Lipschitz continuity bound for the Fermi--Dirac relative entropy
(Lemma~\ref{lem:cont-fermi-dirac-rel-ent}) and a continuity bound
for the fermionic free energy (Lemma~\ref{lem:cont-ferm-free-energy}).
\begin{lem}
\label{lem:cont-fermi-dirac-rel-ent}Let $c\in\mathbb{N}$, let $\mu,\nu\in\mathbb{R}^{c}$,
let $H$, $Q_{1}$, $\ldots$, $Q_{c}$ be defined as in~\eqref{eq:vector-Herm-ops},
and let $M(\mu)$ and $N(\nu)$ be the following Fermi--Dirac measurement
operators:
\begin{align}
M(\mu) & \coloneqq\left(e^{\frac{1}{T}\left(H-\mu\cdot Q\right)}+I\right)^{-1},\\
N(\nu) & \coloneqq\left(e^{\frac{1}{T}\left(H-\nu\cdot Q\right)}+I\right)^{-1}.
\end{align}
Then
\begin{equation}
D_{\FD}\!\left(M(\mu)\|N(\nu)\right)\leq\frac{2K}{T}\left\Vert \mu-\nu\right\Vert ,
\end{equation}
where
\begin{equation}
K\coloneqq\sqrt{\sum_{i\in\left[c\right]}\left\Vert Q_{i}\right\Vert _{1}^{2}}.
\end{equation}
\end{lem}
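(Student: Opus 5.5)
The plan is to write both relative-entropy terms explicitly using logarithms of Fermi--Dirac operators. For $A_\mu\coloneqq H-\mu\cdot Q$, the scalar identities give
\begin{equation}
\ln M(\mu)=-\ln\!\left(e^{A_\mu/T}+I\right),\qquad \ln\!\left(I-M(\mu)\right)=\frac{A_\mu}{T}-\ln\!\left(e^{A_\mu/T}+I\right).
\end{equation}
Substituting these into the definition of $D_{\FD}$, the terms $\Tr[M\ln M]+\Tr[(I-M)\ln(I-M)]$ and the cross terms combine. Writing $G(A)\coloneqq\Tr[\ln(e^{A/T}+I)]$, one obtains
\begin{equation}
D_{\FD}(M(\mu)\|N(\nu))=\frac{1}{T}\Tr\!\left[(I-M(\mu))(A_\mu-A_\nu)\right]-G(A_\mu)+G(A_\nu).
\end{equation}
This is a Bregman divergence of the convex function $A\mapsto G(A)$. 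Alternatively, it can be read off from Lemma~\ref{lem:rewrite-free-energy-rel-ent}, applied with $A=A_\nu$ and $M=M(\mu)$, together with the same lemma at $A=A_\mu$.

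Next, I would express the difference $G(A_\nu)-G(A_\mu)$ by the mean value theorem along the segment $A_t=A_\mu+t(A_\nu-A_\mu)$. The derivative of $G$ in direction $X$ is $\frac{1}{T}\Tr[(I-M_T(A))X]$, which is the gradient formula of Proposition~\ref{prop:gradient-FD-gen-obj-func} rephrased. Hence
\begin{equation}
D_{\FD}=\frac{1}{T}\int_0^1 dt\,\Tr\!\left[\left(M_T(A_t)-M(\mu)\right)(A_\nu-A_\mu)\right].
\end{equation}
Here $A_\nu-A_\mu=(\mu-\nu)\cdot Q$.

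Finally, I would bound the integrand by Hölder's inequality, using the fact that both $M_T(A_t)$ and $M(\mu)$ lie between $0$ and $I$. Their difference therefore has operator norm at most $1$, and the integrand is bounded by $\|(\mu-\nu)\cdot Q\|_1\le\sum_i|\mu_i-\nu_i|\,\|Q_i\|_1\le K\|\mu-\nu\|$, the last step by Cauchy--Schwarz. This already yields $D_{\FD}\le\frac{K}{T}\|\mu-\nu\|$, which is within the stated bound with factor $2$. If one prefers to avoid the integral, it is cruder to bound the two pieces separately, each by $\frac{K}{T}\|\mu-\nu\|$: the trace term directly by Hölder, and $|G(A_\mu)-G(A_\nu)|$ via the same derivative bound. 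This route produces exactly the factor $2K/T$.

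The main obstacle I expect is justifying the directional derivative of $G$ for non-commuting perturbations. The standard fact $\frac{d}{dt}\Tr f(A+tX)=\Tr[f'(A+tX)X]$ handles this, with $f(a)=\ln(e^{a/T}+1)$ and $f'(a)=\frac{1}{T}(e^{-a/T}+1)^{-1}$, which is the same argument underlying Proposition~\ref{prop:gradient-FD-gen-obj-func}. A second point to check is the support subtleties of the logarithms, but these do not arise because Fermi--Dirac operators are strictly between $0$ and $I$.
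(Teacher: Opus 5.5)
Your argument is correct, and it actually proves the sharper bound $D_{\FD}(M(\mu)\|N(\nu))\leq\frac{K}{T}\|\mu-\nu\|$.

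The paper starts from the same closed form for $D_{\FD}$. It writes it as $\frac{1}{T}\Tr[M(\mu)((\mu-\nu)\cdot Q)]-\Tr[\ln(e^{-A_\mu/T}+I)]+\Tr[\ln(e^{-A_\nu/T}+I)]$. This equals your Bregman expression, because $\Tr[\ln(e^{-A/T}+I)]=G(A)-\Tr[A]/T$.

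From there the paper bounds the two pieces separately. The linear term is handled by H\"older. The difference of log-partition terms is handled by Lemma~\ref{lem:cont-ferm-free-energy}, whose proof is exactly your path integral of the derivative followed by H\"older. Each piece costs $\frac{K}{T}\|\mu-\nu\|$. That is your ``cruder'' route, and it gives the stated $2K/T$.

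Your main route instead merges the two pieces before applying H\"older. The integrand then involves the difference of two operators lying in $[0,I]$, whose norm is at most $1$ rather than $2$. That is where the factor of $2$ is saved.

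One small slip: the integrand should be $\Tr[(M(\mu)-M_T(A_t))(A_\nu-A_\mu)]$, the negative of what you wrote. This quantity is nonnegative for every $t\in(0,1]$, by monotonicity of $\nabla G=\frac{1}{T}(I-M_T(\cdot))$ for the convex $G$. The slip is harmless, since you only use the absolute value of the integrand.
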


\begin{proof}
Let us abbreviate $M\equiv M(\mu)$ and $N\equiv N(\nu)$. Observe
that
\begin{align}
\ln\!\left(\frac{M}{I-M}\right) & =-\frac{1}{T}\left(H-\mu\cdot Q\right),\\
\ln\!\left(\frac{N}{I-N}\right) & =-\frac{1}{T}\left(H-\nu\cdot Q\right),
\end{align}
due to the scalar identity
\begin{align}
 & \ln\!\left(\frac{\left(e^{x}+1\right)^{-1}}{1-\left(e^{x}+1\right)^{-1}}\right)\nonumber \\
 & =\ln\!\left(\frac{1}{\left(e^{x}+1\right)\left(1-\left(e^{x}+1\right)^{-1}\right)}\right)\\
 & =\ln\!\left(\frac{1}{\left(e^{x}+1-1\right)}\right)\\
 & =\ln\!\left(\frac{1}{e^{x}}\right)\\
 & =-x.
\end{align}
Consider that
\begin{align}
 & D_{\FD}\!\left(M\|N\right)\nonumber \\
 & =\Tr\!\left[M\left(\ln M-\ln N\right)\right]\nonumber \\
 & \qquad+\Tr\!\left[\left(I-M\right)\left(\ln\!\left(I-M\right)-\ln\!\left(I-N\right)\right)\right]\\
 & =\Tr\!\left[M\left(\ln\!\left(\frac{M}{I-M}\right)-\ln\!\left(\frac{N}{I-N}\right)\right)\right]\nonumber \\
 & \qquad+\Tr\!\left[\ln\!\left(I-M\right)-\ln\!\left(I-N\right)\right]\\
 & =\Tr\!\left[M\left(-\frac{1}{T}\left(H-\mu\cdot Q\right)-\left(-\frac{1}{T}\left(H-\nu\cdot Q\right)\right)\right)\right]\nonumber \\
 & \qquad-\Tr\!\left[\ln\!\left(e^{-\frac{1}{T}\left(H-\mu\cdot Q\right)}+I\right)\right]\nonumber \\
 & \qquad+\Tr\!\left[\ln\!\left(e^{-\frac{1}{T}\left(H-\nu\cdot Q\right)}+I\right)\right]\\
 & =\Tr\!\left[M\left(\frac{1}{T}\left(\mu-\nu\right)\cdot Q\right)\right]-\Tr\!\left[\ln\!\left(e^{-\frac{1}{T}\left(H-\mu\cdot Q\right)}+I\right)\right]\nonumber \\
 & \qquad+\Tr\!\left[\ln\!\left(e^{-\frac{1}{T}\left(H-\nu\cdot Q\right)}+I\right)\right].\label{eq:alt-formula-FD-relative-entr-FD-thermal-meas}
\end{align}
Now apply Lemma~\ref{lem:cont-ferm-free-energy}, with $H_{1}=\frac{1}{T}\left(H-\mu\cdot Q\right)$
and $H_{2}=\frac{1}{T}\left(\left(\mu-\nu\right)\cdot Q\right)$,
to conclude that
\begin{align}
 & -\Tr\!\left[\ln\!\left(e^{-\frac{1}{T}\left(H-\mu\cdot Q\right)}+I\right)\right]+\Tr\!\left[\ln\!\left(e^{-\frac{1}{T}\left(H-\nu\cdot Q\right)}+I\right)\right]\nonumber \\
 & \leq\left\Vert \frac{1}{T}\left(\left(\mu-\nu\right)\cdot Q\right)\right\Vert _{1}\\
 & =\frac{1}{T}\left\Vert \sum_{i\in\left[c\right]}\left(\mu_{i}-\nu_{i}\right)Q_{i}\right\Vert _{1}\\
 & \leq\frac{1}{T}\sum_{i\in\left[c\right]}\left|\mu_{i}-\nu_{i}\right|\left\Vert Q_{i}\right\Vert _{1}\\
 & \leq\frac{1}{T}\left\Vert \mu-\nu\right\Vert \sqrt{\sum_{i\in\left[c\right]}\left\Vert Q_{i}\right\Vert _{1}^{2}}.
\end{align}
Similarly,
\begin{align}
\Tr\!\left[M\left(\frac{1}{T}\left(\mu-\nu\right)\cdot Q\right)\right] & \leq\left\Vert \frac{1}{T}\left(\left(\mu-\nu\right)\cdot Q\right)\right\Vert _{1}\\
 & \leq\frac{1}{T}\left\Vert \mu-\nu\right\Vert \sqrt{\sum_{i\in\left[c\right]}\left\Vert Q_{i}\right\Vert _{1}^{2}}.
\end{align}
So then
\begin{equation}
D_{\FD}\!\left(M\|N\right)\leq\frac{2}{T}\left\Vert \mu-\nu\right\Vert \sqrt{\sum_{i\in\left[c\right]}\left\Vert Q_{i}\right\Vert _{1}^{2}},
\end{equation}
thus concluding the proof.
\end{proof}
\begin{lem}
\label{lem:cont-ferm-free-energy}Let $H_{1}$ and $H_{2}$ be $d\times d$
Hermitian matrices. Then
\begin{equation}
\left|-\Tr\!\left[\ln\!\left(e^{-\left(H_{1}+H_{2}\right)}+I\right)\right]+\Tr\!\left[\ln\!\left(e^{-H_{1}}+I\right)\right]\right|\leq\left\Vert H_{2}\right\Vert _{1}.
\end{equation}
\end{lem}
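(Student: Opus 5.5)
We will prove Lemma~\ref{lem:cont-ferm-free-energy} by interpolating between $H_1$ and $H_1+H_2$ and integrating a derivative. Define, for $t\in[0,1]$,
\begin{equation}
g(t)\coloneqq-\Tr\!\left[\ln\!\left(e^{-\left(H_{1}+tH_{2}\right)}+I\right)\right],
\end{equation}
so that the quantity to be bounded equals $\left|g(1)-g(0)\right|$. The function $x\mapsto-\ln(e^{-x}+1)$ is smooth on $\mathbb{R}$. Because $\Tr[f(X)]$ is differentiable for smooth $f$, with $\frac{d}{dt}\Tr[f(X+tY)]=\Tr[f'(X+tY)Y]$ (the standard trace-derivative identity, valid even when $X$ and $Y$ do not commute), it follows that $g$ is differentiable.

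For the scalar function we have
\begin{equation}
\frac{d}{dx}\left(-\ln(e^{-x}+1)\right)=\frac{e^{-x}}{e^{-x}+1}=\frac{1}{e^{x}+1}.
\end{equation}
Hence
\begin{equation}
g'(t)=\Tr\!\left[M(t)H_{2}\right],\qquad M(t)\coloneqq\left(e^{H_{1}+tH_{2}}+I\right)^{-1}.
\end{equation}
This is exactly the computation behind Proposition~\ref{prop:gradient-FD-gen-obj-func} (with $T=1$), so we could instead cite that result with a single charge $Q_1=-H_2$ and $H=H_1$, after shifting parameters.

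Next, $M(t)$ is a measurement operator: it is Hermitian with eigenvalues $(e^{\lambda}+1)^{-1}\in(0,1)$. By H\"older's inequality, $\left|\Tr[M(t)H_2]\right|\leq\|M(t)\|\,\|H_2\|_1\leq\|H_2\|_1$. Alternatively, write $H_2=(H_2)_+-(H_2)_-$ and use $0\le\Tr[M(t)(H_2)_\pm]\le\Tr[(H_2)_\pm]$. The fundamental theorem of calculus then gives
\begin{equation}
\left|g(1)-g(0)\right|\leq\int_0^1\left|g'(t)\right|dt\leq\|H_2\|_1.
\end{equation}

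The only step needing care is the derivative formula for the trace of a matrix function under noncommuting perturbations. We will justify it either by appealing to the gradient computation of Proposition~\ref{prop:gradient-FD-gen-obj-func}, or directly via the Duhamel/Daleckii--Krein formula, using cyclicity of the trace to collapse the divided differences onto the diagonal $f'$.
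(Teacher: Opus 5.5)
Your proposal is correct and follows essentially the same argument as the paper. The paper also interpolates along $H_1+sH_2$, identifies the derivative as $\Tr[(e^{H_1+sH_2}+I)^{-1}H_2]$, bounds it by H\"older's inequality using $\|(e^{H_1+sH_2}+I)^{-1}\|\leq 1$, and then integrates over $s\in[0,1]$. The only cosmetic difference is that the paper computes the derivative via the chain rule for $\Tr\ln$ followed by Duhamel's formula and cyclicity, which is exactly the computation your trace-derivative identity packages.
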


\begin{proof}
The approach is similar to that used to prove \cite[Eq.~(55)]{Alhambra2023}.
Consider that
\begin{align}
 & \left|-\Tr\!\left[\ln\!\left(e^{-\left(H_{1}+H_{2}\right)}+I\right)\right]+\Tr\!\left[\ln\!\left(e^{-H_{1}}+I\right)\right]\right|\nonumber \\
 & =\left|\int_{0}^{1}ds\,\frac{d}{ds}\left(-\Tr\!\left[\ln\!\left(e^{-\left(H_{1}+sH_{2}\right)}+I\right)\right]\right)\right|\\
 & =\left|\int_{0}^{1}ds\,\Tr\!\left[\left(e^{-\left(H_{1}+sH_{2}\right)}+I\right)^{-1}\frac{d}{ds}\left(e^{-\left(H_{1}+sH_{2}\right)}+I\right)\right]\right|\\
 & =\left|\int_{0}^{1}ds\,\Tr\!\left[\left(e^{-\left(H_{1}+sH_{2}\right)}+I\right)^{-1}\frac{d}{ds}\left(e^{-\left(H_{1}+sH_{2}\right)}\right)\right]\right|\\
 & =\left|\int_{0}^{1}ds\,\Tr\!\left[\begin{array}{c}
\left(e^{-\left(H_{1}+sH_{2}\right)}+I\right)^{-1}\times\\
\int_{0}^{1}dt\,e^{-t\left(H_{1}+sH_{2}\right)}\frac{d}{ds}\left(-\left(H_{1}+sH_{2}\right)\right)\times\\
e^{-\left(1-t\right)\left(H_{1}+sH_{2}\right)}
\end{array}\right]\right|\\
 & =\left|\int_{0}^{1}ds\,\int_{0}^{1}dt\,\Tr\!\left[\begin{array}{c}
e^{-\left(1-t\right)\left(H_{1}+sH_{2}\right)}e^{-t\left(H_{1}+sH_{2}\right)}\times\\
\left(e^{-\left(H_{1}+sH_{2}\right)}+I\right)^{-1}\times\\
\frac{d}{ds}\left(-\left(H_{1}+sH_{2}\right)\right)
\end{array}\right]\right|\\
 & =\left|\int_{0}^{1}ds\,\Tr\!\left[e^{-\left(H_{1}+sH_{2}\right)}\left(e^{-\left(H_{1}+sH_{2}\right)}+I\right)^{-1}H_{2}\right]\right|\\
 & =\left|\int_{0}^{1}ds\,\Tr\!\left[\left(e^{\left(H_{1}+sH_{2}\right)}+I\right)^{-1}H_{2}\right]\right|\\
 & \leq\int_{0}^{1}ds\,\left|\Tr\!\left[\left(e^{\left(H_{1}+sH_{2}\right)}+I\right)^{-1}H_{2}\right]\right|\\
 & \leq\left\Vert \left(e^{\left(H_{1}+sH_{2}\right)}+I\right)^{-1}\right\Vert \left\Vert H_{2}\right\Vert _{1}\\
 & \leq\left\Vert H_{2}\right\Vert _{1},
\end{align}
thus concluding the proof.
\end{proof}
One can formulate a Fisher information matrix based on the Fermi--Dirac
relative entropy, in the following way:
\begin{align}
\left[I(\mu)\right]_{i,j} & \coloneqq\left.\frac{\partial^{2}}{\partial\varepsilon_{i}\partial\varepsilon_{j}}D_{\FD}(M_{T}(\mu)\|M_{T}(\mu+\varepsilon))\right|_{\varepsilon=0},
\end{align}
similar to how we define Fisher information matrices from parameterized
families of quantum states (see, e.g.,~\cite{Wilde2025} for this
approach). The next proposition is a Fermi--Dirac counterpart of
\cite[Lemma~6]{liu2025qthermoSDPs}.
\begin{prop}
The following equalities hold for the Fisher information matrix based
on the Fermi--Dirac relative entropy:
\begin{align}
\left[I(\mu)\right]_{i,j} & =\frac{1}{T^{2}}\int_{0}^{1}ds\,\Tr\!\left[M_{T}(\mu,s)Q_{i}M_{T}(\mu,1-s)Q_{j}\right],\nonumber \\
 & =\frac{1}{T^{2}}\Re\!\left[\Tr\!\left[M_{T}(\mu)\Phi_{\mu}(Q_{i})\left(I-M_{T}(\mu)\right)Q_{j}\right]\right].\label{eq:alt-exp-Fisher-info}
\end{align}
\end{prop}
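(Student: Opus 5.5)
The plan is to reduce the Fisher information matrix to the Hessian of the dual objective, and then invoke Proposition~\ref{prop:hessian-FD-gen-obj-func}. The starting point is the closed-form expression \eqref{eq:alt-formula-FD-relative-entr-FD-thermal-meas} obtained in the proof of Lemma~\ref{lem:cont-fermi-dirac-rel-ent}. Taking $M=M_T(\mu)$ and $N=M_T(\mu+\varepsilon)$, so that $\nu=\mu+\varepsilon$, it reads
\begin{equation}
D_{\FD}(M_T(\mu)\|M_T(\mu+\varepsilon)) = -\frac{1}{T}\Tr\!\left[M_T(\mu)\,(\varepsilon\cdot Q)\right] + G(\mu+\varepsilon) - G(\mu),
\end{equation}
where $G(\nu)\coloneqq\Tr[\ln(e^{-\frac{1}{T}(H-\nu\cdot Q)}+I)]$.

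The first term is linear in $\varepsilon$, so its second derivative vanishes. Hence
\begin{equation}
\left[I(\mu)\right]_{i,j}=\left.\frac{\partial^2}{\partial\varepsilon_i\partial\varepsilon_j}G(\mu+\varepsilon)\right|_{\varepsilon=0}=\frac{\partial^2}{\partial\mu_i\partial\mu_j}G(\mu).
\end{equation}

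Next I use the definition \eqref{eq:dual-FD-obj}, which gives $f_T(\mu)=\mu\cdot q-T\,G(\mu)$. Since $\mu\cdot q$ is linear, this yields $\partial_i\partial_j G(\mu)=-\frac{1}{T}\partial_i\partial_j f_T(\mu)$. Substituting the two expressions \eqref{eq:1st-hessian-exp} and \eqref{eq:2nd-hessian-exp} from Proposition~\ref{prop:hessian-FD-gen-obj-func} then produces exactly the two claimed formulas, each with prefactor $\frac{1}{T^2}$.

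Two sanity checks support this. The first-order term in $\varepsilon$ cancels, because $\partial_i G(\mu)=\frac{1}{T}\Tr[M_T(\mu)Q_i]$ by Proposition~\ref{prop:gradient-FD-gen-obj-func}. This cancellation is consistent with $D_{\FD}\geq 0$ having a minimum at $\varepsilon=0$, as guaranteed by Proposition~\ref{prop:FD-relative-ent-nonneg}. The resulting sign is also positive, consistent with the negative semidefiniteness of the Hessian in Proposition~\ref{prop:hessian-NSD-spec-up-bnd}.

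The main point requiring care is justifying the identity \eqref{eq:alt-formula-FD-relative-entr-FD-thermal-meas} in this setting. It relies on $\ln(M/(I-M))=-\frac{1}{T}(H-\mu\cdot Q)$ and on $\ln(I-M)=-\ln(e^{-\frac{1}{T}(H-\mu\cdot Q)}+I)$. Both hold because $M$ and $I-M$ are functions of the same Hermitian operator and are strictly between $0$ and $I$, so all logarithms are well defined. Beyond this, the argument only needs smoothness of $G$ in $\mu$, which follows from analyticity of $x\mapsto\ln(e^{-x}+1)$. That smoothness justifies exchanging the differentiations, and the argument then rests entirely on Proposition~\ref{prop:hessian-FD-gen-obj-func}.
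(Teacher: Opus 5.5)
Your proof is correct and follows the paper's argument: specialize \eqref{eq:alt-formula-FD-relative-entr-FD-thermal-meas} to $\nu=\mu+\varepsilon$, drop the term linear in $\varepsilon$, and identify the remaining second derivative of $\Tr[\ln(e^{-\frac{1}{T}(H-(\mu+\varepsilon)\cdot Q)}+I)]$ with $-\frac{1}{T}$ times the Hessian of $f_T$ from Proposition~\ref{prop:hessian-FD-gen-obj-func}. Your sign $-\frac{1}{T}\Tr[M_T(\mu)(\varepsilon\cdot Q)]$ on the linear term is correct, and it is what makes your first-order cancellation check work; the paper's display has $+\frac{1}{T}$, a slip that is harmless because the linear term does not survive two derivatives.
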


\begin{proof}
To arrive at the expressions above, consider that
\begin{multline}
D_{\FD}(M_{T}(\mu)\|M_{T}(\mu+\varepsilon))\\
=\frac{1}{T}\Tr\!\left[M_{T}(\mu)\left(\varepsilon\cdot Q\right)\right]-\Tr\!\left[\ln\!\left(e^{-\frac{1}{T}\left(H-\mu\cdot Q\right)}+I\right)\right]\\
\qquad+\Tr\!\left[\ln\!\left(e^{-\frac{1}{T}\left(H-\left(\mu+\varepsilon\right)\cdot Q\right)}+I\right)\right],
\end{multline}
where we applied~\eqref{eq:alt-formula-FD-relative-entr-FD-thermal-meas}.
This implies that
\begin{align}
 & \frac{\partial^{2}}{\partial\varepsilon_{i}\partial\varepsilon_{j}}D_{\FD}(M_{T}(\mu)\|M_{T}(\mu+\varepsilon))\nonumber \\
 & =\frac{\partial^{2}}{\partial\varepsilon_{i}\partial\varepsilon_{j}}\Tr\!\left[\ln\!\left(e^{-\frac{1}{T}\left(H-\left(\mu+\varepsilon\right)\cdot Q\right)}+I\right)\right]\\
 & =\frac{1}{T^{2}}\int_{0}^{1}ds\,\Tr\!\left[M_{T}(\mu+\varepsilon,s)Q_{i}M_{T}(\mu+\varepsilon,1-s)Q_{j}\right],
\end{align}
where the second equality follows from~\eqref{eq:1st-hessian-exp}.
We then conclude that
\begin{multline}
\left.\frac{\partial^{2}}{\partial\varepsilon_{i}\partial\varepsilon_{j}}D_{\FD}(M_{T}(\mu)\|M_{T}(\mu+\varepsilon))\right|_{\varepsilon=0}\\
=\frac{1}{T^{2}}\int_{0}^{1}ds\,\Tr\!\left[M_{T}(\mu,s)Q_{i}M_{T}(\mu,1-s)Q_{j}\right].
\end{multline}
The equality in~\eqref{eq:alt-exp-Fisher-info} follows from applying
\eqref{eq:2nd-hessian-exp}.
\end{proof}

\section{Approximation error bounds}

\subsection{Proof of Proposition~\ref{prop:simple-approx-bnd} (simple bound
on approximation error)}

\label{sec:proof-simple-approx-bnd}

Let us first prove the inequality
\begin{equation}
E(\mathcal{Q},q)\geq F_{T}(\mathcal{Q},q)\label{eq:first-simple-ineq-en-to-free-en}
\end{equation}
in~\eqref{eq:simple-approx-bnd}. Suppose that $M\in\mathcal{M}_{d}$
is feasible for the optimization problem in~\eqref{eq:measurement-opt-gen}
(i.e., it satisfies $\Tr\!\left[Q_{i}M\right]=q_{i}$ for all $i\in\left[c\right]$).
Then
\begin{align}
\Tr\!\left[HM\right] & \geq\Tr\!\left[HM\right]-T\cdot S_{\FD}(M)\\
 & \geq F_{T}(\mathcal{Q},q).
\end{align}
The first inequality follows because $T>0$ and $S_{\FD}(M)\geq0$.
The second inequality follows from the definition of $F_{T}(\mathcal{Q},q)$
in~\eqref{eq:primal-FD-obj}. Since the inequality holds for all feasible
$M\in\mathcal{M}_{d}$, by applying the definition of $E(\mathcal{Q},q)$
in~\eqref{eq:measurement-opt-gen}, we conclude~\eqref{eq:first-simple-ineq-en-to-free-en}.

We now prove the inequality
\begin{equation}
F_{T}(\mathcal{Q},q)\geq E(\mathcal{Q},q)-T\cdot d\ln2,\label{eq:second-simple-ineq-en-to-free-en}
\end{equation}
which establishes the second inequality in~\eqref{eq:simple-approx-bnd}
by setting $T\leq\frac{\varepsilon}{d\ln2}$. Suppose that $M\in\mathcal{M}_{d}$
is feasible for the optimization problem in~\eqref{eq:primal-FD-obj}
(i.e., it satisfies $\Tr\!\left[Q_{i}M\right]=q_{i}$ for all $i\in\left[c\right]$).
Then
\begin{align}
\Tr\!\left[HM\right]-T\cdot S_{\FD}(M) & \geq\Tr\!\left[HM\right]-T\cdot d\ln2\\
 & \geq E(\mathcal{Q},q)-T\cdot d\ln2.
\end{align}
The first inequality follows because
\begin{equation}
S_{\FD}(M)\leq d\ln2,\label{eq:unif-bnd-FD-ent}
\end{equation}
which is a consequence of~\eqref{eq:FD-entropy-sum-binary-entropies}
and the fact that $s(p)\leq\ln2$ for all $p\in\left[0,1\right]$.
The second inequality follows from the definition of $E(\mathcal{Q},q)$
in~\eqref{eq:measurement-opt-gen}. Since the inequality holds for
all feasible $M\in\mathcal{M}_{d}$, by applying the definition of
$F_{T}(\mathcal{Q},q)$ in~\eqref{eq:primal-FD-obj}, we conclude
\eqref{eq:first-simple-ineq-en-to-free-en}.

\subsection{Proof of Proposition~\ref{prop:no-FD-ent-simple-approx-err-bnd}}

\label{sec:no-FD-ent-simple-approx-err-bnd}

Consider that
\begin{align}
\tilde{f}_{T}(\mu_{T}^{\star}) & \overset{(a)}{=}\mu_{T}^{\star}\cdot q+\Tr\!\left[\left(H-\mu_{T}^{\star}\cdot Q\right)M_{T}(\mu_{T}^{\star})\right]\\
 & \overset{(b)}{\geq}\mu_{T}^{\star}\cdot q+\Tr\!\left[\left(H-\mu_{T}^{\star}\cdot Q\right)M_{T}(\mu_{T}^{\star})\right]\nonumber \\
 & \qquad-T\cdot S_{\FD}(M_{T}(\mu_{T}^{\star}))\\
 & \overset{(c)}{=}F_{T}(\mathcal{Q},q)\\
 & \overset{(d)}{\geq}E(\mathcal{Q},q)-\varepsilon.
\end{align}
The equality (a) follows from the definition in~\eqref{eq:approx-without-FD-entropy}.
The inequality (b) follows because $T\cdot S_{\FD}(M_{T}(\mu_{T}^{\star}))\geq0$.
The equality (c) follows from~\eqref{eq:free-energy-opt-dual}. The
inequality (d) follows from~\eqref{eq:simple-approx-bnd}. This establishes
the inequality $\tilde{f}_{T}(\mu_{T}^{\star})\geq E(\mathcal{Q},q)-\varepsilon$.

Now consider that
\begin{align}
  \tilde{f}_{T}(\mu_{T}^{\star})
 & \overset{(a)}{=}\mu_{T}^{\star}\cdot q+\Tr\!\left[\left(H-\mu_{T}^{\star}\cdot Q\right)M_{T}(\mu_{T}^{\star})\right]\\
 & \overset{(b)}{\leq}\mu_{T}^{\star}\cdot q+\Tr\!\left[\left(H-\mu_{T}^{\star}\cdot Q\right)M_{T}(\mu_{T}^{\star})\right]\nonumber \\
 & \qquad-T\cdot S_{\FD}(M_{T}(\mu_{T}^{\star}))+\varepsilon\nonumber \\
 & \overset{(c)}{=}F_{T}(\mathcal{Q},q)+\varepsilon\\
 & \overset{(d)}{\leq}E(\mathcal{Q},q)+\varepsilon.
\end{align}
The equality (a) follows from the definition in~\eqref{eq:approx-without-FD-entropy}.
The inequality (b) follows because
\begin{equation}
T\cdot S_{\FD}(M_{T}(\mu_{T}^{\star}))\leq\varepsilon,
\end{equation}
which in turn is a consequence of the assumption $T\leq\frac{\varepsilon}{d\ln2}$
and the inequality in~\eqref{eq:unif-bnd-FD-ent}. The equality (c)
follows from~\eqref{eq:free-energy-opt-dual}. The inequality (d)
follows from~\eqref{eq:simple-approx-bnd}. This establishes the other
inequality $\tilde{f}_{T}(\mu_{T}^{\star})\leq E(\mathcal{Q},q)+\varepsilon$.

\subsection{Proof of Proposition~\ref{prop:approx-error-spectral-gap} (spectral
gap bound on approximation error)}

\label{sec:proof-approx-error-spectral-gap}

Let $H-\mu\cdot Q$ be a Hamiltonian, a $d\times d$ Hermitian matrix
with positive, zero, and negative eigenvalues, as defined from~\eqref{eq:vector-Herm-ops}.
Then $H-\mu\cdot Q$ has the following unique spectral decomposition:
\begin{equation}
H-\mu\cdot Q=\sum_{i=1}^{M}\lambda_{i}\Pi_{i},
\end{equation}
where $\lambda_{i}$ is an eigenvalue and $\Pi_{i}$ is the corresponding
eigenprojection. Here and throughout, we use the shorthand $\lambda_{i}\equiv\lambda_{i}(H-\mu\cdot Q)$
to refer to the eigenvalues of $H-\mu\cdot Q$.

Consider that the objective function $f_{T}(\mu)$ in~\eqref{eq:dual-FD-obj}
has the following form:
\begin{align}
f_{T}(\mu) & =\mu\cdot q-T\Tr\!\left[\ln\!\left(e^{-\frac{\left(H-\mu\cdot Q\right)}{T}}+I\right)\right]\\
 & =\mu\cdot q-T\Tr\!\left[\ln\!\left(e^{-\frac{\sum_{i=1}^{M}\lambda_{i}\Pi_{i}}{T}}+I\right)\right]\\
 & =\mu\cdot q-T\Tr\!\left[\sum_{i=1}^{M}\ln\!\left(e^{-\frac{\lambda_{i}}{T}}+1\right)\Pi_{i}\right]\\
 & =\mu\cdot q-T\sum_{i=1}^{M}\ln\!\left(e^{-\frac{\lambda_{i}}{T}}+1\right)\Tr\!\left[\Pi_{i}\right]\\
 & =\mu\cdot q-T\sum_{i=1}^{M}d_{i}\ln\!\left(e^{-\frac{\lambda_{i}}{T}}+1\right),\label{eq:fermionic-free-en-spec-decomp}
\end{align}
where
\begin{equation}
d_{i}\coloneqq\Tr\!\left[\Pi_{i}\right].
\end{equation}
Furthermore, observe that the objective function $f(\mu)$ in~\eqref{eq:dual-unreg-obj-func}
has the following form:
\begin{align}
f(\mu) & =\mu\cdot q-\Tr\!\left[\left(H-\mu\cdot Q\right)_{-}\right]\\
 & =\mu\cdot q+\Tr\!\left[\sum_{i:\lambda_{i}<0}\lambda_{i}\Pi_{i}\right]\\
 & =\mu\cdot q+\sum_{i:\lambda_{i}<0}\lambda_{i}\Tr\!\left[\Pi_{i}\right]\\
 & =\mu\cdot q+\sum_{i:\lambda_{i}<0}\lambda_{i}d_{i}.\label{eq:obj-func-value-spectral-decomp}
\end{align}
Consider that
\begin{align}
 & f_{T}(\mu)-f(\mu)\nonumber \\
 & =\mu\cdot q-T\Tr\!\left[\ln\!\left(e^{-\frac{1}{T}\left(H-\mu\cdot Q\right)}+I\right)\right]\nonumber \\
 & \qquad-\left(\mu\cdot q-\Tr\!\left[\left(H-\mu\cdot Q\right)_{-}\right]\right)\\
 & =-T\Tr\!\left[\ln\!\left(e^{-\frac{1}{T}\left(H-\mu\cdot Q\right)}+I\right)\right]+\Tr\!\left[\left(H-\mu\cdot Q\right)_{-}\right]\\
 & \overset{(a)}{=}-T\sum_{i}d_{i}\ln\!\left(e^{-\frac{\lambda_{i}}{T}}+1\right)-\sum_{i:\lambda_{i}<0}\lambda_{i}d_{i}\\
 & =-T\sum_{i}d_{i}\ln\!\left(e^{-\frac{\lambda_{i}}{T}}+1\right)+\sum_{i:\lambda_{i}<0}\left|\lambda_{i}\right|d_{i}\\
 & =-T\sum_{i:\lambda_{i}\geq0}d_{i}\ln\!\left(e^{-\frac{\lambda_{i}}{T}}+1\right)-T\sum_{i:\lambda_{i}<0}d_{i}\ln\!\left(e^{-\frac{\lambda_{i}}{T}}+1\right)\nonumber \\
 & \qquad+T\sum_{i:\lambda_{i}<0}d_{i}\ln\!\left(e^{\frac{\left|\lambda_{i}\right|}{T}}\right)\\
 & =-T\sum_{i:\lambda_{i}\geq0}d_{i}\ln\!\left(e^{-\frac{\lambda_{i}}{T}}+1\right)-T\sum_{i:\lambda_{i}<0}d_{i}\ln\!\left(\frac{e^{-\frac{\lambda_{i}}{T}}+1}{e^{\frac{\left|\lambda_{i}\right|}{T}}}\right)\\
 & =-T\sum_{i:\lambda_{i}\geq0}d_{i}\ln\!\left(e^{-\frac{\lambda_{i}}{T}}+1\right)-T\sum_{i:\lambda_{i}<0}d_{i}\ln\!\left(e^{-\frac{\left|\lambda_{i}\right|}{T}}+1\right)\\
 & =-T\sum_{i}d_{i}\ln\!\left(e^{-\frac{\left|\lambda_{i}\right|}{T}}+1\right).
\end{align}
Equality (a) follows from~\eqref{eq:fermionic-free-en-spec-decomp}
and~\eqref{eq:obj-func-value-spectral-decomp}. Now recall the definitions
of $d_{0}$ and $\Delta$ in~\eqref{eq:def-d0} and~\eqref{eq:def-Delta},
respectively. Continuing, consider that
\begin{align}
 & -T\sum_{i}d_{i}\ln\!\left(e^{-\frac{\left|\lambda_{i}\right|}{T}}+1\right)\nonumber \\
 & =-T\sum_{i:\lambda_{i}=0}d_{i}\ln\!\left(e^{-\frac{\left|\lambda_{i}\right|}{T}}+1\right)\notag\\
 & \qquad -T\sum_{i:\lambda_{i}\neq0}d_{i}\ln\!\left(e^{-\frac{\left|\lambda_{i}\right|}{T}}+1\right)\\
 & =-T\sum_{i:\lambda_{i}=0}d_{i}\ln2-T\sum_{i:\lambda_{i}\neq0}d_{i}\ln\!\left(e^{-\frac{\left|\lambda_{i}\right|}{T}}+1\right)\\
 & \overset{(a)}{\geq}-Td_{0}\ln2-T\sum_{i:\lambda_{i}\neq0}d_{i}\ln\!\left(e^{-\frac{\left|\lambda_{i}\right|}{T}}+1\right)\\
 & \overset{(b)}{\geq}-Td_{0}\ln2-T\sum_{i:\lambda_{i}\neq0}d_{i}\ln\!\left(e^{-\frac{\Delta}{T}}+1\right)\\
 & =-Td_{0}\ln2-T\ln\!\left(e^{-\frac{\Delta}{T}}+1\right)\sum_{i:\lambda_{i}\neq0}d_{i}\\
 & \overset{(c)}{\geq}-Td_{0}\ln2-T\left(d-d_{0}\right)\ln\!\left(e^{-\frac{\Delta}{T}}+1\right)\\
 & \overset{(d)}{\geq}-Td_{0}\ln2-T\left(d-d_{0}\right)e^{-\frac{\Delta}{T}}\\
 & =-T\left(d_{0}\ln2+\left(d-d_{0}\right)e^{-\frac{\Delta}{T}}\right).
\end{align}
The inequality (a) follows from the definition of $d_{0}$ in~\eqref{eq:def-d0}.
The inequality (b) follows from the definition of $\Delta$ in~\eqref{eq:def-Delta}.
The inequality (c) again follows from the definition of $d_{0}$ in
\eqref{eq:def-d0}. The inequality (d) follows because $-\ln(x+1)\geq-x$
for all $x\geq0$.

Thus, we have proven that
\begin{equation}
f_{T}(\mu)\geq f(\mu)-T\left(d_{0}\ln2+\left(d-d_{0}\right)e^{-\frac{\Delta}{T}}\right).
\end{equation}
Taking the supremum over all $\mu\in\mathbb{M}$, we conclude that
\begin{equation}
F_{T}(\mathcal{Q},q)\geq E(\mathcal{Q},q)-T\left(d_{0}\ln2+\left(d-d_{0}\right)e^{-\frac{\Delta}{T}}\right),
\end{equation}
thus establishing~\eqref{eq:temp-depend-bnd-d0-spec-gap}.

For the error to be $\leq\varepsilon$, we require that
\begin{equation}
T\left(d_{0}\ln2+\left(d-d_{0}\right)e^{-\frac{\Delta}{T}}\right)\leq\varepsilon.\label{eq:up-bnd-error-temp-depend-d0-spec-gap}
\end{equation}
If we pick $T$ such that
\begin{align}
\left(d-d_{0}\right)e^{-\frac{\Delta}{T}} & \leq d_{0}\ln2\\
\Longleftrightarrow\quad\frac{1}{\ln2}\left(\frac{d-d_{0}}{d_{0}}\right) & \leq e^{\frac{\Delta}{T}},\\
\Longleftrightarrow\quad\ln\!\left(\frac{1}{\ln2}\left(\frac{d-d_{0}}{d_{0}}\right)\right) & \leq\frac{\Delta}{T}\\
\Longleftrightarrow\quad T\leq & \frac{\Delta}{\ln\!\left(\frac{1}{\ln2}\left(\frac{d-d_{0}}{d_{0}}\right)\right)},
\end{align}
then we find that
\begin{equation}
T\left(d_{0}\ln2+\left(d-d_{0}\right)e^{-\frac{\Delta}{T}}\right)\leq T2d_{0}\ln2.
\end{equation}
If we furthermore require $T$ to satisfy
\begin{equation}
T\leq\frac{\varepsilon}{2d_{0}\ln2},
\end{equation}
then $T2d_{0}\ln2\leq\varepsilon$. So, if the temperature $T$ satisfies
the following bound
\begin{equation}
T\leq\min\left\{ \frac{\varepsilon}{2d_{0}\ln2},\frac{\Delta}{\ln\!\left(\frac{1}{\ln2}\left(\frac{d-d_{0}}{d_{0}}\right)\right)}\right\} ,
\end{equation}
then~\eqref{eq:up-bnd-error-temp-depend-d0-spec-gap} holds, thus
concluding the proof.

\subsection{Proof of Proposition~\ref{prop:FD-entropy-up-bnd} (upper bound on
temperature scaled Fermi--Dirac entropy)}

\label{sec:FD-entropy-up-bnd-proof}

We first prove~\eqref{eq:FD-entropy-up-bnd-1}. Consider that
\begin{align}
 & S_{\FD}(M_{T}(A))\nonumber \\
 & =\sum_{i}s\!\left(\frac{1}{e^{\lambda_{i}/T}+1}\right)\\
 & =\sum_{i:\lambda_{i}=0}s\!\left(\frac{1}{e^{\lambda_{i}/T}+1}\right)+\sum_{i:\lambda_{i}\neq0}s\!\left(\frac{1}{e^{\lambda_{i}/T}+1}\right)\\
 & \overset{(a)}{=}\sum_{i:\lambda_{i}=0}s\!\left(\frac{1}{2}\right)+\sum_{i:\lambda_{i}\neq0}s\!\left(\frac{1}{e^{\left|\lambda_{i}\right|/T}+1}\right)\\
 & \overset{(b)}{=}d_{0}\ln2+\sum_{i:\lambda_{i}\neq0}s\!\left(\frac{1}{e^{\left|\lambda_{i}\right|/T}+1}\right)\\
 & \overset{(c)}{\leq}d_{0}\ln2+\sum_{i:\lambda_{i}\neq0}s\!\left(\frac{1}{e^{\Delta/T}+1}\right)\\
 & =d_{0}\ln2+\left(d-d_{0}\right)s\!\left(\frac{1}{e^{\Delta/T}+1}\right).
\end{align}
The equality (a) follows because
\begin{equation}
s\!\left(\left(e^{x/T}+1\right)^{-1}\right)=s\!\left(\frac{1}{2}\right)
\end{equation}
when $x=0$ and because
\begin{equation}
s\!\left(\left(e^{x/T}+1\right)^{-1}\right)=s\!\left(\left(e^{\left|x\right|/T}+1\right)^{-1}\right)\label{eq:bin-ent-abs-val}
\end{equation}
for all $x\in\mathbb{R}$. Indeed,~\eqref{eq:bin-ent-abs-val} is
evident if $x\geq0$, and if $x<0$, then
\begin{align}
s\!\left(\left(e^{x/T}+1\right)^{-1}\right) & =s\!\left(1-\left(e^{x/T}+1\right)^{-1}\right)\\
 & =s\!\left(e^{x/T}\left(e^{x/T}+1\right)^{-1}\right)\\
 & =s\!\left(\left(e^{-x/T}+1\right)^{-1}\right)\\
 & =s\!\left(\left(e^{\left|x\right|/T}+1\right)^{-1}\right).
\end{align}
The equality (b) follows from the definition of $d_{0}$ in~\eqref{eq:kernel-dim}
and because $s\!\left(\frac{1}{2}\right)=\ln2$. The inequality (c)
follows because $\Delta\leq\left|\lambda_{i}\right|$ for all $\lambda_{i}\neq0$
(by the definition in~\eqref{eq:spectral-gap-def-FD-ent-bnd}), which
implies that
\begin{equation}
\left(e^{\left|\lambda_{i}\right|/T}+1\right)^{-1}\leq\left(e^{\Delta/T}+1\right)^{-1}.
\end{equation}
Since $\left(e^{\Delta/T}+1\right)^{-1}\leq\frac{1}{2}$ and the binary
entropy function $s(\cdot)$ is monotone increasing on the interval
$\left[0,\frac{1}{2}\right]$, we conclude the inequality (c). Multiplying
the final inequality by $T$, we conclude that
\begin{equation}
T\cdot S_{\FD}(M_{T}(A))\leq T\left[d_{0}\ln2+\left(d-d_{0}\right)s\!\left(\frac{1}{e^{\Delta/T}+1}\right)\right],
\end{equation}
which is~\eqref{eq:FD-entropy-up-bnd-1}.

We now prove~\eqref{eq:FD-entropy-up-bnd-2}. For $p\in\left[0,1\right]$,
consider that
\begin{equation}
s(p)\leq p\left(1-\ln p\right)
\end{equation}
because $-\left(1-p\right)\ln\!\left(1-p\right)\leq p$. Now consider
that
\begin{align}
-\ln\!\left(\frac{1}{e^{\Delta/T}+1}\right) & =\ln\!\left(e^{\Delta/T}+1\right)\\
 & \leq\ln\!\left(2e^{\Delta/T}\right)\\
 & =\frac{\Delta}{T}+\ln2\\
 & \leq\frac{\Delta}{T}+1.
\end{align}
So it follows that
\begin{equation}
s\!\left(\frac{1}{e^{\Delta/T}+1}\right)\leq\frac{\frac{\Delta}{T}+2}{e^{\Delta/T}+1}.
\end{equation}
This implies that
\begin{align}
 & T\left[d_{0}\ln2+\left(d-d_{0}\right)s\!\left(\frac{1}{e^{\Delta/T}+1}\right)\right]\nonumber \\
 & \leq T\left[d_{0}\ln2+\left(d-d_{0}\right)\left(\frac{\frac{\Delta}{T}+2}{e^{\Delta/T}+1}\right)\right]\\
 & =Td_{0}\ln2+\left(d-d_{0}\right)\left(\frac{2T+\Delta}{e^{\Delta/T}+1}\right).\label{eq:FD-up-bnd-intermed-step}
\end{align}
We would like the expression in~\eqref{eq:FD-up-bnd-intermed-step}
to be $\leq\varepsilon$. If we pick
\begin{equation}
T\leq\frac{\varepsilon}{d_{0}2\ln2},\label{eq:temp-constraint-1}
\end{equation}
then
\begin{multline}
Td_{0}\ln2+\left(d-d_{0}\right)\left(\frac{2T+\Delta}{e^{\Delta/T}+1}\right)\\
\leq\frac{\varepsilon}{2}+\left(d-d_{0}\right)\left(\frac{2T+\Delta}{e^{\Delta/T}+1}\right).\label{eq:eps-bound-from-temp-constraint-first}
\end{multline}
Also, we aim to have the following inequality hold:
\begin{equation}
\left(d-d_{0}\right)\left(\frac{2T+\Delta}{e^{\Delta/T}+1}\right)\leq\frac{\varepsilon}{2}.
\end{equation}
If we pick
\begin{equation}
T\leq T_{0},\label{eq:temp-constraint-2}
\end{equation}
then
\begin{align}
& \left(d-d_{0}\right)\left(\frac{2T+\Delta}{e^{\Delta/T}+1}\right) \notag \\
& \leq\left(d-d_{0}\right)\left(\frac{2T_{0}+\Delta}{e^{\Delta/T}+1}\right)\\
 & \leq\left(d-d_{0}\right)e^{-\Delta/T}\left(2T_{0}+\Delta\right).
\end{align}
So then it suffices to take
\begin{align}
\left(d-d_{0}\right)e^{-\Delta/T}\left(2T_{0}+\Delta\right) & \leq\frac{\varepsilon}{2}\label{eq:eps-bound-from-temp-constraint-last}\\
\Longleftrightarrow\qquad\frac{2\left(d-d_{0}\right)\left(2T_{0}+\Delta\right)}{\varepsilon} & \leq e^{\Delta/T}\\
\Longleftrightarrow\qquad\ln\left(\frac{2\left(d-d_{0}\right)\left(2T_{0}+\Delta\right)}{\varepsilon}\right) & \leq\frac{\Delta}{T}\\
\Longleftrightarrow\qquad\frac{\Delta}{\ln\left(\frac{2\left(d-d_{0}\right)\left(2T_{0}+\Delta\right)}{\varepsilon}\right)} & \geq T.\label{eq:temp-constraint-3}
\end{align}
So putting together the constraints in~\eqref{eq:temp-constraint-1},
\eqref{eq:temp-constraint-2}, and~\eqref{eq:temp-constraint-3},
along with their consequences in~\eqref{eq:eps-bound-from-temp-constraint-first}
and~\eqref{eq:eps-bound-from-temp-constraint-last}, then the constraint
\begin{equation}
T\leq\min\left\{ T_{0},\frac{\varepsilon}{d_{0}2\ln2},\frac{\Delta}{\ln\left(\frac{2\left(d-d_{0}\right)\left(2T_{0}+\Delta\right)}{\varepsilon}\right)}\right\} 
\end{equation}
suffices to guarantee that
\begin{equation}
T\cdot S_{\FD}(A)\leq\varepsilon,
\end{equation}
thus establishing~\eqref{eq:FD-entropy-up-bnd-2}.

\subsection{Proof of Proposition~\ref{prop:approx-err-no-FD-obj}}

\label{sec:approx-err-no-FD-obj}

Consider that
\begin{align}
\tilde{f}_{T}(\mu_{T}^{\star}) & \overset{(a)}{=}\mu_{T}^{\star}\cdot q+\Tr\!\left[\left(H-\mu_{T}^{\star}\cdot Q\right)M_{T}(\mu_{T}^{\star})\right]\\
 & \overset{(b)}{\geq}\mu_{T}^{\star}\cdot q+\Tr\!\left[\left(H-\mu_{T}^{\star}\cdot Q\right)M_{T}(\mu_{T}^{\star})\right]\nonumber \\
 & \qquad-T\cdot S_{\FD}(M_{T}(\mu_{T}^{\star}))\\
 & \overset{(c)}{=}F_{T}(\mathcal{Q},q)\\
 & \overset{(d)}{\geq}E(\mathcal{Q},q)-T\left(d_{0}\ln2+\left(d-d_{0}\right)e^{-\frac{\Delta}{T}}\right).
\end{align}
The equality (a) follows from the definition in~\eqref{eq:approx-without-FD-entropy}.
The inequality (b) follows because $T\cdot S_{\FD}(M_{T}(\mu_{T}^{\star}))\geq0$.
The equality (c) follows from~\eqref{eq:free-energy-opt-dual}. The
inequality (d) follows from~\eqref{eq:temp-depend-bnd-d0-spec-gap}.
This establishes the lower bound in~\eqref{eq:err-bnd-obj-no-FD-ent-tem-depend}.

Now consider that
\begin{align}
 & \tilde{f}_{T}(\mu_{T}^{\star})\nonumber \\
 & \overset{(a)}{=}\mu_{T}^{\star}\cdot q+\Tr\!\left[\left(H-\mu_{T}^{\star}\cdot Q\right)M_{T}(\mu_{T}^{\star})\right]\\
 & \overset{(b)}{\leq}\mu_{T}^{\star}\cdot q+\Tr\!\left[\left(H-\mu_{T}^{\star}\cdot Q\right)M_{T}(\mu_{T}^{\star})\right]\nonumber \\
 & \qquad-T\cdot S_{\FD}(M_{T}(\mu_{T}^{\star}))\nonumber \\
 & \qquad+T\left[d_{0}\ln2+\left(d-d_{0}\right)s\!\left(\frac{1}{e^{\Delta/T}+1}\right)\right]\\
 & \overset{(c)}{=}F_{T}(\mathcal{Q},q)+T\left[d_{0}\ln2+\left(d-d_{0}\right)s\!\left(\frac{1}{e^{\Delta/T}+1}\right)\right]\\
 & \overset{(d)}{\leq}E(\mathcal{Q},q)+T\left[d_{0}\ln2+\left(d-d_{0}\right)s\!\left(\frac{1}{e^{\Delta/T}+1}\right)\right].
\end{align}
The equality (a) follows from the definition in~\eqref{eq:approx-without-FD-entropy}.
The inequality (b) follows because
\begin{equation}
T\cdot S_{\FD}(M_{T}(\mu_{T}^{\star}))\leq T\left[d_{0}\ln2+\left(d-d_{0}\right)s\!\left(\frac{1}{e^{\Delta/T}+1}\right)\right],
\end{equation}
which in turn is a consequence of the assumptions in~\eqref{eq:def-d0-1}--\eqref{eq:def-Delta-1}
and the inequality in~\eqref{eq:FD-entropy-up-bnd-1}. The equality
(c) follows from~\eqref{eq:free-energy-opt-dual}. The inequality
(d) follows from~\eqref{eq:simple-approx-bnd}. This establishes the
upper bound in~\eqref{eq:err-bnd-obj-no-FD-ent-tem-depend}. 

By choosing $T$ as in~\eqref{eq:temp-eps-error-spec-gap}, we conclude
from~\eqref{eq:up-bnd-error-temp-depend-d0-spec-gap} that 
\begin{equation}
T\left(d_{0}\ln2+\left(d-d_{0}\right)e^{-\frac{\Delta}{T}}\right)\leq\varepsilon,\label{eq:err-bnd-obj-no-FD-eps-1}
\end{equation}
and by choosing $T$ as in~\eqref{eq:temp-depend-eps-err-FD-ent},
we conclude from~\eqref{eq:FD-entropy-up-bnd-2} that
\begin{equation}
T\left[d_{0}\ln2+\left(d-d_{0}\right)s\!\left(\frac{1}{e^{\Delta/T}+1}\right)\right]\leq\varepsilon.\label{eq:err-bnd-obj-no-FD-eps-2}
\end{equation}
Finally putting together~\eqref{eq:err-bnd-obj-no-FD-ent-tem-depend},
\eqref{eq:err-bnd-obj-no-FD-eps-1}, and~\eqref{eq:err-bnd-obj-no-FD-eps-2},
we conclude that
\begin{equation}
E(\mathcal{Q},q)+\varepsilon\geq\tilde{f}_{T}(\mu_{T}^{\star})\geq E(\mathcal{Q},q)-\varepsilon,
\end{equation}
if~\eqref{eq:temp-threshold-obj-no-FD-ent} holds, which is equivalent
to the desired inequality in~\eqref{eq:err-bnd-obj-no-FD-eps}.

\section{Gradient and Hessian calculations}

\subsection{Proof of Proposition~\ref{prop:gradient-FD-gen-obj-func} (gradient
of dual objective function)}

\label{sec:Proof-of-gradient-FD-gen}Consider that
\begin{align}
 & \frac{\partial}{\partial\mu_{i}}f_{T}(\mu)\nonumber \\
 & =\frac{\partial}{\partial\mu_{i}}\left(\mu\cdot q-T\Tr\!\left[\ln\!\left(e^{-\frac{1}{T}\left(H-\mu\cdot Q\right)}+I\right)\right]\right)\\
 & =q_{i}-T\frac{\partial}{\partial\mu_{i}}\Tr\!\left[\ln\!\left(e^{-\frac{1}{T}\left(H-\mu\cdot Q\right)}+I\right)\right].\label{eq:grad-proof}
\end{align}
Now consider that
\begin{align}
 & \frac{\partial}{\partial\mu_{i}}\Tr\!\left[\ln\!\left(e^{-\frac{1}{T}\left(H-\mu\cdot Q\right)}+I\right)\right]\nonumber \\
 & \overset{(a)}{=}\Tr\!\left[\left(e^{-\frac{1}{T}\left(H-\mu\cdot Q\right)}+I\right)^{-1}\frac{\partial}{\partial\mu_{i}}\left(e^{-\frac{1}{T}\left(H-\mu\cdot Q\right)}+I\right)\right]\\
 & =\Tr\!\left[\left(e^{-\frac{1}{T}\left(H-\mu\cdot Q\right)}+I\right)^{-1}\frac{\partial}{\partial\mu_{i}}e^{-\frac{1}{T}\left(H-\mu\cdot Q\right)}\right]\\
 & \overset{(b)}{=}\Tr\!\left[\begin{array}{c}
\left(e^{-\frac{1}{T}\left(H-\mu\cdot Q\right)}+I\right)^{-1}\int_{0}^{1}dt\,e^{-\frac{t}{T}\left(H-\mu\cdot Q\right)}\times\\
\frac{\partial}{\partial\mu_{i}}\left[-\frac{1}{T}\left(H-\mu\cdot Q\right)\right]e^{-\frac{1-t}{T}\left(H-\mu\cdot Q\right)}
\end{array}\right]\\
 & \overset{(c)}{=}\int_{0}^{1}dt\,\Tr\!\left[\begin{array}{c}
e^{-\frac{1-t}{T}\left(H-\mu\cdot Q\right)}\left(e^{-\frac{1}{T}\left(H-\mu\cdot Q\right)}+I\right)^{-1}\times\\
e^{-\frac{t}{T}\left(H-\mu\cdot Q\right)}\frac{Q_{i}}{T}
\end{array}\right]\\
 & \overset{(d)}{=}\frac{1}{T}\int_{0}^{1}dt\,\Tr\!\left[\begin{array}{c}
e^{-\frac{1-t}{T}\left(H-\mu\cdot Q\right)}e^{-\frac{t}{T}\left(H-\mu\cdot Q\right)}\times\\
\left(e^{-\frac{1}{T}\left(H-\mu\cdot Q\right)}+I\right)^{-1}Q_{i}
\end{array}\right]\\
 & =\frac{1}{T}\Tr\!\left[e^{-\frac{1}{T}\left(H-\mu\cdot Q\right)}\left(e^{-\frac{1}{T}\left(H-\mu\cdot Q\right)}+I\right)^{-1}Q_{i}\right]\\
 & =\frac{1}{T}\Tr\!\left[\left(e^{\frac{1}{T}\left(H-\mu\cdot Q\right)}+I\right)^{-1}Q_{i}\right]\\
 & \overset{(e)}{=}\frac{1}{T}\Tr\!\left[M_{T}(\mu)Q_{i}\right].\label{eq:final-line-grad-proof}
\end{align}
The equality (a) follows because
\begin{equation}
\frac{\partial}{\partial x}\Tr[f(A(x))]=\Tr\!\left[f'(A(x))\frac{\partial}{\partial x}A(x)\right]
\end{equation}
for a differentiable function $f$ of a matrix $A(x)$. The equality
(b) follows from Duhamel's formula for the derivative of matrix exponentials
(see, e.g., \cite[Proposition~47]{Wilde2025}):
\begin{equation}
\frac{\partial}{\partial x}e^{A(x)}=\int_{0}^{1}dt\,e^{tA(x)}\left(\frac{\partial}{\partial x}A(x)\right)e^{\left(1-t\right)A(x)}.
\end{equation}
The equality (c) follows from cyclicity of trace and the fact that
\begin{equation}
\frac{\partial}{\partial\mu_{i}}\left[-\frac{1}{T}\left(H-\mu\cdot Q\right)\right]=\frac{Q_{i}}{T}.
\end{equation}
The equality (d) follows because $\left(e^{-\frac{1}{T}\left(H-\mu\cdot Q\right)}+I\right)^{-1}$
commutes with $e^{-\frac{t}{T}\left(H-\mu\cdot Q\right)}$. The final
equality (e) follows by applying the definition of the measurement
operator $M_{T}(\mu)$ in~\eqref{eq:FD-meas-op}. Combining~\eqref{eq:final-line-grad-proof}
with~\eqref{eq:grad-proof}, we conclude that
\begin{equation}
\frac{\partial}{\partial\mu_{i}}f(\mu)=q_{i}-\Tr\!\left[M_{T}(\mu)Q_{i}\right].
\end{equation}

\subsection{Proof of Proposition~\ref{prop:hessian-FD-gen-obj-func} (Hessian
of dual objective function)}

\label{sec:Proof-of-Hessian-FD-gen-obj-func}

We first prove~\eqref{eq:1st-hessian-exp}. Consider that
\begin{align}
 & \frac{\partial^{2}}{\partial\mu_{i}\partial\mu_{j}}f_{T}(\mu)\nonumber \\
 & =\frac{\partial}{\partial\mu_{i}}\left(q_{j}-\Tr\!\left[M_{T}(\mu)Q_{j}\right]\right)\\
 & =-\frac{\partial}{\partial\mu_{i}}\Tr\!\left[M_{T}(\mu)Q_{j}\right]\\
 & =-\Tr\!\left[\frac{\partial}{\partial\mu_{i}}\left[\left(e^{\frac{1}{T}\left(H-\mu\cdot Q\right)}+I\right)^{-1}\right]Q_{j}\right]\\
 & =\Tr\!\left[\begin{array}{c}
\left(e^{\frac{1}{T}\left(H-\mu\cdot Q\right)}+I\right)^{-1}\left(\frac{\partial}{\partial\mu_{i}}\left[e^{\frac{1}{T}\left(H-\mu\cdot Q\right)}+I\right]\right)\times\\
\left(e^{\frac{1}{T}\left(H-\mu\cdot Q\right)}+I\right)^{-1}Q_{j}
\end{array}\right]\\
 & =\Tr\!\left[M_{T}(\mu)\left[\frac{\partial}{\partial\mu_{i}}e^{\frac{1}{T}\left(H-\mu\cdot Q\right)}\right]M_{T}(\mu)Q_{j}\right]\label{eq:hessian-mid-proof}\\
 & =-\frac{1}{T}\int_{0}^{1}dt\,\Tr\!\left[
 \begin{array}{c}
 M_{T}(\mu)e^{\frac{t}{T}\left(H-\mu\cdot Q\right)}Q_{i}\times \\
 e^{\frac{1-t}{T}\left(H-\mu\cdot Q\right)}M_{T}(\mu)Q_{j}
 \end{array}
 \right]\\
 & =-\frac{1}{T}\int_{0}^{1}dt\,\Tr\!\left[
 \begin{array}{c}
 M_{T}(\mu)e^{\frac{t}{T}\left(H-\mu\cdot Q\right)}Q_{i}\times \\
 e^{\frac{1-t}{T}\left(H-\mu\cdot Q\right)}M_{T}(\mu)Q_{j}
 \end{array}
 \right]\\
 & =-\frac{1}{T}\int_{0}^{1}dt\,\Tr\!\left[Q_{i}M_{T}(\mu,1-s)Q_{j}M_{T}(\mu,s)\right],\label{eq:hessian-elements}
\end{align}
where we used the facts that
\begin{align}
\frac{\partial}{\partial x}\left[A(x)^{-1}\right] & =-A(x)^{-1}\left(\frac{\partial}{\partial x}A(x)\right)A(x)^{-1},\\
\frac{\partial}{\partial x}e^{A(x)} & =\int_{0}^{1}ds\,e^{sA(x)}\left(\frac{\partial}{\partial x}A(x)\right)e^{\left(1-s\right)A(x)},
\end{align}
and the definition of $M_{T}(\mu,s)$ in~\eqref{eq:meas-op-t-depend}.

We now prove~\eqref{eq:2nd-hessian-exp}. Recall from \cite[Lemmas~10 and 12]{Patel2025}
that
\begin{equation}
\frac{\partial}{\partial x}e^{A(x)}=\frac{1}{2}\left\{ \Phi_{A(x)}\!\left(\frac{\partial}{\partial x}A(x)\right),e^{A(x)}\right\} ,\label{eq:deriv-exp-fourier-trans}
\end{equation}
where the quantum channel $\Phi_{A(x)}$ is defined as
\begin{equation}
\Phi_{A(x)}(Y)\coloneqq\int_{-\infty}^{\infty}dt\,\gamma(t)e^{-iA(x)t}Ye^{iA(x)t},
\end{equation}
and the high-peak tent probability density $\gamma(t)$ is defined
in~\eqref{eq:high-peak-tent-prob-dens} (see also~\cite{Hastings2007,Kim2012,Ejima2019,Kato2019,Anshu2021}).
Starting from~\eqref{eq:hessian-mid-proof} and applying~\eqref{eq:deriv-exp-fourier-trans},
while noting that
\begin{equation}
\frac{\partial}{\partial\mu_{i}}\left(\frac{1}{T}\left(H-\mu\cdot Q\right)\right)=-\frac{Q_{i}}{T},
\end{equation}
we find that
\begin{align}
 & \frac{\partial^{2}}{\partial\mu_{i}\partial\mu_{j}}f_{T}(\mu)\nonumber \\
 & =\Tr\!\left[M_{T}(\mu)\left[\frac{\partial}{\partial\mu_{i}}e^{\frac{1}{T}\left(H-\mu\cdot Q\right)}\right]M_{T}(\mu)Q_{j}\right]\\
 & =-\frac{1}{2T}\Tr\!\left[M_{T}(\mu)\left\{ \Phi_{\mu}\!\left(Q_{i}\right),e^{\frac{1}{T}\left(H-\mu\cdot Q\right)}\right\} M_{T}(\mu)Q_{j}\right]\\
 & =-\frac{1}{2T}\Tr\!\left[M_{T}(\mu)\Phi_{\mu}\!\left(Q_{i}\right)e^{\frac{1}{T}\left(H-\mu\cdot Q\right)}M_{T}(\mu)Q_{j}\right]\nonumber \\
 & \qquad-\frac{1}{2T}\Tr\!\left[M_{T}(\mu)e^{\frac{1}{T}\left(H-\mu\cdot Q\right)}\Phi_{\mu}\!\left(Q_{i}\right)M_{T}(\mu)Q_{j}\right]\\
 & \overset{(a)}{=}-\frac{1}{2T}\Tr\!\left[M_{T}(\mu)\Phi_{\mu}\!\left(Q_{i}\right)\left(I-M_{T}(\mu)\right)Q_{j}\right]\nonumber \\
 & \qquad-\frac{1}{2T}\Tr\!\left[\left(I-M_{T}(\mu)\right)\Phi_{\mu}\!\left(Q_{i}\right)M_{T}(\mu)Q_{j}\right]\\
 & =-\frac{1}{T}\Re\!\left[\Tr\!\left[M_{T}(\mu)\Phi_{\mu}(Q_{i})\left(I-M_{T}(\mu)\right)Q_{j}\right]\right],
\end{align}
where equality (a) follows from~\eqref{eq:scalar-ident-FD-meas-compl}.

\subsection{Proof of Proposition~\ref{prop:hessian-NSD-spec-up-bnd} (bounds
on Hessian of dual objective function)}

\label{sec:hessian-NSD-spec-up-bnd}We first prove that the Hessian
is negative semidefinite. Let $v\in\mathbb{R}^{c}$ be arbitrary,
and let $\nabla^{2}f_{T}(\mu)$ denote the Hessian matrix for $f_{T}(\mu)$,
with elements as given in~\eqref{eq:hessian-elements}. Consider that
\begin{align}
 & -v^{T}\nabla^{2}f_{T}(\mu)v\nonumber \\
 & =-\sum_{i,j\in\left[c\right]}v_{i}\left[\frac{\partial^{2}}{\partial\mu_{i}\partial\mu_{j}}f(\mu)\right]v_{j}\\
 & =\frac{1}{T}\sum_{i,j\in\left[c\right]}v_{i}\int_{0}^{1}ds\,\Tr\!\left[Q_{i}M_{T}(\mu,1-s)Q_{j}M_{T}(\mu,s)\right]v_{j}\\
 & =\frac{1}{T}\int_{0}^{1}ds\,\Tr\!\left[\begin{array}{c}
\left(\sum_{i\in\left[c\right]}v_{i}Q_{i}\right)M_{T}(\mu,1-s)\times\\
\left(\sum_{j\in\left[c\right]}v_{j}Q_{j}\right)M_{T}(\mu,s)
\end{array}\right]\\
 & =\frac{1}{T}\int_{0}^{1}ds\,\Tr\!\left[WM_{T}(\mu,1-s)WM_{T}(\mu,s)\right]\label{eq:Hessian-concave-last-step}\\
 & \geq0,
\end{align}
where $W\coloneqq\sum_{i\in\left[c\right]}v_{i}Q_{i}$. The last inequality
follows because the matrix $M_{T}(\mu,1-s)$ is positive semidefinite,
which implies that $WM_{T}(\mu,1-s)W$ is positive semidefinite given
that $W$ is Hermitian. Also, the matrix $M_{T}(\mu,s)$ is positive
semidefinite. As such, the trace expression in~\eqref{eq:Hessian-concave-last-step}
is non-negative for all $s\in\left[0,1\right]$, so that the integral
is non-negative also.

Consider that
\begin{align}
 & \left|\frac{\partial^{2}}{\partial\mu_{i}\partial\mu_{j}}f(\mu)\right|\nonumber \\
 & =\frac{1}{T}\left|\int_{0}^{1}ds\,\Tr\!\left[Q_{i}M_{T}(\mu,1-s)Q_{j}M_{T}(\mu,s)\right]\right|\\
 & \leq\frac{1}{T}\int_{0}^{1}ds\,\left|\Tr\!\left[M_{T}(\mu,s)Q_{i}M_{T}(\mu,1-s)Q_{j}\right]\right|\\
 & \overset{(a)}{\leq}\frac{1}{T}\int_{0}^{1}ds\,\left\Vert M_{T}(\mu,s)\right\Vert \left\Vert Q_{i}\right\Vert _{1}\left\Vert M_{T}(\mu,1-s)\right\Vert \left\Vert Q_{j}\right\Vert \\
 & \leq\frac{1}{T}\left\Vert Q_{i}\right\Vert _{1}\left\Vert Q_{j}\right\Vert .\label{eq:hessian-elements-bound}
\end{align}
The inequality (a) follows from the H\"older inequality, as well
as the upper bound
\begin{equation}
\frac{e^{sx}}{e^{x}+1}\leq1,\label{eq:scalar-ineq-meas-ops-interp}
\end{equation}
holding for all $s\in\left[0,1\right]$ and $x\in\mathbb{R}$, which
implies that the following inequality holds for all $s\in\left[0,1\right]$:
\begin{equation}
\max\left\{ \left\Vert M_{T}(\mu,s)\right\Vert ,\left\Vert M_{T}(\mu,1-s)\right\Vert \right\} \leq1.
\end{equation}
To see~\eqref{eq:scalar-ineq-meas-ops-interp}, consider that
\begin{equation}
\frac{e^{tx}}{e^{x}+1}\leq1\qquad\Longleftrightarrow\qquad e^{tx}\leq e^{x}+1.
\end{equation}
If $x\geq0$, then $e^{tx}\leq e^{x}$, which in turn implies that
$e^{tx}\leq e^{x}+1$. If $x\leq0$, then $e^{tx}\leq1$, which in
turn implies that $e^{tx}\leq e^{x}+1$.

An alternative proof of the upper bound in~\eqref{eq:hessian-elements-bound}
is as follows, making use of~\eqref{eq:2nd-hessian-exp}:
\begin{align}
 & \left|\frac{\partial^{2}}{\partial\mu_{i}\partial\mu_{j}}f(\mu)\right|\nonumber \\
 & =\left|-\frac{1}{T}\Re\!\left[\Tr\!\left[M_{T}(\mu)\Phi_{\mu}(Q_{i})\left(I-M_{T}(\mu)\right)Q_{j}\right]\right]\right|\\
 & \leq\frac{1}{T}\left|\Tr\!\left[M_{T}(\mu)\Phi_{\mu}(Q_{i})\left(I-M_{T}(\mu)\right)Q_{j}\right]\right|\\
 & \leq\frac{1}{T}\left\Vert M_{T}(\mu)\right\Vert \left\Vert \Phi_{\mu}(Q_{i})\right\Vert _{1}\left\Vert I-M_{T}(\mu)\right\Vert \left\Vert Q_{j}\right\Vert \\
 & \leq\frac{1}{T}\left\Vert Q_{i}\right\Vert _{1}\left\Vert Q_{j}\right\Vert .\label{eq:hessian-elements-bound-1}
\end{align}
The last inequality follows because $M_{T}(\mu)$ and $I-M_{T}(\mu)$
are measurement operators, and because $\left\Vert \Phi_{\mu}(Q_{i})\right\Vert _{1}\leq\left\Vert Q_{i}\right\Vert _{1}$,
given that the trace norm is convex and unitarily invariant.

Let us now obtain an upper bound on the largest singular value of
the Hessian (i.e., its spectral norm). Consider that
\begin{align}
\left\Vert \nabla^{2}f_{T}(\mu)\right\Vert  & \overset{(a)}{\leq}\left\Vert \nabla^{2}f_{T}(\mu)\right\Vert _{1}\\
 & \overset{(b)}{=}-\Tr\!\left[\nabla^{2}f_{T}(\mu)\right]\\
 & \leq\sum_{i\in\left[c\right]}\left|\frac{\partial^{2}}{\partial\mu_{i}^{2}}f(\mu)\right|\\
 & \overset{(c)}{\leq}\frac{1}{T}\sum_{i\in\left[c\right]}\left\Vert Q_{i}\right\Vert _{1}\left\Vert Q_{i}\right\Vert .
\end{align}
The inequality (a) follows because the spectral norm does not exceed
the trace norm. The equality (b) follows because the Hessian matrix
$\nabla^{2}f_{T}(\mu)$ is negative semidefinite. The inequality (c)
follows from~\eqref{eq:hessian-elements-bound}.

\section{Proof of Proposition~\ref{prop:implementing-FD-measurements} (correctness
of Algorithm~\ref{alg:FD-thermal-alg} for simulating Fermi--Dirac
thermal measurements)}

\label{sec:implementing-FD-measurements}

Let $A$ be a $d\times d$ Hermitian matrix, and let $T_{1},T_{2}>0$.
The state vector $|\psi_{T}\rangle$ defined in~\eqref{eq:control-state-alg-FD}
is indeed a state (normalized) for all $T>0$ because
\begin{align}
\langle\psi_{T}|\psi_{T}\rangle & =\int_{-\infty}^{\infty}dp\,\frac{e^{p/T}}{T\left(e^{p/T}+1\right)^{2}}\\
 & =\int_{-\infty}^{\infty}dp\,\frac{d}{dp}\left(\frac{1}{e^{-p/T}+1}\right)\\
 & =\left.\frac{1}{e^{-p/T}+1}\right|_{-\infty}^{\infty}\\
 & =1-0\\
 & =1.
\end{align}

To prove Proposition~\ref{prop:implementing-FD-measurements}, let
us begin by supposing that the state of the data register is pure
and given by $|\varphi\rangle\!\langle\varphi|$, where $|\varphi\rangle$
is a state vector. Thus, the probability that Algorithm~\ref{alg:FD-thermal-alg}
outputs $0$ is equal to
\begin{equation}
\int_{0}^{\infty}dp\,\left\Vert \left(\langle p|\otimes I\right)e^{-i\hat{x}\otimes A/T_{2}}\left(|\psi_{T_{1}}\rangle\otimes|\varphi\rangle\right)\right\Vert ^{2}.
\end{equation}
Let a spectral decomposition of $A$ be given as
\begin{equation}
A=\sum_{i}a_{i}|i\rangle\!\langle i|.
\end{equation}
This implies that
\begin{equation}
e^{-i\hat{x}\otimes A/T_{2}}=\sum_{i}\int dx\,e^{-ixa_{i}/T_{2}}|x\rangle\!\langle x|\otimes|i\rangle\!\langle i|.
\end{equation}
Then
\begin{align}
 & \left(\langle p|\otimes I\right)e^{-i\hat{x}\otimes A/T_2}\left(|\psi_{T_{1}}\rangle\otimes I\right)\nonumber \\
 & =\left(\langle p|\otimes I\right)\left(\sum_{i}\int dx\,e^{-ixa_{i}/T_{2}}|x\rangle\!\langle x|\otimes|i\rangle\!\langle i|\right)\times\nonumber \\
 & \qquad\left(\int_{-\infty}^{\infty}dp'\,\sqrt{\frac{e^{p'/T_{1}}}{T_{1}\left(e^{p'/T_{1}}+1\right)^{2}}}|p'\rangle\otimes I\right)\\
 & =\sum_{i}\int dx\int_{-\infty}^{\infty}dp'\,e^{-ixa_{i}/T_{2}}\langle p|x\rangle\langle x|p'\rangle\times\nonumber \\
 & \qquad\sqrt{\frac{e^{p'/T_{1}}}{T_{1}\left(e^{p'/T_{1}}+1\right)^{2}}}|i\rangle\!\langle i|\\
 & =\sum_{i}\int_{-\infty}^{\infty}dp'\,\delta(p'-p-a_{i}/T_{2})\sqrt{\frac{e^{p'/T_{1}}}{T_{1}\left(e^{p'/T_{1}}+1\right)^{2}}}|i\rangle\!\langle i|\\
 & =\sum_{i}\sqrt{\frac{e^{\left(p+a_{i}/T_{2}\right)/T_{1}}}{T_{1}\left(e^{\left(p+a_{i}/T_{2}\right)/T_{1}}+1\right)^{2}}}|i\rangle\!\langle i|,
\end{align}
where the penultimate equality follows because
\begin{align}
 & \int_{-\infty}^{\infty}dx\,e^{-ixa_{i}/T_{2}}\langle p|x\rangle\langle x|p'\rangle\nonumber \\
 & =\frac{1}{2\pi}\int_{-\infty}^{\infty}dx\,e^{-ixa_{i}/T_{2}}e^{-ipx}e^{ip'x}\\
 & =\frac{1}{2\pi}\int dx\,e^{i\left(p'-p-a_{i}/T_{2}\right)x}\\
 & =\delta(p'-p-a_{i}/T_{2}).
\end{align}
So then
\begin{align}
 & \left\Vert \left(\langle p|\otimes I\right)e^{-i\hat{x}\otimes A/T_{2}}\left(|\psi_{T_{1}}\rangle\otimes|\varphi\rangle\right)\right\Vert ^{2}\nonumber \\
 & =\left\Vert \left(\sum_{i}\sqrt{\frac{e^{\left(p+a_{i}/T_{2}\right)/T_{1}}}{T_{1}\left(e^{\left(p+a_{i}/T_{2}\right)/T_{1}}+1\right)^{2}}}|i\rangle\!\langle i|\right)|\varphi\rangle\right\Vert ^{2}\\
 & =\langle\varphi|\left(\sum_{i}\frac{e^{\left(p+a_{i}/T_{2}\right)/T_{1}}}{T_{1}\left(e^{\left(p+a_{i}/T_{2}\right)/T_{1}}+1\right)^{2}}|i\rangle\!\langle i|\right)|\varphi\rangle,
\end{align}
which implies that
\begin{align}
 & \int_{0}^{\infty}dp\,\left\Vert \left(\langle p|\otimes I\right)e^{-i\hat{x}\otimes A/T_{2}}\left(|\psi_{T_{1}}\rangle\otimes|\varphi\rangle\right)\right\Vert ^{2}\nonumber \\
 & =\int_{0}^{\infty}dp\,\langle\varphi|\left(\sum_{i}\frac{e^{\left(p+a_{i}/T_{2}\right)/T_{1}}}{T_{1}\left(e^{\left(p+a_{i}/T_{2}\right)/T_{1}}+1\right)^{2}}|i\rangle\!\langle i|\right)|\varphi\rangle\\
 & =\langle\varphi|\left(\sum_{i}\left[\int_{0}^{\infty}dp\,\frac{e^{\left(p+a_{i}/T_{2}\right)/T_{1}}}{T_{1}\left(e^{\left(p+a_{i}/T_{2}\right)/T_{1}}+1\right)^{2}}\right]|i\rangle\!\langle i|\right)|\varphi\rangle\\
 & =\langle\varphi|\left(\sum_{i}\frac{1}{e^{a_{i}/\left(T_{1}T_{2}\right)}+1}|i\rangle\!\langle i|\right)|\varphi\rangle\\
 & =\langle\varphi|\left(e^{A/\left(T_{1}T_{2}\right)}+I\right)^{-1}|\varphi\rangle\\
 & =\Tr\!\left[M_{T_{1}T_{2}}(A)|\varphi\rangle\!\langle\varphi|\right].\label{eq:proof-FD-alg-pure-states}
\end{align}
The penultimate equality follows because, for all $a\in\mathbb{R}$
and $T>0$,
\begin{align}
 & \int_{0}^{\infty}dp\,\frac{e^{\left(p+a\right)/T}}{T\left(e^{\left(p+a\right)/T}+1\right)^{2}}\nonumber \\
 & =\int_{a}^{\infty}du\,\frac{e^{u/T}}{T\left(e^{u/T}+1\right)^{2}}\\
 & =\int_{a}^{\infty}du\,\frac{d}{du}\left(\frac{1}{e^{-u/T}+1}\right)\\
 & =\left.\frac{1}{e^{-u/T}+1}\right|_{a}^{\infty}\\
 & =1-\frac{1}{e^{-a/T}+1}\\
 & =\frac{e^{-a/T}}{e^{-a/T}+1}\\
 & =\frac{1}{e^{a/T}+1}.
\end{align}
The result generalizes to an arbitrary state $\rho$ because every
such state can be written as a convex combination of pure states as
\begin{equation}
\rho=\sum_{x}p(x)|\varphi_{x}\rangle\!\langle\varphi_{x}|,
\end{equation}
so that, defining $\left[p\right]\equiv|p\rangle\!\langle p|$, $\psi_{T_{1}}\equiv|\psi_{T_{1}}\rangle\!\langle\psi_{T_{1}}|$,
and $\varphi_{x}\equiv|\varphi_{x}\rangle\!\langle\varphi_{x}|$,
\begin{align}
 & \int_{0}^{\infty}dp\,\Tr\!\left[
 \begin{array}{c}
 \left(|p\rangle\!\langle p|\otimes I\right)e^{-i\hat{x}\otimes A/T_{2}}\times\\
\left(|\psi_{T_{1}}\rangle\!\langle\psi_{T_{1}}|\otimes\rho\right)e^{i\hat{x}\otimes A/T_{2}}
 \end{array}\right]\nonumber \\
 & =\sum_{x}p(x)\int_{0}^{\infty}dp\,\Tr\!\left[\begin{array}{c}
\left(\left[p\right]\otimes I\right)e^{-i\hat{x}\otimes A/T_{2}}\times\\
\left(\psi_{T_{1}}\otimes\varphi_{x}\right)e^{i\hat{x}\otimes A/T_{2}}
\end{array}\right]\\
 & =\sum_{x}p(x)\int_{0}^{\infty}dp \left\Vert \left(\langle p|\otimes I\right)e^{-i\hat{x}\otimes A/T_{2}}\left(|\psi_{T_{1}}\rangle|\varphi_{x}\rangle\right)\right\Vert ^{2}\\
 & =\sum_{x}p(x)\Tr\!\left[M_{T_{1}T_{2}}(A)|\varphi_{x}\rangle\!\langle\varphi_{x}|\right]\\
 & =\Tr\!\left[M_{T_{1}T_{2}}(A)\rho\right],
\end{align}
where the penultimate equality follows from~\eqref{eq:proof-FD-alg-pure-states}.
\end{document}